\newcommand*\patchAmsMathEnvironmentForLineno[1]{%
               \expandafter\let\csname old#1\expandafter\endcsname\csname #1\endcsname
               \expandafter\let\csname oldend#1\expandafter\endcsname\csname end#1\endcsname
               \renewenvironment{#1}%
               {\linenomath\csname old#1\endcsname}%
               {\csname oldend#1\endcsname\endlinenomath}}%
\newcommand*\patchBothAmsMathEnvironmentsForLineno[1]{%
               \patchAmsMathEnvironmentForLineno{#1}%
               \patchAmsMathEnvironmentForLineno{#1*}}%
\theoremstyle{plain}
\newtheorem{proposition}{Proposition}
\newtheorem{lemma}{Lemma}
\newtheorem{corollary}{Corollary}
\newtheorem{theorem}{Theorem}
\newtheorem*{conjecture*}{Conjecture}
\newtheorem*{lemma*}{Lemma}
\theoremstyle{definition}
\newtheorem{definition}{Definition}
\newtheorem{remark}{Remark}
\newtheorem{example}{Example}
\newtheorem*{proposition*}{Proposition}
\newtheorem*{claim*}{Claim}
\newcommand{\supp}{\text{supp}}
\providecommand{\keywords}[1]{\textbf{Keywords:} #1}
\providecommand{\JEL}[1]{\textbf{JEL Classification:} #1}
\newcounter{lastnote}
\date{September 10, 2025}
\begin{document}
\font\titfont=cmr19
\title{{\titfont{Persuasion Gains and Losses from Peer Communication}}}
\author{
	Toygar T. Kerman\thanks{Institute of Economics, Corvinus University of Budapest, F\H{o}v\'{a}m t\'{e}r 8, 1093 Budapest, Hungary. email: toygar.kerman@uni-corvinus.hu. ORCID: 0000-0003-3038-3666.} \\
	\and
	Anastas P. Tenev\thanks{Institute of Economics, Corvinus University of Budapest, F\H{o}v\'{a}m t\'{e}r 8, 1093 Budapest, Hungary. email: ap.tenev@uni-corvinus.hu. ORCID: 0000-0002-2950-569X.}
    \and
    Konstantin Zabarnyi\thanks{Center for Algorithms, Data, and Market Design, Yale University, New Haven, Connecticut 06520, United States. email: konstantin.zabarnyi@yale.edu. ORCID: 0009-0006-1994-9517.\newline\indent The work was supported by the Hungarian National Research, Development and Innovation Office, Project No. K-143276, and the Center for Algorithms, Data, and Market Design at Yale~University.}
}

\maketitle

\vspace{-1.5cm}
\begin{abstract}
\noindent We study a Bayesian persuasion setting in which a sender wants to persuade a critical mass of receivers by revealing partial information about the state to them. The homogeneous binary-action receivers are located on a communication network, and each observes the private messages sent to them and their immediate neighbors.  We examine how the sender's expected utility varies with increased communication among receivers. We show that for general families of networks, extending the network can strictly benefit the sender. Thus, the sender's gain from persuasion is not monotonic in network density. Moreover, many network extensions can achieve the upper bound on the sender's expected utility among all networks, which corresponds to the payoff in an empty network. This is the case in networks reflecting a clear informational hierarchy (e.g., in global corporations), as well as in decentralized networks in which information originates from multiple sources (e.g., influencers in social media). Finally, we show that a slight modification to the structure of some of these networks precludes the possibility of such beneficial extensions. Overall, our results caution against presuming that more communication necessarily leads to better collective outcomes.
\end{abstract}

\noindent\keywords{Bayesian Persuasion; Networks; Critical Mass; Voting}

\noindent\JEL{C72, D72, D82, D85}

\section{Introduction}
The outcome of a collective choice often hinges on how information is distributed and by whom. A campaign strategist may aim to sway just enough voters to win a referendum; a CEO may try to convince key stakeholders to back a major strategic shift; or a social media platform may seek to boost user engagement to amplify an advertising campaign. Such scenarios can be studied through models in which a \emph{sender} strategically reveals information to influence the actions of multiple \emph{receivers} and affect the collective outcome (see, e.g., \citealp*{Wang13,alonso2016persuading,Taneva19,Chan19, kerman2024persuading}).

{\bf Information spillovers.} Attempts to shape collective decision making may begin by targeting individuals through carefully crafted messages. However, in most cases, people do not make decisions in isolation. Rather, they are engaged in communication in which private messages are quickly echoed or amplified through peer interaction. A message sent to one person can become part of a larger conversation when shared with one's friends/followers/acquaintances. When someone shares a campaign video or likes a product ad, their entire network is exposed to their engagement. These digital traces act as indirect endorsements that others in the network incorporate into their own decision making. In other words, a privately obtained message can often become revealed to others. The phenomenon of information spillovers described above presents a major departure from the usual treatment of Bayesian persuasion~\citep{kamenica2011bayesian} and has been largely disregarded by the literature, with only a handful of notable exceptions~\cite*{Babichenko21, liporace2021persuasion, Galperti23}. 

{\bf Our focus.} We study a persuasion environment with \textit{limited information spillovers} in which receivers are located in a fixed network. Every receiver can observe the messages sent to his immediate neighbors, and hence has access to more information than just his own message. Information is transmitted in the network only locally, which differs from \cite{Galperti23}, who assume that information spreads along all directed paths from each receiver and treat the set of feasible information structures as exogenously constrained; and from \cite{liporace2021persuasion}, who considers heterogeneous priors and incomplete network knowledge, restricting the sender to degree-based targeting of groups. In our setting, the sender fully knows the network and can target specific receivers based on their exact position. The combination of network knowledge, local spillovers, and strategic information design allows us to study realistic persuasion problems (e.g., targeted political communication or viral marketing), where the spread of information is neither global nor random. This raises the following central question: {\bf Does increased communication among receivers necessarily reduce the sender's persuasion power, or can the sender exploit it to her advantage in some networks?}

Intuitively, one might expect that giving receivers more access to others' information should make them harder to manipulate. After all, greater transparency and deliberation are often considered vital for good decision making (e.g.~\cite{angeletos2004transparency,gavazza2009transparency,de2014transparency}). However, this intuition can fail in subtle ways.\footnote{For example, \cite{levy2007decision} shows that voting in committees may result in better outcomes when the individual votes are not revealed to the public as opposed to a transparent procedure.}
The very structure of the network can create unexpected advantages for the sender. We find that under the right conditions, more dialogue among receivers can backfire on them and \textit{strictly improve} the sender's gain from persuasion. Our focus on strict improvements in the sender's value also sets the objectives and results of the current study apart from~\cite{Babichenko21} and \cite*{kerman2025bayesian}. \cite{Babichenko21} characterize a \textit{weak} order of the sender's preferences over network structures. While the nature our results is different, we make use of their sufficiency condition for achieving the upper bound on the sender's value when constructing the cases in which making the network denser by adding links benefits the sender. Unlike us,~\cite{kerman2025bayesian} focus on finding network structures that present no obstacle to the sender and never limit her persuasion abilities compared to the empty network.

{\bf Our contribution in a nutshell.} We characterize several classes of networks in which increasing communication between receivers benefits the sender and harms the receivers. In networks that naturally capture hierarchical communication between the receivers (e.g., when there are opinion leaders, or within an organizational hierarchy), the sender is not only strictly better off by adding specific links to the network, but she can even achieve the theoretical upper bound on the value (Theorem~\ref{thm:genstar}, Corollary~\ref{cor:halo}, and Proposition~\ref{pro:const}), which corresponds to the one achievable in an empty network. Moreover, the extension benefiting the sender \textit{hurts} the receivers' ability to correctly guess the outcome that is more favorable for them (Remark~\ref{rem:receivers}).

We further prove similar results on networks capturing information spillovers in \textit{decentralized} environments. Such networks can describe communication within a population divided into groups according to criteria such as political views, social media activity, and geographical location; such divisions make it harder for information to cross group boundaries. Once again, we show that there are extensions of the network beneficial to the sender (Theorem~\ref{thm:starlike} and Proposition~\ref{pro:cliques}). To complete the analysis, we discuss two cases in which the sender \textit{cannot} benefit from adding links to the network (Subsection~\ref{sub:non}). Our results provide a blueprint for approaching the problem of persuading receivers in an environment with limited information spillovers. They also act as a cautionary tale that more communication does not necessarily act as a sufficient counterbalance to a sender with an agenda. Rather, while there are cases in which communicating receivers hurt the sender's persuasion effort, we show that \emph{more interconnected sets of receivers can be more susceptible to manipulation}.

{\bf Applications.} Two prominent applications of our model that demonstrate the importance of our insights are \emph{marketing} and \emph{voting}. First, in the context of marketing, our findings imply that reaching a critical mass for a subpar product and being stuck in a ``bad'' equilibrium is likely even with consumers who ostensibly receive a lot of information.\footnote{This parallels the insight from the information cascades literature \citep*{bikhchandani1992theory,anderson1997information} that even fully rational receivers can converge to wrong outcomes despite possessing ample information. However, in our model, this situation arises from strategic information design with network spillovers rather than sequential observation of actions.} In the presence of network effects (demand-side economies of scale), it could be an uphill battle to escape this bad equilibrium, as once it is reached, the deviation of a small number of receivers is insufficient to shift the status quo.

Alternatively, our model can be interpreted as a sincere voting model with information spillovers in which the critical mass corresponds to the voting quota. In this context, our results imply that the ``wrong'' outcome from the voters' perspective is implemented with a higher probability under higher levels of political engagement of the voters. That is, more information becomes \emph{harmful} for collective decision making when the specifics of information sharing are ignored by policy makers.

\subsection{An Illustrative Example}
Let us demonstrate the possible detrimental and beneficial effects of increased communication between the receivers on the sender, starting with the former. Consider a company that promotes a new software and wishes to achieve a \textit{critical mass} of users, beyond which the software adoption will spread over time due to increased user interactions.\footnote{The adoption of products through such network effects has been extensively studied before -- see, e.g., \citet*{church1992network,katz1994systems, katona2011network}.} The quality of the product is either good $(G)$ or bad $(B)$. Concretely, suppose that there are $9$ potential clients and that $5$ constitute a critical mass. The clients initially believe that the quality is good with probability $1/3$, and they adopt the product if they believe it to be of good quality with probability at least $1/2$. The company prepares reports on the product quality, which are privately distributed among the clients.\footnote{The reports can be based on tests by the company, who decides on the specific testing policy.} The communication protocol of the company, which we call an \emph{experiment}, can be formalized by distributions $\pi(\cdot\vert G)$ and $\pi(\cdot\vert B)$ on a set of signals conditional on the quality. Let $\bar g=(g,\ldots,g)$ denote the \emph{signal} in which all receivers observe \emph{message} $g$, and define $\bar b$ analogously. Messages $g$ and $b$ can be interpreted as recommendations to adopt or reject an offer, respectively. While $\pi$ is known to the clients, under private communication they only observe their own message.

First, assume that the clients do not communicate with each other (which corresponds to the \emph{empty network}). Let $T$ be the set of signals for which 5 receivers observe $g$ and 4 receivers observe $b$; note that $\vert T\vert=\binom{9}{5}$. 
An optimal experiment for the sender, $\pi$, is given in the table in Figure~\ref{fig:1} (left), where every signal in $T$ is sent with \textit{equal} probability, namely $0.9/\vert T\vert$.\footnote{See, e.g., \cite{kerman2024persuading} for proof that the sender's expected utility in this case is~optimal.}
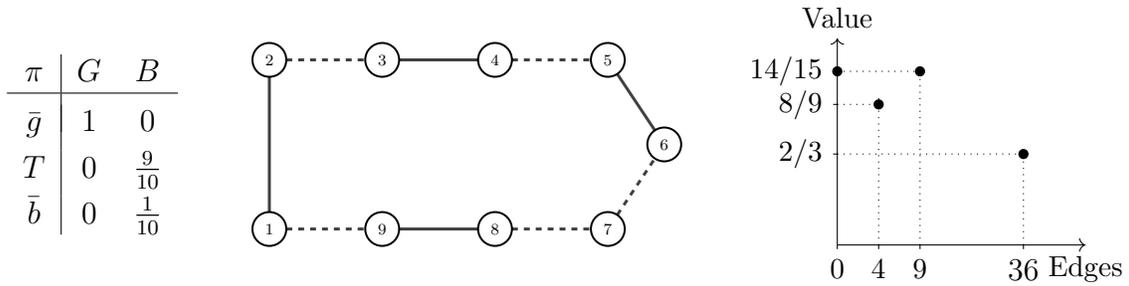
\begin{figure}[t]
\begin{minipage}{0.17\textwidth}
\begin{center}
\begin{tabular}{c|cc}
$ \pi $ & $ G $ & $ B $ \\
\hline & \\[-2ex]
$\bar g$ & 1 & 0 \\[0.1cm]
$T$ & 0 & $\frac{9}{10}$ \\[0.1cm]
$\bar b$ & 0 & $\frac{1}{10}$
\end{tabular}
\end{center}
\end{minipage}
\begin{minipage}{0.46\textwidth}
\begin{center}
\scalebox{.75}{
\begin{tikzpicture}
\Vertex[color=white,x=0,y=0,label=1]{1} 
\Vertex[color=white,x=0,y=3,label=2]{2} 
\Vertex[color=white,x=2,y=3,label=3]{3} 
\Vertex[color=white,x=2,y=0,label=9]{4}
\Vertex[color=white,x=4,y=0,label=8]{5}
\Vertex[color=white,x=6,y=0,label=7]{8}
\Vertex[color=white,x=4,y=3,label=4]{6}
\Vertex[color=white,x=6,y=3,label=5]{7}
\Vertex[color=white,x=7,y=1.5,label=6]{9}
\Edge(1)(2)
\Edge(3)(6)
\Edge(4)(5)
\Edge(7)(9)
\Edge[style=dashed](2)(3)
\Edge[style=dashed](6)(7)
\Edge[style=dashed](5)(8)
\Edge[style=dashed](1)(4)
\Edge[style=dashed](8)(9)
\end{tikzpicture}}
\end{center}
\end{minipage}
\begin{minipage}{0.35\textwidth}
\begin{center}
\begin{tikzpicture}[scale=0.55]
\draw[->] (0,0) -- (6,0) node[below] {\small Edges};
\draw[->] (0,0) -- (0,5) node[above] {\small Value};
\draw (0,4.2) -- (-0.1,4.2) node[left] {\small $14/15$};
\draw[dotted] (0,4.2) -- (2,4.2);
\draw (0,3.4) -- (-0.1,3.4) node[left] {\small $8/9$};
\draw[dotted] (0,3.4) -- (1,3.4);
\draw (0,2.2) -- (-0.1,2.2) node[left] {\small $2/3$};
\draw[dotted] (0,2.2) -- (4.5,2.2);
\draw (0,0) -- (0,-0.1) node[below] {\small 0};
\draw (1,0) -- (1,-0.1) node[below] {\small 4};
\draw[dotted] (1,0) -- (1,3.6);
\draw (2,0) -- (2,-0.1) node[below] {\small 9};
\draw[dotted] (2,0) -- (2,4.2);
\draw (4.5,0) -- (4.5,-0.1) node[below] {36};
\draw[dotted] (4.5,0) -- (4.5,2.2);
\fill (0,4.2) circle (3.5pt); 
\fill (1,3.4) circle (3.5pt); 
\fill (2,4.2) circle (3.5pt); 
\fill (4.5,2.2) circle (3.5pt); 
\end{tikzpicture}
\end{center}
\end{minipage}
\caption{The optimal experiment for fully private signaling (left); a possible network of clients (middle); and the optimal values under the empty, pair, circle, and complete networks, with respective number of edges 0, 4, 9, and 36 (right).} \label{fig:1}
\end{figure}
\noindent After observing $g$, a client believes that the quality is good with probability $(1/3\cdot 1)/(1/3\cdot 1+2/3\cdot 5/9\cdot 9/10)=1/2$.
Hence, for all realizations except $\bar b$, at least five clients adopt the product. 
Thus, the probability of reaching the critical mass -- the \emph{value} of $\pi$ -- is $14/15$. 

Consider now an alternative communication pattern between the clients. Specifically, assume that there are four communicating pairs of clients ($1$-$2$, $3$-$4$, $5$-$6$, and $8$-$9$), as shown by the solid edges in Figure~\ref{fig:1} (middle). These pairs share the information from their reports with each other before making their decisions. As a result, experiment $\pi$ is no longer optimal. Any signal for which receiver $1$ observes $b$ and receiver $2$ observes $g$ will result in \textit{both} receivers not adopting the product; in fact, receiver $2$ will deduce that the true state is $B$ from receiver $1$'s message. Since communicating pairs take the same action, optimal communication leads to reaching the critical mass with lower probability, namely $8/9$. One can check that the optimal experiment will persuade all clients to adopt with probability $1$ when the quality is $G$; and if the quality is $B$, it will persuade with probability $1/2$ exactly $5$ clients to adopt; with probability $1/3$ -- exactly $6$ clients; and with probability $1/6$ -- $0$ clients. Hence, the additional communication imposes further limitations on the company's ability to manipulate the information and makes the company worse off.

It might be natural to expect at this point that if the clients communicate even more, then the value for the sender will decrease further. A special case is when every client communicates with every other client -- i.e., they form a \emph{complete network}.
In this case, the communication effectively becomes \emph{public}.
Indeed, the value of public signaling (i.e., within each signal, all the receivers observe the same message) is the lower bound on the value of all networks (Proposition~\ref{pro:emptynetwork}).
In our example with receivers having a common prior of $1/3$, the optimal public value is $2/3$.

But what if the communication among clients is not as extreme as in the complete network? Consider extending the network of pairs given in Figure~\ref{fig:1} (middle) by the solid edges to a \emph{circle network} given by the dashed and solid edges.\footnote{The information of receiver $1$ spills over exactly to receivers $2$ and $9$, the information of receiver $2$ spills over exactly to receivers $1$ and $3$, and so on.} Surprisingly, not only does the company achieve a higher value than on the network with only the solid edges, but it can reach the empty network value $14/15$. This can be done via a simple modification to $\pi$. Specifically, take $T$ to be the set of signals for which some two consecutive receivers observe $b$ and all others observe $g$ (e.g., $(g,g,b,b,g,g,g,g,g)\in T$; note that $\vert T\vert=9$). 
As before, the interpretation of the line corresponding to $T$ in the table in Figure~\ref{fig:1} (left) is drawing a signal from $T$ uniformly at random. The definition of $T$ ensures that exactly $5$ receivers adopt the product after each realizations in $T$. It allows the company to fully exploit the power of information design on the circle network by achieving the theoretical upper bound corresponding to private signaling. Hence, the company benefits from more communication among the clients -- the value is higher on a denser network.\footnote{\emph{Density} is the ratio of the number of actual links and the number of potential links; any network obtained by adding a link to a given network is \emph{denser}.}

The example demonstrates how the change in sender's value can be non-monotonic as we move from the empty network to the other extreme case, a complete network, by adding links. On the empty network, the value is $14/15$, but when $4$ links are added to pair some of the receivers, it becomes $8/9$. When $5$ more links are formed to extend the network to a circle, which makes the network about twice as dense, the value increases again to $14/15$ -- see Figure~\ref{fig:1} (right). Finally, extending the network to a complete one drops the optimal value to $2/3$. Therefore, additional communication between receivers can be harmful for the sender, but this does not hold universally. In fact, it can also greatly help the sender. The cases in which this happens are of particular interest to us.

\subsection{Related Literature}
Our setup is closely related to the studies of \cite{Arieli19} and \cite{kerman2024persuading}. Both models study multiple receivers in a persuasion setting, but while \cite{Arieli19} characterize optimal communication for different utility functions of the sender, \cite{kerman2024persuading} focus on collective decision making following private sender-receivers communication. Unlike in our paper, in these models information is not exchanged between receivers.

\cite*{Babichenko21} study a related model, and their findings complement ours. For a given set of receivers, they provide a characterization of a partial \emph{weak} ordering over all structures of communication channels between the sender and some subsets of receivers according to the sender's \emph{universal preferences} -- i.e., robustly to the choice of utility functions and the prior distribution. The characterization relies on the notion of \emph{information domination}, where a receiver \emph{information-dominates} another one if he observes all the information channels that the other does (and possibly more). We use their result in some of our constructions of denser networks with improved sender's utility to prove that the sender can achieve a certain value in the extended network. Unlike~\cite{Babichenko21}, who focus on ordering the networks according to the sender's universal preferences and on computational aspects of their general model, we focus on the collective decision making of receivers located in a network, and we study the ability of the sender to improve her utility by changing the network structure. Importantly, we explore when an intervention in the network structure can \emph{strictly} benefit the sender, and thus their results do not readily extend to ours.

Within the same setup as ours, \cite*{kerman2025bayesian} investigate the network structures that are irrelevant to the sender's value that can be achieved. They find lower bounds on the sender's value, provide experiments achieving them, and give sufficient conditions for the optimal value to match the lower bound. In contrast, our paper investigates the network structures allowing the sender to \emph{strictly profit} from the information spillovers.

\cite{Galperti23} analyze directed networks with complete diffusion of information over all directed paths; they further assume that receivers can employ mixed strategies. This allows them to recover the revelation principle. While their model covers one extreme of information sharing in which information is a pure public good, in our case it is a \emph{local} public good. In our model, it is natural to assume pure strategies for the receivers (buyers/voters) when they face a binary decision. However, even though we study binary state and action spaces, the message space for optimal experiments may not be binary in our setting, which complicates the problem.

\cite{liporace2021persuasion} analyzes spillover effects similar to ours. Unlike us, that paper assumes that the sender only knows the distribution of the degrees of the receivers in the network, but not the full network structure. The main focus of that study is to persuade one group without dissuading another one (with different beliefs) under incomplete knowledge of the network. In contrast, we analyze how network changes affect persuasion outcomes when the sender knows the entire structure. In terms of insights, similarly to us,~\cite{liporace2021persuasion} shows that the sender can benefit from a denser network. In another model of information sharing in networks, \cite{Egorov19} consider a sender who communicates with receivers in a fixed network; a receiver either has to rely on his neighbors in the network to learn the provided information to them, or to obtain it directly from the sender for a cost. In contrast, the receivers in our model have costless access to information. In a related setup, \cite*{Candogan2020} consider threshold experiments and focus on the computational aspects of finding the sender's optimal experiment.

Our paper contributes to research on private communication and voting games. Some studies compared public and private communication under different settings \citep*{Mathevet2020,titova2022persuasion,sun2023public}, while others investigated voting games that focus on various voting rules \citep*{Bardhi18,Chan19}. We show that networks belonging to several natural classes can be slightly modified to allow the sender to achieve the upper bound on the network's value, corresponding to private persuasion.

Finally, \textit{limited} information spillovers have been studied and discovered in various contexts (the following list is not exhaustive). With a field experiment on compliance, \cite*{Drago20} show that the intensity of message sharing declines with geographic distance, and spillovers are concentrated among close neighbors. This is consistent with the findings of \cite{Marmaros06} about peer interaction of college students, where they show that people who are closer to each other communicate more. This phenomenon can occur in different contexts for different reasons; for example, \cite{gatewood1984cooperation} shows that Alaskan salmon seiners cooperate with each other despite the competition, but also only share limited information with each other because of it. \cite{ali2016ostracism} show that to ensure cooperation receivers in communities may not provide one neighbor's information to another.
Different instances of limited information spillovers happen in cases in which receivers only observe their direct neighbors' actions \citep*{corten2010co,karamched2020bayesian, mengel2024influence} or beliefs \citep{molavi2011aggregate,anunrojwong2018naive}.

{\bf Paper structure.} Section \ref{sec:not} formalizes the model setup. Section \ref{sec:pre} establishes some basic results which show the limits of the model. Section \ref{sec:applications} outlines the main results for centralized (Theorem \ref{thm:genstar}) and decentralized networks (Theorem \ref{thm:starlike}) and discusses cases in which the sender is worse off when receivers communicate more (Subsection~\ref{sub:non}). Section \ref{sec:con} concludes.

\section{Setup}\label{sec:not}
\textbf{Communication.}
Let $N=\left\{1,\ldots, n\right\}$ be the set of \emph{receivers} and $\Omega=\left\{X,Y\right\}$ be the set of \emph{states} of the world.
For any set $T$, denote by $\Delta(T)$ the set of probability distributions over $T$ with finite support.
The receivers share a common \emph{prior} belief $\lambda^0\in\Delta^\circ(\Omega)$ about the true state of the world, where $\Delta^\circ(\Omega)$ denotes the set of strictly positive probability distributions on $\Omega$. Let $S_i$ be a countable set of \emph{messages} the sender can send to receiver $i$, and let $S=\prod_{i\in N}S_i$, where the elements of $S$ are called \emph{signals}. An \emph{experiment} is a function $\pi:\Omega\rightarrow\Delta(S)$ mapping each state to a joint probability distribution over signal realizations. Let $\Pi$ be the set of all experiments.

For each signal $s\in S$, let $s_i\in S_i$ denote the message for receiver $i$. 
For each $\pi\in\Pi$, define $S^{\pi}=\supp(\pi)=\left\{s\in S|\exists\:\omega\in\Omega:\pi(s|\omega)>0\right\}$; i.e., the signals in $S$ which are sent with positive probability by $\pi$. 
Similarly, for each $i\in N$, define $S^{\pi}_i=\left\{s_i\in S_i|\exists\:\omega\in\Omega:\sum_{t\in S:t_i=s_i}\pi(t|\omega)>0\right\}$, the set of messages receiver $i$ observes with positive probability under $\pi$.

\medskip
\noindent\textbf{Preferences.}
For each $i\in N$, let $B_i=\left\{x,y\right\}$ be the set of \emph{actions} of receiver $i$. 
Let $B=\prod_{i\in N}B_i$ denote the space of action profiles. For each $i\in N$, let $u_i:B_i\times\Omega\rightarrow\left\{0,1\right\}$ be the \emph{utility function of receiver $i$}. We assume $u_i(x,X)=u_i(y,Y)=1$ and $u_i(x,Y)=u_i(y,X)=0$.
That is, the receivers want their actions to match the true state of the world. Let $Z=\left\{x,y\right\}$ be the set of \emph{outcomes}, where outcome $x$ corresponds to achieving a required \emph{critical mass} of actions $x$ and $y$ to the opposite case. Let $z^k:B\rightarrow Z$ be a map, where $z^k(a)$ is the outcome when the action profile is $a$ and the critical mass is $k$. 
Formally, 
\begin{equation*}
z^k(a)=
\begin{cases} 
     x & \text{if $|\left\{i\in N:a_i=x\right\}|\geq k$}, \\
      y & \text{otherwise}.
   \end{cases}
\end{equation*}
The \emph{sender's utility function} $v: Z \rightarrow \{0,1\}$ gets $1$ if the outcome is $x$ and $0$~otherwise.

Throughout the paper, we assume that $k\geq \lfloor \frac{n+1}{2}\rfloor$. That is, the critical mass is \emph{at least} simple majority. Moreover, we assume that $\lambda^0(X)<\frac{k}{n+k}$. 
This assumption rules out cases in which the prior is already so favorable to the sender that persuasion becomes trivial. It further prevents situations in which the sender can achieve the desired outcome with probability $1$ simply by targeting a subset of receivers.\footnote{For ease of exposition, the reader can assume that $\lambda^0(X)<1/3$; it will always fulfill the condition of preventing the sender from achieving the desired outcome with probability $1$ as long as $k$ is at least simple majority, as follows from Proposition~\ref{pro:emptynetwork} and the paragraph preceding it.}

\noindent\textbf{Information spillovers.}
An \emph{undirected network} is a map $g: N \times N \rightarrow \{0,1\}$ with $g_{ij}=g(i,j)$ and $g_{ij}=g_{ji}$, where $g_{ij}=1$ means that $i$ and $j$ are connected. 
Given a set of receivers $N$, let $G(N)$ be the set of all networks. 
We assume that receivers are in a fixed network, and each receiver in the network observes his own and his direct neighbors' message realizations. Thus, in a non-empty network, a receiver gathers more information about the true state than he would from the \emph{same} experiment under the \emph{empty} network. 
For any network $g\in G(N)$, we denote the \emph{empty} network with the same number of receivers by $g_0$.

A \emph{path} in a network $g$ between $i$ and $j$ is a sequence of players $1,\ldots,L$ s.t. $g(\ell,\ell+1)=1$ for each $\ell\in\{1,\ldots,L-1\}$ with $i=1$ and $j=L$. 
A network $g$ is \emph{connected} if for each $i,j\in N$ there exists a path in $g$ between $i$ and $j$. 
A \emph{component} of $g\in G(N)$ is a nonempty subnetwork $g'\in G(N')$ s.t. $\emptyset\neq N'\subseteq N$ and: $(i)$ $g'$ is connected; $(ii)$ if $i\in N'$ and $g_{ij}=1$, then $j\in N'$ and $g'_{ij}=1$. 

Let $N_i(g)=\left\{j\in N|g_{ij}=1\right\}\cup\{i\}$ be the \emph{neighborhood} of receiver $i$ in $g$, and let $\delta^g_i=|N_i(g)|-1$ be the \emph{degree} of $i$ in $g$. 
Let $s_i(g)=(s_j)_{j\in N_i(g)}$ be the \emph{information set} of receiver $i$ in $s$, which is the vector of messages receiver $i$ observes upon realization~$s$.

For any $g\in G(N)$, we call $g^+\in G(N)$ an \emph{extension of} $g$ (denoted $g\subsetneq g^+$) if for all $i\in N$ it holds that $N_i(g)\subseteq N_i(g^+)$ and there exists $j\in N$ s.t. $ N_j(g)\subsetneq N_j(g^+)$. 
In words, $g^+$ is a network formed by adding one or more links to $g$. Note that this does not imply that $g$ is a component of $g^+$ as they have the same number of receivers.

\begin{remark}
Importantly, the limited information spillovers that we assume are w.l.o.g. in the following sense: if the information spillovers go not just one step, but two or more on every path originating at a certain node, this can be represented by a communication network in our model that has more links. Overall, our way of representation can capture any type of (limited) undirected information spillover.
\end{remark}

Let $A^{\pi}_i(g,s)=\left\{t\in S^{\pi}\vert t_i(g)=s_i(g)\right\}$ be the \emph{association set} of receiver $i$ given $s$; i.e., the set of signals $i$ considers possible upon realization $s$. 
For any $g\in G(N)$, $\pi\in\Pi$, and $s\in S^{\pi}$, the \emph{posterior belief vector} $\lambda^{s,g}\in\Delta(\Omega)^n$ is defined by
\begin{equation*}
\lambda^{s,g}_i(\omega)=\frac{\sum_{t\in A^{\pi}_i(g,s)}\pi(t|\omega)\lambda^0(\omega)}{\sum_{\omega'\in\Omega}\sum_{t\in A^{\pi}_i(g,s)}\pi(t|\omega')\lambda^0(\omega')},\quad i\in N, \omega\in\Omega\label{equ:BU}.
\end{equation*}

\noindent That is, $\lambda^{s,g}_i(\omega)$ is receiver $i$'s Bayesian-updated belief that the state is $\omega$ upon observing $s_i(g)$. For any $g\in G(N)$, $\pi\in\Pi$, and $i\in N$, let $S^{\pi}_i(g)=\prod_{j\in N_i(g)}S^{\pi}_j$ be the space of message vectors that $i$ can observe. Due to their utility functions, the receivers choose the action corresponding to the more likely state given their posterior. Formally, the \emph{strategy} of receiver $i$ is given by $\alpha^{\pi,g}_i:S^{\pi}_i(g)\rightarrow B_i$ s.t. for any $s\in S^{\pi}$:\footnote{As is standard in the Bayesian persuasion literature, we break ties in the sender's favor.}

\begin{equation*}
\alpha^{\pi,g}_i\left(s_i(g)\right)=
\begin{cases} 
     x & \text{if $\lambda^{s,g}_i(X)\geq\frac{1}{2}$}, \\
      y & \text{otherwise}\label{equ:SB}.
   \end{cases}
\end{equation*}

\noindent Define the set of signals that achieve the critical mass on $g$ under $\pi$ as $Z^g_x(\pi)=\left\{s\in S^{\pi}\vert z^k\left(\alpha^{\pi,g}(s)\right)=x\right\}$. Let $a\in B$ be an action profile and $z=z^k(a)$ be an outcome. The \emph{value} of an experiment $\pi\in\Pi$ for a critical mass $k$ is defined as the sender's expected utility under $\pi$ on the network $g$. 
As we fix $\lambda^0$ and $\alpha^{\pi,g}$ throughout the paper, we write $V^{\pi}_k(g)=V^{\pi}_k(\lambda^0,g,\alpha^{\pi,g})$, where:
\begin{align*}
V^{\pi}_k(g)=\mathbb{E}_{\lambda^0}\left[\mathbb E_{\pi}\left[v(z^k\left(\alpha^{\pi,g}\left(s\right)\right)\right]\right]&=\lambda^0(X)\sum_{s\in Z^g_x(\pi)}\pi(s\vert X)+\lambda^0(Y)\sum_{s\in Z^g_x(\pi)}\pi(s\vert Y).
\end{align*} 

\noindent That is, given $n$, $k$, and $g$, the value of an experiment is equal to the probability of reaching the critical mass. 
An experiment $\pi^*\in\Pi$ is \emph{optimal on} $g$ for critical mass $k$ if
$V^{\pi^*}_k(g)= \sup_{\pi\in\Pi}V^{\pi}_k(g)$. We further call $\sup_{\pi\in\Pi}V^{\pi}_k(g)$ \emph{the value of the network} $g$.

In the illustrative example, we observed that adding links to the network could first lower and then raise the sender’s value, showing that $V^{\pi^*}_k(\cdot)$ needs not to move in a single direction as the network becomes denser. In particular, the example exhibited a violation of what we shall call \emph{monotonicity} (in network density) -- i.e., the property that the network value weakly decreases whenever the network is extended. 

\begin{definition}[Monotonicity]
Let $\left(g^{(t)}\right)^m_{t=0}$ be a finite sequence of networks s.t. for each $t\in\{0,\ldots,m-1\}$, $g^{(t)}\subsetneq g^{(t+1)}$.
The sequence $\left(g^{(t)}\right)^m_{t=0}$ is \emph{monotonic} if $V^{\pi^*}_k\left(g^{(t)}\right)\ge V^{\pi^*}_k\left(g^{(t+1)}\right)$, for all $t\in\{0,\ldots,m-1\}$.
\end{definition}

One can construct monotonic sequences of networks, e.g., by removing the circle and/or the network of pairs in the illustrative example.
Nevertheless, some sequences of networks are non-monotonic; we examine these in this paper.

\section{Basic Results}\label{sec:pre}
On the empty network, there is a straightforward optimal experiment à la~\cite{kamenica2011bayesian}.\footnote{An experiment is \emph{straightforward} if for all $i\in N$ it holds that ($i$) $S^\pi_i\subseteq B_i$ and ($ii$) for all $g\in G(N)$ and $s\in S^\pi$ with $s_i=a_i$, $\alpha^{\pi,g}_i(s_i(g))=a_i$.} Moreover, this optimal experiment sends $x$ to all the receivers with probability $1$ if the state is $X$, and to a set of $k$ receivers (selected randomly with equal probability) if the state is $Y$. Note that this experiment is anonymous -- i.e., permuting the receivers does not change the distribution of signals. Introducing limited information spillovers makes searching for optimal experiments much more conceptually challenging. In particular, neither of the characteristics of optimal experiments on the empty network outlined in the previous paragraph holds in a general network. Straightforwardness does not hold, since a receiver computes his posterior after observing multiple signals that might be shared with other receivers. Anonymity trivially does not hold due to the possibly intricate structure of the network.
Finally, the sender might be able to benefit from garbling information in state $X$, and thus sending the same message in $X$ with probability $1$ is not necessarily optimal (see Example 1 in \cite{kerman2025bayesian}).

While the information that a receiver gathers on a non-empty network $g$ can always be replicated on $g_0$, the converse is not necessarily true. This implies that the upper bound of the sender's gain from persuasion is the optimal value on the empty network, which we denote by $V^n_k$. Since our model boils down to the setup of \cite{kerman2024persuading} when the network is empty, we conclude that for $\lambda^0(X)<\frac{k}{n+k}$, $V^n_k = \frac{n+k}{k}\lambda^0\left(X\right)$.\footnote{The result of \cite{kerman2024persuading} can also be deduced from \cite{Arieli19}.}
On the other hand, the optimal \emph{public} experiment (i.e., all receivers observe the same message within every signal) is \emph{independent} of the network structure, and thus guarantees the sender a lower bound, which we denote by $V^p$. In particular, it always yields the same value $V^p=2\lambda^0(X)$ for any $k$, as either all receivers are persuaded or none are.\footnote{Since receivers share a common prior, the situation is equivalent to persuading a single receiver, as in \cite{kamenica2011bayesian}.} 
Note that $V^p<V^n_k$ for any $k<n$.

\begin{proposition}[Bounds]\label{pro:emptynetwork}
The value of an optimal experiment lies within $\left[V^p,V^n_k\right]$. 
\end{proposition}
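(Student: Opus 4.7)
The proposition consists of two inequalities; informal arguments for both appear in the paragraph preceding the statement, so the task is to make them precise.

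For the upper bound $\sup_{\pi\in\Pi} V^\pi_k(g)\le V^n_k$, my plan is to show that every experiment on $g$ can be \emph{simulated} on the empty network $g_0$ without loss of value. Fix $\pi\in\Pi$ and define $\varphi:S^\pi\to\prod_{i\in N}\prod_{j\in N_i(g)}S_j$ by $\varphi(s)=(s_i(g))_{i\in N}$. This map is injective because the $i$-th coordinate of $\varphi(s)$ contains $s_i$. I then construct an experiment $\pi'$ on $g_0$ whose message set for receiver $i$ is the countable product $S'_i=\prod_{j\in N_i(g)}S_j$ and whose distribution places mass $\pi(s\mid\omega)$ on $\varphi(s)$ in state $\omega$. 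Under $\pi'$ on $g_0$, receiver $i$ observes exactly $s_i(g)$, so he forms the same posterior and takes the same action as under $\pi$ on $g$; thus the outcomes of $z^k$ coincide signal by signal, yielding $V^{\pi'}_k(g_0)=V^\pi_k(g)$. Taking the supremum over $\pi$ and invoking the formula $V^n_k=\tfrac{n+k}{k}\lambda^0(X)$ recalled above from \cite{kerman2024persuading} completes this direction.

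For the lower bound $V^p\le\sup_{\pi\in\Pi} V^\pi_k(g)$, I would exhibit a feasible subclass of experiments whose value on every network equals $V^p$. Identifying all $S_i$ with one common countable set, call $\pi$ \emph{public} if $s_i=s_j$ for every $i,j\in N$ and every $s\in S^\pi$. For such $\pi$, the information set $s_i(g)$ consists of copies of a single message and hence carries the same content as $s_i$ regardless of $g$; so every receiver forms the same posterior and takes the same action. Either all of them play $x$ (reaching the critical mass since $k\le n$) or none of them do. The sender's problem therefore reduces to the single-receiver Kamenica--Gentzkow problem with prior $\lambda^0(X)<k/(n+k)\le 1/2$, whose optimal value is $2\lambda^0(X)=V^p$. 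Since public experiments are admissible on every $g$, the supremum on $g$ is at least $V^p$.

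\textbf{Main obstacle.} There is essentially no conceptual obstacle: the formula for $V^n_k$ is imported from \cite{kerman2024persuading}, and the public case reduces to \cite{kamenica2011bayesian}. The only care needed is bookkeeping for the simulation argument — injectivity of $\varphi$ so that $\pi'$ is well-defined as a probability distribution, and the observation that the countability assumption on each $S_i$ is preserved by the finite product used to define $S'_i$.
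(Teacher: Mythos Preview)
Your proposal is correct and follows essentially the same approach as the paper's proof. For the upper bound, both you and the paper construct an experiment on $g_0$ that simulates $\pi$ by sending each receiver (an encoding of) his information set $s_i(g)$ as a single private message; the only cosmetic difference is that the paper uses an abstract bijection $\phi_i:S^\pi_i(g)\to R(i)\subseteq S_i$, whereas you work directly with the product space $S'_i=\prod_{j\in N_i(g)}S_j$. For the lower bound, both arguments observe that public experiments are feasible on any network and reduce to the single-receiver problem of \cite{kamenica2011bayesian}.
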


\noindent All proofs from the main text can be found in Appendix~\ref{ap:proofs}.

Thus, if we start from the empty network and extend it to the complete network by a consecutive addition of links, the optimal value must strictly decrease after some extension. Proposition \ref{pro:emptynetwork} implies that if the receivers could form links without cost, they would form a complete network.\footnote{This is because the ability of the sender to manipulate the receivers into taking action $x$ when the state is $Y$ is the lowest possible for public signaling.} However, surprisingly, we show that forming a \textit{limited} number of links (i.e., not all possible ones) might not be in their best interest.

Let us return to our illustrative example in which the sender benefits from the creation of multiple links, and therefore the change in the optimal value is non-monotonic. 
To systematically explain this result, we make two observations.
First, if two receivers have exactly the same neighborhood, then they can be treated identically.
This feature, which we call \emph{symmetry}, holds for every network $g$.

\begin{lemma}[Symmetry]\label{lem:symmetry}
For every experiment on some network $g$, there is an experiment with the same value s.t. any two receivers with the same neighborhood get the same message with probability $1$.
\end{lemma}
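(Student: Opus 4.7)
The plan is to construct, for any experiment $\pi$ on $g$, an informationally equivalent experiment $\pi'$ in which any two receivers sharing a neighborhood receive identical messages with probability one. The construction ``packages'' the messages of such receivers into a single compound message; the crux is showing that this repackaging preserves every receiver's association set.

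First, I would unpack the hypothesis $N_i(g)=N_j(g)$ for $i\neq j$. Because $j\in N_j(g)=N_i(g)$, we obtain $g_{ij}=1$, so $i$ and $j$ are neighbors; moreover, for every $k\notin\{i,j\}$ the equality of neighborhoods gives $k\in N_i(g)\iff k\in N_j(g)$. Hence any third receiver is either a common neighbor of both $i$ and $j$ or a neighbor of neither. This combinatorial observation is what ultimately prevents the packaging from leaking information outside the pair.

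Given $\pi$, I would construct $\pi'$ by enlarging the message spaces of $i$ and $j$ to contain the product $S_i^\pi\times S_j^\pi$—allowed since $S_i$ need only be countable—and setting $\pi'(s'\,\vert\,\omega)=\pi(s\,\vert\,\omega)$ whenever $s'_i=s'_j=(s_i,s_j)$ and $s'_k=s_k$ for all $k\notin\{i,j\}$. I would then verify, for each receiver $k$, that the information extractable from $s'_k(g)$ under $\pi'$ coincides with that extractable from $s_k(g)$ under $\pi$. For $k\in\{i,j\}$ the compound messages reveal exactly the pair $(s_i,s_j)$ already observed under $\pi$; for a common neighbor $k$ of $i$ and $j$, the compound is observed twice but encodes the same pair $k$ already saw separately under $\pi$; and for $k$ with $i,j\notin N_k(g)$ nothing in $k$'s information set changes. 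The bijection $s\mapsto s'$ on supports thus identifies association sets, yielding identical posteriors, identical best-response actions, and hence $V^{\pi'}_k(g)=V^{\pi}_k(g)$.

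Since by construction $s'_i=s'_j$ holds with probability one under $\pi'$, the conclusion holds for the pair $(i,j)$. The general statement follows by iterating the construction over the (finitely many) equivalence classes of receivers with identical neighborhoods—or by applying it simultaneously, packaging each class's messages into a single tuple assigned to every member of the class. The step I would check most carefully is the claim for common neighbors: it is tempting to worry that seeing the pair $(s_i,s_j)$ twice leaks extra information, but because such a $k$ already observed $s_i$ and $s_j$ separately under $\pi$, the two information sets are in fact equivalent and the posterior is preserved.
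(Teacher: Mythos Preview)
Your proposal is correct and follows essentially the same route as the paper's proof: both replace the pair of messages $(s_i,s_j)$ by a single compound message assigned identically to $i$ and $j$, verify that every receiver's association set (hence posterior and action) is unchanged, and iterate over pairs with equal neighborhoods. Your explicit observation that $N_i(g)=N_j(g)$ forces $g_{ij}=1$ and that any third receiver is either a neighbor of both or of neither is a nice structural point that the paper leaves implicit in its computation of $\lambda_r^{s,g}=\lambda_r^{\tau(s),g}$, but the underlying argument is the same.
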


That is, if two receivers observe identical sets of messages, the sender cannot influence them differently in a meaningful way. Since their information sets are indistinguishable, they form same posteriors and behave identically; treating them symmetrically does not restrict the sender’s ability to~persuade.

Our second observation is that despite the presence of communication, the sender can achieve the upper bound $V^n_k$ on a circle.\footnote{A connected network $g\in G(N)$ is a \emph{circle} if $\delta^g_i=2$ for all $i\in N$.} Intuitively, this is because the circle limits the spread of information in a controlled way: no receiver observes more than two others, and the network lacks central or highly-connected nodes that would dominate the information flow.
Hence, while limited information spillovers on a circle decrease the variety of receiver groups that the sender can target and persuade in state $Y$ (i.e., minimal winning coalitions) relative to the empty network, the sender retains enough flexibility to differentiate messages that reach a critical mass.

\begin{lemma}[Circle] \label{lem:circle}
The sender can achieve $V^n_k$ on a circle network whenever $n>3$. 
\end{lemma}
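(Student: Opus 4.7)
My plan is to lift the construction from the illustrative example of Section~1.1 to an arbitrary $n$-cycle with critical mass $k$, and then analyze it by exploiting the cyclic symmetry. I would label the receivers $1,\ldots,n$ around the cycle. In the main range $k\le n-3$, set $m:=n-k-2\ge 1$ and let $T$ denote the $n$ cyclic rotations of a ``base'' signal that assigns $b$ to $m$ consecutive positions and $g$ to the remaining $n-m$. The candidate experiment $\pi$ sends $\bar g$ in state $X$ with probability $1$, and in state $Y$ sends $\bar b$ with probability $q$ and each signal of $T$ with probability $(1-q)/n$, where $q\in[0,1)$ is chosen so that $1-q=\frac{\lambda^0(X)\,n}{\lambda^0(Y)\,k}$ (well-defined in $[0,1]$ by the standing assumption $\lambda^0(X)<k/(n+k)$).

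The verification I would carry out has three parts. First, by Lemma~\ref{lem:symmetry} and cyclic invariance I reduce the analysis to a single representative receiver. Second, I count: of the $n$ rotations in $T$, exactly $k=n-m-2$ place the $b$-block disjoint from a given receiver's three-vertex neighborhood, so the information set $(g,g,g)$ arises with probability $1$ in state $X$ and with probability $(1-q)k/n$ in state $Y$. The chosen $q$ then makes the posterior for state $X$ given $(g,g,g)$ equal exactly $1/2$, and tie-breaking yields action $x$; any information set containing $b$ arises only in state $Y$ (as $\bar g$ is the only state-$X$ signal), has posterior $0$ for $X$, and yields action $y$. Third, I check that under each signal of $T$ exactly $k$ receivers take $x$ (so the critical mass is met), under $\bar g$ all $n$ do, and under $\bar b$ none does. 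Summing the contributions gives value $\lambda^0(X)+\lambda^0(Y)(1-q)=\frac{n+k}{k}\lambda^0(X)=V^n_k$, which matches the upper bound of Proposition~\ref{pro:emptynetwork}.

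I would dispatch the remaining parameters separately. The case $k=n$ is handled trivially by public signaling since $V^n_n=V^p=2\lambda^0(X)$. The hard part --- and the main obstacle I anticipate --- is the edge range $k\in\{n-2,n-1\}$, where $m\le 0$ makes the $b$-block degenerate: on any cycle a single $b$-message inevitably contaminates three consecutive information sets, so no symmetric single-pattern scheme can leave only one or two receivers unpersuaded. To handle these cases I would either design a richer multi-message experiment that breaks the rotational symmetry (so that distinct receivers observing a given pattern form distinct posteriors), or invoke the sufficiency condition of \cite{Babichenko21} mentioned in the introduction which identifies networks attaining the empty-network value.
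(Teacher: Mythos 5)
Your construction is essentially the paper's: in state $X$ send all-$g$, and in state $Y$ mix all-$b$ with a uniformly random rotation of a single block pattern, calibrated so that the all-$g$ information set yields posterior exactly $1/2$. In fact your version is the more carefully specified one: the block carrying $b$ must have size $n-k-2$ so that exactly $k$ receivers see an all-$g$ neighborhood, which matches the illustrative example, whereas the set $T_i$ as formally written in the paper's proof has cardinality $n-k$ and, read literally, would not leave exactly $k$ receivers persuaded. Your counting of rotations, the posterior computation, and the resulting value $\lambda^0(X)+\lambda^0(Y)(1-q)=\tfrac{n+k}{k}\lambda^0(X)$ are all correct. The substantive divergence is your explicit treatment of $k\in\{n-2,n-1\}$ (and the trivial $k=n$): you are right that the single-block scheme degenerates there --- indeed, for $n\in\{4,5\}$ every admissible $k\ge\lfloor (n+1)/2\rfloor$ falls in this range, so the block construction covers no case at all for those $n$ --- and no two-message rotation-invariant experiment can leave only one or two receivers with posterior zero, since a single $b$ contaminates three consecutive neighborhoods. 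Your fallback of invoking the sufficiency condition of \cite{Babichenko21} (a circle with $n>3$ has no information-dominating pairs) legitimately closes these cases and is a route the paper itself endorses; the ``richer multi-message experiment'' alternative you mention would require their secret-sharing-style constructions and is not something you have supplied. With that fallback your proof is complete; without it, the edge range would be a genuine gap --- one that the paper's own written proof shares.
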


\noindent We can now state our first result on the non-monotonicity of the network value. 
This result establishes that this phenomenon is not an idiosyncrasy of our illustrative example. Moreover, it helps to set the groundwork for the more general results we discuss in Section \ref{sec:applications}. Its discussion will further demonstrate important differences between our setup and other models.

\begin{proposition}[Pairs]\label{pro:pairs}
If all receivers in a network with $n>3$ have degree at most $1$, then either the sender can achieve $V^n_k$ or there exists an extension that strictly benefits the sender.\footnote{The logic of the proposition easily applies to networks that consist of connected triples and at most one singleton.}  
\end{proposition}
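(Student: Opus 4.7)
The plan is to reduce the claim to Lemma~\ref{lem:circle} by showing that whenever the sender fails to achieve $V^n_k$ on the given network $g$, I can extend $g$ to a circle on $N$. The argument will proceed as a short case split. If the sender already achieves $V^n_k$ on $g$, the first disjunct holds and I am done; otherwise, $g$ must contain at least one edge, since the empty network itself has value $V^n_k$ by definition. Because every vertex of $g$ has degree at most $1$, the edges of $g$ form a matching, so I would write the matched pairs as $\{a_1,b_1\},\ldots,\{a_p,b_p\}$ with $p\ge 1$, and the remaining singletons as $c_1,\ldots,c_q$, where $2p+q=n$.

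The main step is to construct a Hamiltonian cycle $g^+$ on $N$ whose edge set contains every pair-edge of $g$. I would take $g^+$ to be the cycle $a_1-b_1-a_2-b_2-\cdots-a_p-b_p-c_1-\cdots-c_q-a_1$ if $q\ge 1$, and $a_1-b_1-a_2-b_2-\cdots-a_p-b_p-a_1$ if $q=0$ (the assumption $n>3$ forces $p\ge 2$ in this second case, so the cycle has length at least $4$). By inspection, each matching edge $\{a_i,b_i\}$ of $g$ is preserved in $g^+$, while every vertex has degree exactly $2$ in $g^+$; hence $g\subsetneq g^+$ and $g^+$ is a circle network on $n>3$ vertices.

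To finish, I would apply Lemma~\ref{lem:circle} to conclude that the sender achieves $V^n_k$ on $g^+$. Combined with the standing assumption that the value on $g$ is strictly below $V^n_k$, this yields a strict improvement, establishing the second disjunct. The only combinatorial point that needs verification is that any matching on $n\ge 4$ vertices extends to a Hamiltonian cycle containing all its matching edges, which the explicit construction above handles directly; I do not foresee any other substantive obstacle.
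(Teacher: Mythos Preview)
Your proposal is correct and takes essentially the same approach as the paper: both extend the matching-plus-singletons network to a Hamiltonian cycle containing all existing edges and then invoke Lemma~\ref{lem:circle}. The paper's version is slightly terser (it does not separately argue that $p\ge 1$ or split on $q$), but the construction and the logical skeleton are identical.
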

The network from Proposition~\ref{pro:pairs} consists of pairs and singletons. Although a network of isolated pairs is stylized, it can represent a market with individuals receiving information privately and having a limited opportunity to compare notes. Such customers may be reached through direct sales calls or personalized ads; they may discuss products only with a close contact, like a friend or partner.
In contrast, a circle represents a slightly more interconnected environment (such as a neighborhood or a small online community) in which each customer sees or hears about the product from two others, but broader visibility is still constrained. While one might expect the more connected network to make customers less susceptible to persuasion, our result shows that the sender may actually prefer a circle.

Proposition~\ref{pro:pairs} helps explaining the observation in our illustrative example regarding the non-monotonicity of the network value. On the one hand, the existence of multiple connected pairs limits the sender's possible choice of experiments by Lemma~\ref{lem:symmetry}, since the pairs will take the same action regardless of the messages. This implies that the optimal value is likely to be less than $V^n_k$. On the other hand, extending this network to a circle allows the sender to exploit private communication and induce different actions. We prove Proposition~\ref{pro:pairs} by establishing the existence of networks on which the sender cannot achieve $V^n_k$, and thus extending them to a circle is a strict improvement to the value. Consequently, by Lemma~\ref{lem:circle}, the sender can achieve $V^n_k$ and benefit from a denser network.

The network of pairs can also be treated as a weighted voting setup, or as a situation in which the customers have non-unitary demand for a product.\footnote{See \cite{kerman2024information} for a detailed analysis of information design with weighted voting.} Specifically, one may think of voter blocs that vote for the same alternative \citep*{gormley2008exploring,eguia2011voting,cooperman2024bloc,grimmer2025measuring}, and customers within the same segment that make the same purchase decisions \citep*{athanassopoulos2000customer,broderick2007behavioural}. Moreover, all voters within a bloc (or all customers within a segment) have the same ex ante belief, as well as the same ex post belief. This makes it more difficult for the sender to understand how to optimally reach a critical mass. However, in a circle network, voter blocs (or customer segments) are not as strictly delineated, and thus the sender might induce different ex post beliefs and achieve a higher probability of persuasion.

Note that Lemma~\ref{lem:circle} also follows from the results of~\cite{Babichenko21}, as they explicitly mention.\footnote{They refer to this lemma appearing in an early working paper version~\cite{kerman2021persuading}.} To explain this, we first state their definition of information domination, adapted to our notations.  
\begin{definition}[Information domination]\label{def:dominating}
Given a network $g\in G(N)$, an ordered pair of receivers $(i,j)$ is \emph{information-dominating} if $N_j(g)\subseteq N_i(g)$.
In this case, we also say that receiver $i$ \emph{information-dominates} receiver $j$.
\end{definition}

\noindent That is, a receiver information-dominates another if he observes at least the same messages. The authors prove that \emph{if there are no information-dominating pairs in a network, the sender can achieve the upper bound on her value (corresponding to private signaling)}. It is easy to see that, indeed, a circle has no information-dominating pairs. We present Lemma~\ref{lem:circle} as a simple illustration to the non-monotonicity of the value.
Additionally, our constructive proof of this lemma has the following two advantages over the approach of~\cite{Babichenko21}:
\begin{inparaenum}[(i)]
    \item we use only two messages, while they use a number of messages that depends on the number of receivers;
    \item we use a simple construction of the optimal experiment that mirrors the approach on the empty network, while they rely on secret sharing protocols from cryptography.
\end{inparaenum} 

The analysis of the circle network shows that our approach and the approach of~\cite{Babichenko21} in achieving the upper bound $V^n_k$ may overlap. In general, however, they are \textit{complementary} in determining whether this bound can be reached. More precisely, there are networks for which we cannot construct the optimal experiment analytically, yet we can still conclude that the sender can achieve $V^n_k$ via the absence of information-dominating pairs.
Conversely, there are networks that \emph{do} contain information-dominating pairs and fall outside the scope of their analysis, but for which we can construct experiments that achieve $V^n_k$ (see Example~\ref{ex:starlike}). Together, the two approaches provide a more comprehensive understanding of when and how the sender can fully exploit information design with information spillovers.

In addition to our constructive approach, we shall also employ the information-domination argument to establish non-monotonicity results in other classes of networks. The following lemma, which is an immediate corollary of Theorem 3.5 in \cite{Babichenko21}, will help us to formalize the use of this approach; it provides a sufficient condition under which extending the network benefits the sender, namely breaking all information-dominating pairs.

\begin{lemma}\label{cor:info}
Let $g\in G(N)$ and suppose that for any $\pi\in\Pi$ it holds that $V^\pi_k(g)<V^n_k$. 
If $g'\in N$ with $g\subsetneq g'$ has no information-dominating pairs, then there exists $\pi'\in\Pi$ s.t. $V^{\pi'}_k(g')=V^n_k>V^\pi_k(g)$.
\end{lemma}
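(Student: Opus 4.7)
The plan is to derive the lemma as an immediate corollary of Theorem~3.5 in~\cite{Babichenko21}, which guarantees that whenever a network contains no information-dominating pair, the sender admits an experiment attaining the value achievable under fully private signaling. By Proposition~\ref{pro:emptynetwork} and the discussion preceding it, this private-signaling value is precisely $V^n_k$, the value on the empty network $g_0$.

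Concretely, I would proceed in three short steps. First, I would verify that $g'$ fits the hypothesis of Theorem~3.5 of~\cite{Babichenko21}: by assumption, no ordered pair $(i,j)$ in $g'$ satisfies $N_j(g')\subseteq N_i(g')$, which is exactly the hypothesis of the theorem in our notation (Definition~\ref{def:dominating}). Second, I would invoke that theorem to obtain an experiment $\pi'\in\Pi$ with $V^{\pi'}_k(g')=V^n_k$; this step does not require reconstructing their proof, only checking that our setting (common prior, binary state, binary action, threshold objective, homogeneous receivers) falls within the scope of their general result. Third, combining the equality $V^{\pi'}_k(g')=V^n_k$ with the standing hypothesis $V^{\pi}_k(g)<V^n_k$ for every $\pi\in\Pi$ immediately yields the strict inequality $V^{\pi'}_k(g')>V^{\pi}_k(g)$, completing the claim.

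The only point requiring minor care, and thus the closest thing to an obstacle, is confirming that the conclusion of Theorem~3.5 in~\cite{Babichenko21} transfers verbatim to our collective-decision-making objective rather than merely to the sender's universal preference ordering they emphasize. This transfer is legitimate because the absence of information-dominating pairs in $g'$ allows the sender to reproduce on $g'$ any joint distribution over the receivers' individual posteriors that she could generate on $g_0$; in particular, she can mimic the straightforward optimal experiment on the empty network described in Section~\ref{sec:pre}, whose value is exactly $V^n_k$. Once this correspondence is noted, the strict improvement over $V^{\pi}_k(g)$ is simply the gap assumed in the statement, and no further calculation is needed.
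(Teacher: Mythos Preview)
Your proposal is correct and matches the paper's own treatment: the paper explicitly presents this lemma as an immediate corollary of Theorem~3.5 in \cite{Babichenko21} and provides no separate proof. Your three-step verification (hypothesis check, invocation, strict-inequality conclusion) is exactly the intended argument, and your added remark on why the result transfers to the threshold objective is a helpful elaboration the paper leaves implicit.
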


\begin{remark}
    \label{rem:dominating}
    In addition, if a network is modified so that no existing information-dominating pair is removed, then the sender's value cannot increase. Although this result is not directly used in any of our proofs, it provides an intuition on the type of network extensions that may benefit the sender.
\end{remark}

\section{Persuasion Gains from Network Extensions}\label{sec:applications}
Section \ref{sec:pre} established the existence of cases in which receivers are less likely to uncover the truth when they share information with one another. In this section, we extend our analysis and explore a variety of network topologies to show that under certain conditions, the sender can \textit{strictly} benefit from communicating with receivers who are in a denser network. That is, we characterize different classes of networks for which extending the network creates a non-monotonicity in the way the sender's gain from persuasion changes with network density. This non-monotonicity hinges not only on how much communication occurs, but also on how it takes place. Hierarchical influence, local echo chambers, and fragmented communities can create different constraints and opportunities for the sender. By analyzing these structures, we identify conditions that determine whether persuasion power is enhanced. Our results show that the sender can often benefit \textit{not despite, but because} of increased communication, as long as the network retains certain asymmetries or local bottlenecks in how information spreads. Moreover, Remark~\ref{rem:receivers} implies that such sender-beneficial extensions are detrimental for the receivers. In Subsection~\ref{sub:non}, we briefly explore networks in which the sender cannot benefit from adding links.

\subsection{Network Extensions Beneficial for the Sender}

In many real-world settings, communication does not flow symmetrically among all individuals, but follows a hierarchical pattern. Certain receivers act as local hubs of information (due to status \citep*{dong2015inferring}, visibility \citep*{jankowski2016picture,benevento2025impact}, or structural roles \citep*{Burt92, Burt04}), while others have more limited access to the broader information environment. These hierarchies are especially common in social media, organizations, and group deliberation contexts, where a small number of participants shape the way information is interpreted or redistributed.

One network that captures such a structure is a \emph{star}, in which a \emph{central} node is connected to all others (\emph{peripheral} nodes), who do not communicate among themselves \citep*{bala2000noncooperative,goeree2009search,galeotti2010law,nora2024exploiting}. This network is too restrictive for our purposes, as it excludes both the possibility of multiple centers and any communication among peripheral receivers, which are often observed in real-world networks. We study broader classes of network structures that capture real-world communication more accurately, allowing for hierarchical information flow beyond what the star structure permits.

We define several classes of networks that reflect these more general structures. We introduce \emph{stellar} networks (Definition \ref{def:stellar}; see also Figure \ref{fig:stellar}) that mirror the dynamics around influencers or public figures on platforms like $\mathbb X$ or Instagram: a central user broadcasts content to many followers, who observe their posts, but might also observe each other's posts. \emph{Constellations} (Definition \ref{def:constellation}, example in Figure \ref{fig:halo} -- right) generalize this structure, capturing environments with multiple local opinion leaders (such as online communities organized around Subreddit moderators, Facebook group admins, or niche content creators), each of whom information-dominates a subset. 

\emph{Galaxy} networks (Definition \ref{def:galaxy}) go further than constellations, representing disconnected clusters of communication, where each cluster evolves largely in isolation. These can correspond to separate discussion spaces, private chat groups, or isolated communities within larger platforms in which little cross-talk occurs between clusters. 
These can be interest groups which espouse opposing viewpoints or support opposing political parties.
These same structural patterns might also appear outside of social media -- e.g., in organizational settings, such as corporate hierarchies or teams operating in different areas with a central coordinator, which exhibit low inter-unit communication~\citep*{hansen1999search,mathiesen2010organizational}.

We begin by examining stellar networks, which capture hierarchical communication structures s.t. a central node observes or influences all others within the network. While these networks include a clear hub (as in star networks), we allow communication among peripheral receivers as well, as long as the resulting information flow preserves a hierarchical structure.\footnote{While we introduce the graph structures as \emph{networks}, we shall also consider networks that include these structures as \emph{components}, in which case we shall refer to them accordingly (e.g., a \emph{stellar component}).}

\begin{definition} \label{def:stellar}
A \emph{stellar network} is a network $g$ that either:
\begin{inparaenum}[(i)]
    \item consists of a single node; or
    \item contains a node $r$ that is connected to all the other nodes in $g$, these other nodes are partitioned to \emph{at least two} (disjoint and nonempty) subsets each of which is itself a stellar network, and no edges connect nodes in different partition~elements.
    \end{inparaenum}
\end{definition}

Note that a stellar network $g$ has a corresponding directed graph $T^g$ representing the partial ordering induced on the receivers by the information-domination relation, and $T^g$ is a directed tree. Let the \emph{depth} of a node in a stellar network be the number of edges in the unique path from the root to that node in $T^g$. In particular, note that there are at least two nodes of each depth $\ell>0$. Define the \emph{depth} of a stellar network $g$ to be the maximum depth of a node in $g$. For instance, a star (with at least two peripheral nodes) is a stellar network of depth $1$.

\begin{example}[A stellar network's directed tree] Consider the stellar network with $n=6$ in Figure~\ref{fig:stellar}.
Notice that node $4$ information-dominates $1, 2, 3, 5$, and $6$; and node $6$ information-dominates $3$ and $5$. 
This is reflected in the directed tree in \ref{fig:tree} representing the information-domination relation.

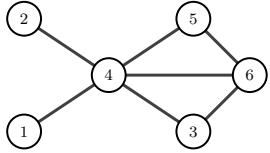
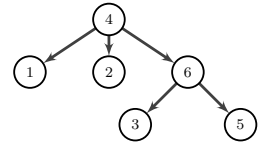
\begin{figure}[H]
\begin{subfigure}[t]{0.5\linewidth}
\begin{center}
\scalebox{.75}{
\begin{tikzpicture}
\Vertex[color=white,opacity=0.8,x=0,y=0,label=1]{1} 
\Vertex[color=white,opacity=0.8,x=0,y=2,label=2]{2} 
\Vertex[color=white,x=3,y=2,label=5]{5} 
\Vertex[color=white,opacity=0.8,x=3,y=0,label=3]{3} 
\Vertex[color=white,opacity=0.8,x=1.5,y=1,label=4]{4}
\Vertex[color=white,x=4,y=1,label=6]{6}
\Edge(1)(4)
\Edge(4)(5)
\Edge(2)(4)
\Edge(3)(4)
\Edge(4)(6)
\Edge(3)(6)
\Edge(6)(5)
\end{tikzpicture}}
\end{center} \caption{A stellar network} \label{fig:stellar}
\end{subfigure} 
\begin{subfigure}[t]{0.5\linewidth}
\begin{center}
\scalebox{.70}{
\begin{tikzpicture}
\Vertex[color=white,x=1.5,y=2,label=4]{4} 
\Vertex[color=white,x=1.5,y=1,label=2]{2} 
\Vertex[color=white,opacity=0.8,x=0,y=1,label=1]{1} 
\Vertex[color=white,opacity=0.8,x=3,y=1,label=6]{6}
\Vertex[color=white,opacity=0.8,x=2,y=0,label=3]{3} 
\Vertex[color=white,x=4,y=0,label=5]{5}
\Edge[style={->,> = latex'}](4)(1)
\Edge[style={->,> = latex'}](4)(2)
\Edge[style={->,> = latex'}](4)(6)
\Edge[style={->,> = latex'}](6)(3)
\Edge[style={->,> = latex'}](6)(5)
\end{tikzpicture}}
\end{center} \caption{The induced directed tree} \label{fig:tree}
\end{subfigure}
\caption{A stellar network and the induced graph of the information-domination~relation.\hfill$\triangle$}
\label{fig:2}
\end{figure}
\end{example}

The example above illustrates how layered information domination can emerge even in small hierarchical structures. Our next result shows that when such a structure appears as a \emph{stellar component} within a larger network, it may admit an extension that \textit{strictly} benefits the sender.

\begin{theorem}[Stellar]\label{thm:genstar}
Consider a network $g\in G(n)$ with a stellar component $C$, which has depth $\ell\ge 1$ and more than $k$ receivers. 
Suppose that there are at least $\ell$ receivers outside of $C$. Then there exists an extension of $g$ that strictly benefits the sender. Moreover, the sender's value for this extension is $V_k^n$.
\end{theorem}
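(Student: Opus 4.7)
My plan is two-staged: first, show $V^{\pi^*}_k(g)<V_k^n$ by exploiting information-domination by the root of $C$ together with $|C|>k$; then construct an extension $g^+$ with no information-dominating pair, so that Lemma~\ref{cor:info} yields value $V_k^n$ on $g^+$, strictly exceeding the value on $g$.

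\textbf{Stage~1 ($V^{\pi^*}_k(g)<V_k^n$).} I will mimic the Aumann-style derivation of the bound $V_k^n=\lambda^0(X)(n+k)/k$ to show that any experiment attaining it must satisfy: (A) every receiver takes $x$ with probability one in state $X$; (B) in state $Y$ the number of $x$-takers is almost surely in $\{0,k\}$; and (C) for each receiver $v$ and each info set $\sigma$ in the $X$-support of $v$, $P(\sigma\mid Y)=(\lambda^0(X)/\lambda^0(Y))\,P(\sigma\mid X)$. Apply~(C) to the root $r$ of $C$ (so that $N_r(g)=C$ and $s_r(g)$ records the messages to every node of $C$): letting $A_r$ denote the $X$-support of $s_r(g)$, summing~(C) over $A_r$ gives $P(B_r\mid Y)=\lambda^0(X)/\lambda^0(Y)>0$ for $B_r=\{s:s_r(g)\in A_r\}$. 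For any $v\in C$ we have $N_v(g)\subseteq C$, so $s_v(g)$ is a coordinate projection of $s_r(g)$; hence $s\in B_r$ forces $s_v(g)$ into the $X$-support of $v$, and by~(C) together with tie-breaking $v$ takes $x$. Thus on $B_r$---a positive-probability event in state $Y$---all $|C|>k$ receivers of $C$ take $x$, contradicting~(B).

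\textbf{Stage~2 (construction of $g^+$).} I will fix $\ell$ receivers $u_1,\dots,u_\ell$ outside $C$ (they exist by hypothesis) and add, for each $d=1,\dots,\ell$, every edge $(u_d,v)$ with $v\in C$ of depth at least $d$ in $T^g$. Within $C$ the only candidate dominations (inherited from $g$) are ancestor-descendant pairs in $T^g$: for $v_1$ a strict ancestor of $v_2$ with depths $e_1<e_2$, one has $u_{e_2}\in N_{v_2}(g^+)\setminus N_{v_1}(g^+)$, while a child of $v_1$ off the $v_1$-to-$v_2$ path (which exists since internal nodes of $T^g$ have $\ge 2$ children) lies in $N_{v_1}(g^+)\setminus N_{v_2}(g^+)$. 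For a cross pair $(u_d,v)$ with $\mathrm{depth}(v)\ge d$, $r\in N_v(g^+)\setminus N_{u_d}(g^+)$ and any sibling of $v$ in $T^g$ lies in $N_{u_d}(g^+)\setminus N_v(g^+)$. Non-adjacent pairs never information-dominate each other (because $i\in N_i\setminus N_j$), and any residual outside-$C$ dominations can be broken by additional edge additions---possible because no vertex of $g^+$ is universal: $r$ has degree $|C|-1<n-1$, and no outside node neighbors all of $C$.

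Combining the two stages via Lemma~\ref{cor:info} gives value $V_k^n$ on $g^+$, strictly greater than $V^{\pi^*}_k(g)$. The hardest part will be Stage~1: extracting the contradiction requires propagating~(C) from the root to all of $C$ via the projection $s_v(g)=s_r(g)|_{N_v(g)}$, so that a single positive-probability $Y$-event in which $r$'s posterior equals $1/2$ forces the entire component $C$ to act in unison and thereby breach the $\{0,k\}$ cap on $x$-takers imposed by~(B).
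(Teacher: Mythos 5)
Your overall architecture matches the paper's: show the value on $g$ falls short of $V_k^n$, then build an extension with no information-dominating pairs and invoke Lemma~\ref{cor:info}. Stage~1 is sound: your properties (A)--(C) are exactly the content of the paper's Lemma~\ref{lem:opt} (they do require proof -- the paper's argument rules out posterior-$1$ information sets via a Bayes-plausibility accounting -- but they are true), and your route to the contradiction is a legitimate variant: you show the root is persuaded with positive probability in state $Y$ and drag all of $C$ with it, breaching the $\{0,k\}$ cap, whereas the paper argues the root always learns the state in $Y$ and hence can be ignored, capping the value at $V_k^{n-1}$. Your within-$C$ and cross-pair analysis in Stage~2 is also correct: domination inside a stellar component is exactly the ancestor--descendant relation of $T^g$, and your sibling/off-path-child witnesses work.

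The genuine gap is the last sentence of Stage~2. First, your ``depth at least $d$'' rule makes the new $C$-neighborhoods of $u_1,\dots,u_\ell$ \emph{nested}, so if $u_d$ information-dominated $u_{d'}$ ($d<d'$) in $g$ -- e.g., they are adjacent and $N_{u_{d'}}(g)=\{u_d,u_{d'}\}$ -- that domination survives in $g^+$; the paper's ``depth exactly $i$'' rule avoids this by making the $v_i$'s $C$-neighborhoods incomparable. The same issue arises for pairs $(u_d,w)$ and $(w,w')$ with $w,w'$ outside $C\cup\{u_1,\dots,u_\ell\}$, whose relative neighborhoods are untouched by your construction. Second, the principle you invoke to dismiss these -- ``no vertex of $g^+$ is universal, so residual dominations can be broken by further additions'' -- is false as a general statement: with three nodes and the single edge $1$--$2$, no vertex is universal, yet the mutual domination between $1$ and $2$ cannot be destroyed by adding edges (breaking it in both directions would require node $3$ to be adjacent to $1$ but not $2$ and to $2$ but not $1$). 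Indeed, each new edge enlarges two neighborhoods and can create fresh dominations, which is why the paper's proof spends most of its length on exactly this bookkeeping: it connects all outside nodes to each other and to $r,v_1,\dots,v_\ell$, attaches each remaining outside node to a distinct node of $C\setminus\{r\}$, and splits into cases according to $\vert N\setminus(C\cup\{v_1,\dots,v_\ell\})\vert$ and (for $\ell=1$) the relative sizes of $C\setminus\{r\}$ and $N\setminus C$. To complete your proof you would need to supply an explicit repair of this kind and verify it introduces no new dominating pairs.
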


\begin{remark}[Receivers' welfare]
    \label{rem:receivers}
    The construction in the proof of Theorem~\ref{thm:genstar} will show that the optimal experiment in the original network leads to outcome $x$ with probability $1$ conditional on the state being $X$. It follows from Lemma~\ref{lem:opt} in Appendix~\ref{ap:add} that the same holds in the denser network. Therefore, whenever the sender benefits from adding links to the network, the receivers are worse off in the following sense: the probability that the outcome is different from the true state increases. A similar phenomenon holds for \emph{all} results in the current section.
\end{remark}

We prove Theorem \ref{thm:genstar} in two steps. 
First, we show that when a sufficiently large stellar component is present, the sender cannot achieve the upper bound $V^n_k$ on the overall network. 
The reason is that persuading the central receiver in such a structure is often too costly: doing so requires revealing too much information in state $Y$, which in turn limits the sender’s ability to influence others.
Put differently, in an optimal experiment on a network with a sufficiently large stellar component, it is w.l.o.g. to always send the same message to the central node (i.e., the root of the induced tree). Thus, this node is persuaded only when unable to change the outcome. The sender can get a higher expected utility by effectively ignoring the central receiver and allocating effort among the remaining ones. Thus, the highest value the sender can obtain on such a network is strictly lower than $V^n_k$.

Second, we show that there is an extension of the network that benefits the sender. Figure \ref{fig:T1} illustrates the base for such an extension. It represents a network with 8 nodes, which comprises a (sufficiently big) stellar component (nodes  $\{1,2,\ldots,6\}$) and another component (nodes $7$ and $8$).
The idea is to extend the network by connecting receivers with different depths (i.e., receivers with different levels on the hierarchical order) to different receivers outside the component.  In Figure \ref{fig:T1}, receivers $1$, $2$, and $6$, who have depth $1$, are connected to receiver $7$; and receivers $3$ and $5$, who have depth $2$, are connected to receiver $8$. Note that we allow more than $\ell$ receivers to be present outside the stellar component (i.e., we can have more than the two receivers that appear in Figure \ref{fig:T1}), as long as the component remains sufficiently~large.

This construction breaks the information-dominating pairs within the stellar component. However, since we must also break the pairs involving nodes outside the stellar component $C$ (without creating others), and we do not assume anything about the links between the nodes outside $C$, the construction is sometimes much more intricate than depicted in Figure \ref{fig:T1}. It requires acknowledging different possibilities for the existing links between nodes outside $C$ and a separate treatment for $\ell=1$. Once all information-dominating pairs have been broken, we can invoke Lemma \ref{cor:info} to show that the value now is exactly $V^n_k$.

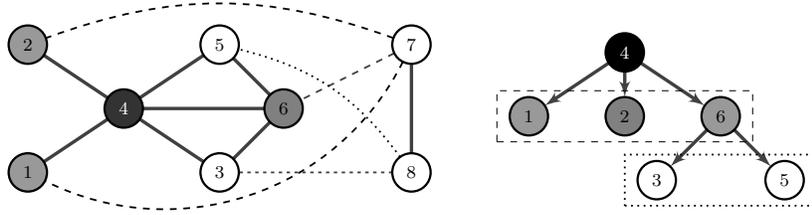
\begin{figure}[h]
\centering
\scalebox{.85}{
\begin{minipage}{0.45\textwidth}
\centering
\begin{tikzpicture}
\Vertex[color=gray,opacity=0.8,x=0,y=0,label=1]{1} 
\Vertex[color=gray,opacity=0.8,x=0,y=2,label=2]{2} 
\Vertex[color=white,x=3,y=2,label=5]{5} 
\Vertex[color=white,opacity=0.8,x=3,y=0,label=3]{3} 
\Vertex[color=black,opacity=0.8,x=1.5,y=1,label=\textcolor{white}{4}]{4}
\Vertex[color=gray,x=4,y=1,label=6]{6}
\Vertex[color=white,x=6,y=2,label=7]{7}
\Vertex[color=white,x=6,y=0,label=8]{8}
\Edge(1)(4)
\Edge(4)(5)
\Edge(2)(4)
\Edge(3)(4)
\Edge(4)(6)
\Edge(3)(6)
\Edge(8)(7)
\Edge(6)(5)
\Edge[style={dashed, thick}](7)(6)
\Edge[style={dotted, thick}](8)(3)
\path[-] (8) edge [bend right=20,style={dotted, thick}] node {} (5);
\path[-] (7) edge [bend right=20,style={dashed, thick}] node {} (2);
\path[-] (7) edge [bend left=45,style={dashed, thick}] node {} (1);
\end{tikzpicture}
\end{minipage}%
\begin{minipage}{0.45\textwidth}
\centering
\begin{tikzpicture}[rectup/.style={rectangle, draw,dashed, font=\Large,minimum height=8mm,minimum width=40mm},rectdown/.style={rectangle, dotted, thick, draw, font=\Large,minimum height=8mm,minimum width=30mm}]
\Vertex[color=black,x=1.5,y=2,label=\textcolor{white}{4}]{4} 
\Vertex[color=gray,x=1.5,y=1,label=2]{2} 
\Vertex[color=gray,opacity=0.8,x=0,y=1,label=1]{1} 
\Vertex[color=gray,opacity=0.8,x=3,y=1,label=6]{6}
\Vertex[color=white,opacity=0.8,x=2,y=0,label=3]{3} 
\Vertex[color=white,x=4,y=0,label=5]{5}
\Edge[style={->,> = latex'}](4)(1)
\Edge[style={->,> = latex'}](4)(2)
\Edge[style={->,> = latex'}](4)(6)
\Edge[style={->,> = latex'}](6)(3)
\Edge[style={->,> = latex'}](6)(5)
\node[rectup] (11) [xshift=1.5cm,yshift=1cm] { };
\node[rectdown] (22) [xshift=3cm,yshift=0cm] { };
\end{tikzpicture}
\end{minipage}}
\caption{The intuition behind the second step of the proof of Theorem \ref{thm:genstar}. The network consists of nodes $\{1,2,\dots,8\}$; nodes $\{1,2,\dots,6\}$ represent a stellar component. The left part illustrates the way the network (with solid lines) is extended (up to some required repairs; with dashed and dotted lines), while the right part shows the direct tree representing the information-domination relation. The color coding separates nodes with different depth. Lighter colors represent greater depth.}
\label{fig:T1}
\end{figure}

In organizational settings, the second step of the proof has a natural interpretation. After a merger, firms often make some employees \emph{boundary spanners} across multiple layers of the umbrella organization \citep*{noble2006role,williams2012we,colman2019postacquisition}; i.e., links are added to bridge the otherwise separated groups. Empirically, knowledge transfer across units arises when firms engineer direct communication, repeated visits, and broad introductions in post-acquisition integration \citep*{bresman1999knowledge}, which corresponds to adding links between receivers in the smaller component and receivers in different layers in the larger component. These cross-layer links are precisely the extensions that increase manipulability: they break information-domination relations and create shared exposure points the sender can target.

A second and increasingly common mechanism that resembles the construction in the proof of Theorem \ref{thm:genstar} is algorithmic link formation. ``People you may know'' or ``who to follow'' systems in social media propose connections that users would not have made otherwise, which leads to creating many new links \citep*{gupta2013wtf}. These systems implement link-prediction policies (e.g., supervised random walk, \citealp{backstrom2011supervised}) that may privilege some nodes over others. The democratic risk follows immediately if the platform also has an agenda. Users often (reasonably) believe that more connections lead to more information and thus better decisions, yet recommender policies can redirect \emph{who} becomes visible and \emph{which} messages circulate, to the sender’s advantage.

Recall that the star network is a special case of a stellar network with depth 1. Therefore, we immediately deduce the following.

\begin{corollary}[Star]\label{cor:star}
Let $g\in G(N)$ be a network containing a star component with more than $k$ and less than $n$ receivers. Then the value of $g$ is less than $V^n_k$, and there is an extension of $g$ that strictly benefits the sender and achieves the value of $V^n_k$.\footnote{As we shall see in Subsection~\ref{sub:non}, when the star component is the whole network, there is no beneficial extension for the sender.}
\end{corollary}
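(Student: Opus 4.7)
The plan is to derive Corollary~\ref{cor:star} as an immediate specialization of Theorem~\ref{thm:genstar} to the case $\ell=1$. By Definition~\ref{def:stellar}, a star component---a central receiver $r$ connected to all other receivers in the component, with no edges among the peripheral ones---is a stellar network whose induced directed tree $T^g$ has depth exactly $1$: every peripheral receiver sits at depth $1$ in $T^g$, and the partition of the non-central receivers into singleton subsets yields at least two subsets whenever the star has at least two peripheral receivers, which is guaranteed here since the component contains more than $k\geq 1$ receivers.

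First, I would verify the two hypotheses of Theorem~\ref{thm:genstar}. The cardinality assumption on the stellar component (more than $k$ receivers) is given directly. The condition that at least $\ell$ receivers lie \emph{outside} the component becomes, for $\ell=1$, the requirement that at least one receiver of $g$ is not in the star component---and this follows from the assumption that the star component has strictly fewer than $n$ receivers.

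Once both hypotheses are verified, Theorem~\ref{thm:genstar} applies and produces an extension $g'\supsetneq g$ whose value equals $V^n_k$ and which strictly benefits the sender. Because the value in $g'$ strictly exceeds the value in $g$ and equals $V^n_k$, the original value of $g$ must be strictly less than $V^n_k$, which simultaneously establishes both conclusions of the corollary. I would not anticipate any real obstacle, since the corollary is literally the depth-$1$ instance of Theorem~\ref{thm:genstar}; the only point requiring care is confirming that a star with more than $k$ receivers genuinely satisfies Definition~\ref{def:stellar}, which reduces to the trivial observation that the peripheral receivers can be partitioned into the required at least two singleton subsets, each being a stellar network of depth $0$.
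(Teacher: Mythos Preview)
Your proposal is correct and matches the paper's approach exactly: the paper also derives the corollary immediately from Theorem~\ref{thm:genstar} by noting that a star is a stellar network of depth~$1$. The only minor looseness is your appeal to ``$k\ge 1$'' to guarantee at least two peripheral receivers; in fact the standing assumption $k\ge\lfloor (n+1)/2\rfloor$ together with $k+1\le |C|\le n-1$ forces $k\ge 2$, which is what actually ensures the partition condition in Definition~\ref{def:stellar}.
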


\noindent Since in a star network the sender does not attempt to persuade the central node (see Appendix \ref{app:star}), it is as if the sender is persuading $k$ out of a total of $n-1$ receivers.
Thus, $V^{n-1}_k$ is an upper bound of the value.\footnote{The conclusion for stars is robust to making the network directed in a more realistic way: if the center does not observe anyone but is observed by all peripheries (as with social media influencers), the sender still cannot achieve $V^n_k$; the underlying reason is unchanged -- the center can be made non-pivotal. As before, however, a suitable extension of the network restores the upper bound $V^n_k$.}
A similar result could be derived for networks with a component that has more than one center, say $m>1$: there are $m$ nodes in the component that are connected to each other and to all remaining nodes in the component, with no additional edges in the component. It follows from applying symmetry (Lemma~\ref{lem:symmetry}) to the centers that the sender can achieve at most $V^{n-m}_k$. This implies that for a fixed number of receivers $n$, the network value \emph{monotonically decreases} as $m$ increases, and becomes $V^p$ when $m$ is sufficiently large.\footnote{This is similar in spirit to a result of \cite{Candogan19} stating that the sender's payoff decreases when the degree of some nodes in the network increases.}

We can find other sequences of networks with monotonic decrease in the network value. One example is moving from the empty network to one with a large star component. However, Theorem \ref{thm:genstar} and Corollary~\ref{cor:star} show that this pattern might reverse, and at some point of extension the sender can strictly benefit from it. We now illustrate this non-monotonicity with an example involving a star component.

\begin{example}\label{ex:starlike}
Let $n=8$, $\lambda^0(X)=1/3$, and $k=4$.\footnote{Note that our assumption of $\lambda^0(X)<k/(n+k)$ in the paper is supposed to ensure that $V^n_k<1$, which implies that extending the network is \emph{strictly} beneficial to the sender when $V^n_k$ can be achieved via the extension. In this particular example, we allow for $\lambda^0(X)=1/3=k/(n+k)$, since we have $V^{n-1}_k<1$, and thus an extension achieving $V^n_k$ must be strictly beneficial to the sender.}
Consider the network $g$ in Figure \ref{fig:ex2} (without the dashed edge). By Corollary \ref{cor:star} and the discussion following it, for any $\pi\in\Pi$ it holds that $V^\pi_4(g)\leq V^{n-1}_4=V^7_4=11/12$.

\begin{figure}[H]
\begin{subfigure}[t]{0.5\linewidth}
\begin{center}
\scalebox{.75}{
\begin{tikzpicture}
\Vertex[color=gray,opacity=0.8,x=0,y=0,label=1]{1} 
\Vertex[color=gray,opacity=0.8,x=0,y=2,label=2]{2} 
\Vertex[color=white,x=2,y=2,label=5]{5} 
\Vertex[color=gray,opacity=0.8,x=2,y=0,label=3]{3} 
\Vertex[color=gray,opacity=0.8,x=1,y=1,label=4]{4}
\Vertex[color=white,x=4,y=0,label=6]{6}
\Vertex[color=white,x=5,y=1,label=7]{7}
\Vertex[color=white,x=4,y=2,label=8]{8}
\Edge(1)(4)
\Edge(4)(5)
\Edge(2)(4)
\Edge(3)(4)
\Edge(7)(8)
\Edge(7)(6)
\Edge(8)(6)
\Edge[style={dashed}](6)(5)
\end{tikzpicture}}
\end{center}
\end{subfigure}
\begin{subfigure}[t]{0.5\linewidth}
\vspace{-2cm}
\begin{center}
\begin{tabular}{c|cc}
$ \pi' $ & $ X $ & $ Y $ \\
\hline
$\bar x$ & $1$ & $0$ \\[0.1cm]
$s$ & $0$ & $\frac{1}{2}$ \\[0.1cm]
$t$ & $0$ & $\frac{1}{2}$ \\
\end{tabular}
\end{center}
\end{subfigure}
\caption{Non-monotonicity of the network value in Example~\ref{ex:starlike}.}
\label{fig:ex2}
\end{figure}

\noindent Now consider the network $g'$ (with the dashed edge in Figure~\ref{fig:ex2}). Let $\bar x_i=x$ for all $i\in N$, $s\in S$ be s.t. $s_i=y$ for $i=6$ and $s_j=x$ for $j\in N\setminus\{6\}$.
Similarly, let $t\in S$ be s.t. $t_i=y$ for $i\in\{1,2,3\}$ and $t_j=x$ for $j\in N\setminus\{1,2,3\}$.
Consider $\pi'$ given in the right-hand side Figure~\ref{fig:ex2}. The outcome $x$ is implemented for any signal realization, and thus the optimal value strictly increases to $V^\pi_4(g')=V^8_4=1$.
\hfill$\triangle$
\end{example}

Example \ref{ex:starlike} highlights the importance of the network structure for the sender's gain from persuasion. 
It also illustrates an important difference between our results and \cite{Babichenko21}: while they show that $V^n_k$ can be achieved if there are \textit{no} information-dominating pairs, the sender can achieve $V^n_k$ on $g'$ \textit{even though} receiver $4$ information-dominates receivers $1$, $2$, and $3$. More broadly, while they provide a ranking of communication networks based on information domination that is valid for any prior and utility of the sender, we fix a threshold objective on a network with local spillovers and study sequences of extensions, identifying link additions that achieve $V^n_k$ (sometimes despite information domination).

We can interpret Example \ref{ex:starlike} from two different angles. 
For the sender, nodes with many sources of information (information hubs) are difficult to persuade. To increase the probability of persuasion (and similarly to break information-dominating pairs), the sender can either try to cut social ties, or, alternatively, to encourage more communication.\footnote{Cutting social ties to the extent of creating singletons makes the network approach an empty one and if there are sufficiently many singletons, the sender can achieve $V^n_k$ \citep{kerman2025bayesian}.} While the former strikes as a polarizing approach, the latter is usually perceived as unifying and democratic. However, both can be equally beneficial to the sender. From the receivers' perspective, forming more links seems like a natural improvement as it allows access to more sources of information (e.g., following more people on $\mathbb X$). Nevertheless, our result implies that receiving information from multiple sources might harm the receivers when these sources are highly correlated.\footnote{Increasing concentration of media ownership~\citep*{vizcarrondo2013measuring,noam2016owns} suggests that news from different sources might be highly correlated. Some theoretical studies in different contexts~\citep*{colla2010information,currarini2020strategic} show how correlation between information sources can have a negative effect on a decision maker.}

In Example~\ref{ex:starlike}, receiver $5$ forming a link with receiver $6$ allows the sender to employ a rougher partition of the network and reduce the influence of the information hub. While receiver $5$ has the good intentions to ``reach across the aisle'', work for ``bipartisan support'', try to ``build a bridge'' to opponents, etc., he achieves the opposite effect in terms of receivers' welfare. This example shows how the sender can leverage to her advantage natural properties of social networks, which often exhibit high degrees of clustering (e.g., around opinion leaders~\citep{Jackson07}).

We mentioned that the class of stellar networks is much more general than that of stars, as they permit communication among peripheries. We now introduce \emph{halo networks} that lie conceptually between the two (see Figure \ref{fig:halo} -- left). Like a stellar network, a halo has one central receiver connected to all others; however, the periphery forms a circle. The halo retains a simple hierarchical structure: the only information-domination relationship is between the center and the peripheries. This captures common real-world scenarios, such as moderated group chats (in which participants interact in a cycle while a central admin observes everyone) or work groups (in which team members collaborate directly, but still report to a single~manager).

\begin{definition}
A network $g\in G(n)$ is a \emph{halo} if it has exactly $2(n-1)$ links and there exists a sequence of distinct receivers $k_1, \dots , k_{n-1} \in N$ for whom $g_{k_1k_2} = g_{k_2k_3} =\dots= g_{k_{n-2}k_{n-1}} = g_{k_{n-1}k_1}= 1$, and the remaining receiver has degree $n-1$; see Figure~\ref{fig:halo}.
\end{definition}

\begin{figure}[t]
\begin{minipage}{0.45\textwidth}
\begin{center}
\scalebox{.75}{
\begin{tikzpicture}
\Vertex[color=white,opacity=0.8,x=0,y=0]{1}
\Vertex[color=white,opacity=0.8,x=0,y=2]{2}
\Vertex[color=white,opacity=0.8,x=3,y=2]{3}
\Vertex[color=white,opacity=0.8,x=3,y=0]{7}
\Vertex[color=white,opacity=0.8,x=1.5,y=1]{4}

\Edge(1)(4)
\Edge(2)(4)
\Edge(3)(4)
\Edge(7)(4)
\Edge(1)(2)
\Edge(2)(3)
\Edge(3)(7)
\Edge(7)(1)

\end{tikzpicture}}
\end{center}
\end{minipage}
\begin{minipage}{0.45\textwidth}
    \begin{center}
\scalebox{.5}{
\begin{tikzpicture}
\Vertex[color=white,opacity=0.8,x=1.5,y=0]{1}
\Vertex[color=white,opacity=0.8,x=1.5,y=2]{2}
\Vertex[color=white,opacity=0.8,x=1.5,y=-3]{3}
\Vertex[color=gray,opacity=0.8,x=3,y=-3]{4}
\Vertex[color=black,opacity=0.8,x=-1.5,y=1]{5}
\Vertex[color=black,opacity=0.8,x=4.5,y=1]{6}
\Vertex[color=gray,opacity=0.8,x=0,y=-3]{7}

\Edge(1)(5)
\Edge(2)(5)
\Edge(3)(5)
\Edge(4)(5)
\Edge(7)(5)
\Edge(1)(6)
\Edge(2)(6)
\Edge(3)(6)
\Edge(4)(6)
\Edge(5)(6)
\Edge(7)(6)
\Edge(3)(4)
\Edge(3)(7)

\end{tikzpicture}}
\end{center}
\end{minipage}
\caption{A \textit{halo} network (left) with $n=5$; a \textit{constellation} network (right) with two centers (the black nodes), three nodes with depth $1$ (white), and two nodes with depth $2$ (gray).}
\label{fig:halo}
\end{figure}
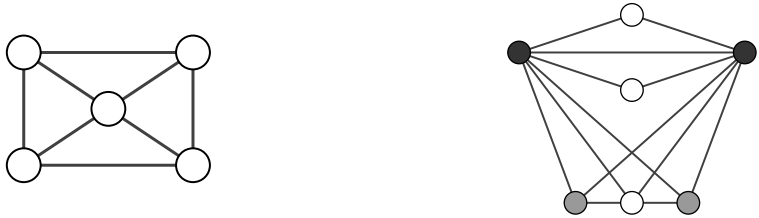

The construction in Theorem \ref{thm:genstar} proof can also be implemented for a network with a sufficiently large halo component; we provide the following corollary without proof. 

\begin{corollary}[Halo]
\label{cor:halo}
For any network $g\in G(N)$ containing a halo component with more than $k$ and less than $n$ receivers, there exists an extension that strictly benefits the sender and the extension has value $V^n_k$.
\end{corollary}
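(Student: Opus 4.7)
The plan is to mirror the two-step strategy used in the proof of Theorem~\ref{thm:genstar}, adapting it to the halo structure. In the first step I would show that the sender cannot attain $V^n_k$ on $g$; in the second step I would exhibit an extension $g'$ that contains no information-dominating pairs, and then invoke Lemma~\ref{cor:info} to conclude that $V^{\pi'}_k(g') = V^n_k$ for some $\pi' \in \Pi$, strictly exceeding the value on $g$.

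For the first step, note that the halo's center $c$ has $N_c(g)$ equal to the entire halo component, while every periphery $p$ has $N_p(g) = \{p,c,p_{-1},p_{+1}\}$; hence $c$ information-dominates each of the (at least $k$) peripheries. I would adapt the argument from Appendix~\ref{app:star} underlying Corollary~\ref{cor:star}: since the halo contains more than $k$ nodes, the critical mass is reachable through peripheries alone, and persuading the most-informed receiver in state $Y$ costs too much relative to the marginal gain of one additional vote. Formally, in an optimal experiment it is without loss of generality to send the same message to $c$ with probability one, which reduces the problem to persuading a critical mass out of $n-1$ effective receivers and yields $\sup_{\pi\in\Pi}V^\pi_k(g) \leq V^{n-1}_k < V^n_k$.

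For the second step, extend $g$ to $g'$ by adding, for some outside receiver $o$, a link from $o$ to each periphery of the halo while \emph{not} adding a link $(c,o)$. This enlarges every $N_p(g)$ by $o$ without enlarging $N_c(g)$ in the same direction, so $c$ no longer information-dominates any periphery in $g'$. Since the peripheries sit on a cycle of length greater than $k\geq\lfloor (n+1)/2\rfloor$ and each has two distinct circle-neighbors, the common new neighbor $o$ does not create any periphery-to-periphery domination either. As in the proof of Theorem~\ref{thm:genstar}, residual information-dominating pairs involving nodes outside the halo component may still need to be broken; this can be accomplished by further local edge additions tailored to the outside structure of $g$, exactly as in the depth-$1$ instance of the stellar argument. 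Once the extended network has no information-dominating pairs, Lemma~\ref{cor:info} delivers an experiment $\pi'\in\Pi$ with $V^{\pi'}_k(g') = V^n_k$.

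The main obstacle is the first step: a halo is not a stellar network, so Theorem~\ref{thm:genstar} does not literally apply, and the peripheral cycle introduces posterior dependencies absent in stars. However, the halo retains the key structural feature that drives the stellar upper bound, namely that a single node information-dominates all others in its component while contributing only one additional vote toward the critical mass. Formalizing the ``center is non-pivotal at optimum'' claim in the presence of the peripheral cycle is the delicate part, but the cycle only adds redundant information among peripheries without altering the center's informational primacy, so the adaptation from the star proof in Appendix~\ref{app:star} should go through essentially unchanged.
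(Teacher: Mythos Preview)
Your overall two-step structure matches the paper's intended argument (the paper simply states that ``the construction in Theorem~\ref{thm:genstar} proof can also be implemented''), and your Step~2 is essentially the depth-$1$ case of that construction, which is correct.

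The issue is Step~1. You propose to adapt the machinery of Appendix~\ref{app:star} (Lemmas~\ref{lem:center} and~\ref{lem:center2}), and you assert that the adaptation ``should go through essentially unchanged.'' This is not so for Lemma~\ref{lem:center2}: that proof works by replacing a single periphery~$\ell$'s message with a unique one, which in a star affects only $\ell$ and~$c$. In a halo, $\ell$'s message is also observed by its two cycle-neighbors, so the substitution alters \emph{their} association sets and posteriors; the value-preservation claim then requires additional work that you have not supplied. So the ``delicate part'' you flag is a genuine gap, not a formality.

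The fix is simpler than you think: the first paragraph of the proof of Theorem~\ref{thm:genstar} already applies verbatim to a halo. That argument uses only two facts about the component: it has more than~$k$ nodes, and its center observes every message in it. Both hold for a halo. By Lemma~\ref{lem:opt}, in any experiment attaining $V^n_k$, conditional on state~$Y$ there are at least $n-k$ receivers with posterior~$0$; since the halo has at least $k+1$ nodes, at least one of them lies in the halo, and the center (who sees that node's information set) must also have posterior~$0$. Hence conditional on~$Y$ the center never takes action~$x$, and the sender's value is at most $V^{n-1}_k<V^n_k$. No adaptation of Proposition~\ref{pro:star2} is needed.
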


The networks that we have considered so far (apart from the discussion following Corollary~\ref{cor:star}) have a single center. However, various applications suggest also looking at networks in which information flows through multiple influential receivers, which motivates the notion of a \emph{constellation network} -- see Figure~\ref{fig:halo} (right).

\begin{definition} \label{def:constellation}
For $g\in G(N)$, $T\subsetneq N$, and $ i \in T$, let $g^i_T$ be the subnetwork of $g$ whose set of nodes is $D(g^i_T)=(N\setminus T)\cup\{i\}$, and for every $j,k \in D(g^i_T)$ it holds that $g_{jk}=g^i_{T\;jk}$. A \emph{constellation} $g$ is a network with a set of \emph{centers} $M\subsetneq N$ s.t. for all $i\in M$, it holds that $\delta_i=n-1$ and $g^i_M$ is a stellar network.
\end{definition}

Notice that a stellar network is a constellation with $\vert M\vert=1$. 
For $i\in M$, the \emph{depth of the subnetwork} $g^i$ of a constellation is the longest directed path in the tree representing information domination that it induces. 
The \emph{depth of a constellation} is the maximum depth of a subnetwork $g^i_M$, for $i\in M$. 

\begin{proposition}[Constellation]\label{pro:const}
Consider a network $g\in G(N)$ with a constellation component $C$ that has depth $\ell\geq 1$ and more than $k$ receivers. Let $M$ be the set of centers of $C$. Suppose that $\vert N\setminus C\vert\geq\max\{\ell,\vert M\vert\}+1$.
Then there exists an extension of the network that strictly benefits the sender. 
Moreover, the sender's value for this extension is $V^n_k$.
\end{proposition}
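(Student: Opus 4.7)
The plan is to adapt the two-step structure of the proof of Theorem~\ref{thm:genstar}: first show that $V^\pi_k(g)<V^n_k$ for every $\pi\in\Pi$, and then exhibit an extension $g^+$ of $g$ and an experiment $\pi^+$ on $g^+$ with $V^{\pi^+}_k(g^+)=V^n_k$. Together with Proposition~\ref{pro:emptynetwork}, these two facts give the claimed strict improvement.

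For the upper bound, I would exploit that every center $i\in M$ satisfies $N_i(g)=C$, so Lemma~\ref{lem:symmetry} lets me restrict to experiments in which all centers receive the same message in every signal; they therefore behave as a single block of size $|M|$. Because each center observes every message produced inside $C$, this block plays exactly the same role as the single root in the first step of Theorem~\ref{thm:genstar}: persuading it in state $Y$ requires signals so informative that in an optimal experiment it is WLOG to assign the block the same message in both states, so it is persuaded only in signals whose critical-mass outcome is already secured by the non-block receivers. The hypothesis $|C|>k$ guarantees that enough non-block receivers live inside $C$ for the sender to reach critical mass without the block. Repeating the counting from the first step of Theorem~\ref{thm:genstar}, with a block of size $|M|$ in place of a single root, then yields the strict inequality $V^\pi_k(g)\le V^{n-|M|}_k<V^n_k$.

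For the extension I would use the $m=\max\{\ell,|M|\}+1$ outside receivers $p_1,\dots,p_m$ guaranteed by the hypothesis. For each depth $d\in\{1,\dots,\ell\}$, I connect $p_d$ to every depth-$d$ non-center of $C$; the depth of a non-center is well defined because the non-center skeleton is shared across all $g^i_M$, and this is exactly the link pattern that broke info-domination within the skeleton in the proof of Theorem~\ref{thm:genstar}. For the centers I attach each $i_j\in M$ to a subset of $\{p_1,\dots,p_m\}$ chosen so that the resulting neighborhoods $N^+_{i_j}$ become pairwise incomparable; the ``$+1$'' in the hypothesis supplies the extra receiver needed, for instance by reserving $p_m$ as a neighbor common to every center but to no non-center, which equips each center with an outside neighbor that no non-center possesses and thus helps break center-to-non-center info-dominance. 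After the local-repair casework around $N\setminus C$ already used in the proof of Theorem~\ref{thm:genstar} to avoid creating pairs involving nodes outside $C$, the extended network has no information-dominating pairs, and Lemma~\ref{cor:info} produces the required experiment $\pi^+$ with $V^{\pi^+}_k(g^+)=V^n_k$.

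The step I expect to be the main obstacle is the link assignment for the centers together with the local-repair casework. In Theorem~\ref{thm:genstar} the single center could simply be left disconnected from $N\setminus C$, trivialising both the center-to-center and center-to-non-center questions at once; with $|M|\ge 2$ mutually dominating centers, each center must now receive a differentiating outside edge, and doing so without recreating any center-to-depth-$d$ info-dominating pair is the combinatorial core of the argument. Verifying that the $\max\{\ell,|M|\}+1$ outside receivers suffice for all these purposes simultaneously, in every configuration of pre-existing edges among $N\setminus C$, is where the intricate casework lies, and it is precisely what drives the exact form of the bound in the hypothesis.
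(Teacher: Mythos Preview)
Your two-step plan matches the paper's approach exactly: first show $V^\pi_k(g)<V^n_k$ via the centers' omniscience within $C$, then extend $g$ so as to eliminate all information-dominating pairs and invoke Lemma~\ref{cor:info}. Your first-step argument via Lemma~\ref{lem:symmetry} (collapsing the centers into a single block) is a valid variant of what the paper does in one line by appealing to ``a similar argument as in the proof of Theorem~\ref{thm:genstar}.''

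The one place your sketch diverges is the link pattern for the centers, and your concrete hint there is not quite right. Reserving $p_m$ as a neighbor \emph{common to every center} handles center-vs-non-center but does nothing for center-vs-center; and if you then additionally give center $i_j$ its own anchor $p_j$ that coincides with the depth-$j$ anchor, $i_j$ will still dominate every depth-$j$ non-center (it sees all of $C$, $p_j$, and $p_m$, while the non-center sees only a subset of $C$ and $p_j$). The paper resolves this differently: it singles out one center $c$, leaves $c$ with \emph{no} $v_i$-link, assigns each remaining center a \emph{unique} $v_i$ with $i\ge 2$, and uses the extra node $u$ (your ``$+1$'') more elaborately---connecting it to $c$, to all the $v_i$, and to two specific non-centers of depths $1$ and $2$. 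You correctly flag this casework as the main obstacle; the idea you are missing is that the extra outside node should differentiate a \emph{single} center from the others, not be shared by all of them.
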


Since constellation networks are characterized by the presence of multiple centers, each of which information-dominates all the other receivers (including the other centers), the overall network does not form a directed tree (unlike stellar networks). Indeed, the information-domination relation among centers is mutual and creates cycles. However, for each center taken individually, the subnetwork obtained by removing the other centers forms a stellar network with an identical hierarchical structure rooted at that center. Proposition~\ref{pro:const} shows that such global domination by multiple receivers constrains the sender’s ability to persuade, preventing her from achieving the upper bound on the network value. However, extending the network by linking external receivers to the constellation in a way that breaks these mutual domination relations allows the sender to improve the value.

While constellations feature multiple dominating receivers within a single connected network (or component), a \emph{galaxy} takes this decentralization further by partitioning the network into fully disconnected components, each with its own locally dominating receiver. Unlike stellar and constellation structures, a galaxy does not require a hierarchical information flow within each component; it suffices that each component contains a receiver connected to all others in that component. This broader definition captures a wide range of fragmented communication settings in which the sender faces not just internal information constraints, but also structural barriers across subsets. It might represent segmented populations (geographic, social, or ideological) where individuals are influenced by local leaders, but are insulated from broader discourse. Galaxies, thus, serve as a useful benchmark for understanding how lack of inter-group communication can limit persuasion, and how adding targeted links across components can significantly enhance the sender’s gain from~persuasion.

\begin{definition} \label{def:galaxy}
    A \emph{galaxy} is a network consisting of (disjoint) components, each having a \emph{central node} that is connected to all other nodes at that component. The non-central nodes of a component are called \emph{peripheral nodes}.
\end{definition}

For example, a union of stellar components is a galaxy. Note that in a galaxy, there might be arbitrary links between peripheral nodes belonging to the same component.

\color{black}
\begin{theorem}[Galaxy]
    \label{thm:starlike}
    Suppose that $g\in G(N)$ is a galaxy with $\ell\ge 2$ components s.t.:
    \begin{inparaenum}[(i)]
    \item each component has a total number of nodes between $3$ and $\min\{k-1,n/2\}$;
    \item no subset of components has a union of size exactly $k$.
    \end{inparaenum}
    Then the sender's value is strictly less than $V^n_k$, and there exists an extension of $g$ in which the sender can achieve $V^n_k$.
\end{theorem}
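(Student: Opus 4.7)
The plan is to prove the two claims of the theorem separately: first the strict upper bound $V^{\pi}_k(g)<V^n_k$ for every experiment $\pi$, and then the existence of an extension attaining $V^n_k$. For the strict inequality, I would begin by applying Lemma~\ref{lem:symmetry} inside each component to restrict attention to experiments that respect the within-component symmetries, and then exploit the fact that the central node $c_i$ of each component $C_i$ information-dominates every peripheral node of $C_i$: the center's posterior is a strictly finer function of the joint component messages than any peripheral's. Condition~(i) guarantees that no single component reaches the critical mass, and condition~(ii) further rules out any purely component-aligned persuasion scheme that persuades exactly $k$ receivers in state $Y$. Hence, in state $Y$ any realization that reaches the critical mass must split at least one component. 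Inside such a split component, the center's aggregate posterior and the peripherals' individual posteriors are linked via Bayes' rule, and a careful account of these per-component Bayes-plausibility constraints, summed over all components, should produce a strict loss relative to the empty-network optimum $V^n_k$.

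For the extension, the plan is to construct an extension $g'\supsetneq g$ whose network has no information-dominating pair and then invoke Lemma~\ref{cor:info} to conclude $V^{\pi'}_k(g')=V^n_k$ for some $\pi'$. The construction adds cross-component edges: for each component $C_i$ and each peripheral $p\in C_i$, I would add an edge from $p$ to a carefully chosen node in a different component $C_j$. Each such edge immediately breaks the domination of $c_i$ over $p$, since $p$ now observes a message outside $N_{c_i}(g)$. The remaining task is to choose the cross-component neighbors consistently, so that no new dominating pair arises. Because $\ell\ge 2$, every component has at least three nodes, and every component has at most $n/2$ nodes, there should be enough room to assign these edges in a matching-like or round-robin fashion (possibly supplemented by a few extra links on the centers) to leave all pairs of neighborhoods incomparable; a finite case check then yields that $g'$ has no information-dominating pair.

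The main obstacle is the strict inequality. As Example~\ref{ex:starlike} illustrates, the mere presence of information-dominating pairs does not force the network value below $V^n_k$; what matters is the specific interplay between component sizes and the threshold $k$, encoded precisely by conditions~(i) and~(ii). Converting these combinatorial conditions into a tight Bayes-plausibility obstruction is delicate because the sender may design rich, component-coupling message structures. I would expect to control this by showing that any candidate optimal experiment must either over-commit in some split component (revealing too much about state $Y$ to its center) or under-commit in another (failing to reach $k$ persuaded receivers), with condition~(ii) ensuring that no component-aligned alternative can rescue the sender. The construction step is by contrast largely combinatorial, and its main subtlety is avoiding the introduction of new dominating pairs, which is why the lower bound of three nodes per component and the bound $n_i\le n/2$ appear in the hypotheses.
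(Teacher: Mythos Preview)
Your plan for the extension is essentially the paper's: add cross-component edges until no information-dominating pair remains and invoke Lemma~\ref{cor:info}. The paper implements this with a concrete greedy matching---sort the components by size, traverse \emph{all} nodes (centers included), and connect each node to the first still-unmatched node in a different component, with a clean-up step on the last component if the process stalls. Your ``matching-like or round-robin'' sketch is in the same spirit; just note that centers must also receive outside links, since a galaxy allows arbitrary edges among peripherals and hence a peripheral may already observe the entire component.

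The genuine gap is the strict inequality. You correctly flag it as the main obstacle, but your route through Lemma~\ref{lem:symmetry} plus a ``careful account of per-component Bayes-plausibility constraints'' does not land on the idea that actually closes the argument. Lemma~\ref{lem:symmetry} is a red herring here (galaxy components need have no nontrivial neighborhood symmetries). The paper instead argues by contradiction via Lemma~\ref{lem:opt}: if $V^n_k$ is attained, then every receiver's posterior lies in $\{0,1/2\}$, and conditional on $Y$ there is positive probability that \emph{exactly} $k$ receivers sit at $1/2$ while the remaining $n-k$ sit at $0$. Since posterior $0$ is fully revealing and each center observes every message in its component, a zero posterior anywhere in a component forces the center's posterior to $0$; Bayes-plausibility then forces every receiver to have posterior $0$ with the \emph{same} probability, which (together with the set inclusion just noted) means that with probability one the zero-posterior set is a union of whole components. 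Hence so is its complement of size exactly $k$, contradicting condition~(ii). What your sketch is missing is precisely this $\{0,1/2\}$ dichotomy from Lemma~\ref{lem:opt}, which converts the analytic obstruction you are searching for into a one-line combinatorial contradiction; without it, the ``delicate'' step you acknowledge remains unexecuted.
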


A special case of a network satisfying the conditions of Theorem~\ref{thm:starlike} is a disjoint union of equal-sized star components, except for the case $k=n/2$, $\ell=2$, for which one can straightforwardly check that the sender's value for the network is $V_k^n$.

Since no receiver in one component can observe or be influenced by receivers in another one, the sender cannot coordinate behavior across components to reach the critical mass. In particular, if no union of components adds up to exactly $k$ receivers (which is a knife-edge case in realistic settings), then even selecting the ``right'' set of receivers to persuade becomes structurally infeasible. However, Theorem~\ref{thm:starlike} shows that by connecting these components in a ``smart'' way (effectively breaking the isolation), the sender can eliminate the information-domination relationships.

The proof is constructive by traversing the nodes in each component of $g$ when the components are sorted in a weakly increasing order of their size. When we traverse a node that has not been yet connected to any nodes outside of its own component in $g$ -- we connect it to the next available node (where availability means we can connect a node only to a single node outside of its component in $g$). Condition $(i)$ ensures that the process might get stuck only while we traverse the last component, and even if it does, the network can be modified to remove all the information-dominating pairs.

Finally, we consider even more constrained networks for the sender, \emph{cluster networks}, in which each component is a cluster (a complete subnetwork). Like galaxies, cluster networks consist of disjoint subsets, but now each set exhibits the densest internal communication possible -- every receiver within a component observes the messages of all others. This structure is relevant for various applications. In political contexts, it resembles ideological clusters or partisan echo chambers in which individuals deliberate primarily within their in-group, reinforcing common beliefs. In consumer markets, cluster networks can arise when close groups strongly influence each other’s purchases; this makes persuasion especially difficult. Receivers are likely to interpret signals similarly and coordinate their actions, limiting the sender’s ability to exploit belief asymmetries. However, we show that adding cross-group links (e.g., through ads or referrals) can still strictly improve the sender’s value.

\color{black}
\begin{proposition}[Clusters]
    \label{pro:cliques}
    Consider a network $g$ consisting of $q\ge 2$ disjoint clusters of size $p\ge 2$ each, and assume $n/2<k<n$. Then there exists an extension of the network that strictly benefits the sender. Moreover, the sender's value for this extension is $V_k^n$.
\end{proposition}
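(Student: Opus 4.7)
I would prove Proposition~\ref{pro:cliques} in two stages: first, show that the sender's value on $g$ is strictly below $V^n_k$; second, exhibit an explicit extension of $g$ that contains no information-dominating pairs, so that Lemma~\ref{cor:info} yields value $V^n_k$ on the extension.

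For the first stage, I would invoke Lemma~\ref{lem:symmetry}: since all $p$ receivers inside a given cluster have identical neighborhoods, we can assume without loss that each cluster collectively receives a single message. The persuasion problem on $g$ then reduces to a \emph{meta-game} in which the $q$ clusters act as meta-receivers with effective threshold $k'=\lceil k/p\rceil$ and the same prior $\lambda^0$. This meta-game is isomorphic to private persuasion on the empty network with $q$ receivers and threshold $k'$, so by the formula recalled before Proposition~\ref{pro:emptynetwork} its value is $V^q_{k'}=\frac{q+k'}{k'}\lambda^0(X)$ (the hypothesis $\lambda^0(X)<\frac{k}{n+k}$ ensures the analogous condition for the meta-game). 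A direct comparison with $V^n_k=\frac{pq+k}{k}\lambda^0(X)$ shows $V^q_{k'}<V^n_k$ precisely when $pk'>k$, i.e., whenever $p\nmid k$; the assumptions $q\ge 2$, $p\ge 2$, and $n/2<k<n$ put us in this regime.

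For the second stage, I would construct the extension via a cyclic matching between clusters. Label the clusters $C_1,\ldots,C_q$ and the nodes within each cluster as $c^j_1,\ldots,c^j_p$, and for every $j$ add a perfect matching between $C_j$ and $C_{(j\bmod q)+1}$ that pairs $c^j_i$ with $c^{(j\bmod q)+1}_i$. In the extended network $g^+$, the neighborhood of $c^j_i$ consists of $C_j$ together with the one or two external nodes indexed by $i$. Two checks then complete the argument: (i) for $i\neq i'$ in the same cluster, $c^j_i$ and $c^j_{i'}$ share the internal neighborhood $C_j$ but have disjoint external neighbors, so no intra-cluster information-domination remains; (ii) the neighborhood of any node in $C_j$ contains all of $C_j$ (of size $p\geq 2$), whose members lie in no other cluster, ruling out cross-cluster domination as well. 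With no information-dominating pairs in $g^+$, Lemma~\ref{cor:info} yields $V^n_k$ on $g^+$, which together with the first stage constitutes a strict improvement.

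The main obstacle I anticipate is the first stage: beyond the arithmetic comparison of $V^q_{k'}$ and $V^n_k$, one must argue carefully that Lemma~\ref{lem:symmetry} is genuinely lossless for the sender on a cluster network and that the resulting meta-game is a bona fide empty-network persuasion problem with $q$ receivers. A secondary subtlety is the degenerate case $q=2$ of the extension, where the cyclic matching collapses into a single perfect matching and each node acquires exactly one external neighbor; the verifications above still apply but must be phrased without referring to two distinct external neighbors per node.
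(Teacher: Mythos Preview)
Your Stage~2 construction is exactly the paper's: for each index $i\in[p]$ the paper adds the cycle $(i,1)\text{--}(i,2)\text{--}\cdots\text{--}(i,q)\text{--}(i,1)$, which is the same edge set as your cyclic matchings between consecutive clusters, and your no-domination check mirrors the paper's case split. Your note on the degenerate case $q=2$ is correct and needs no separate argument.

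Your Stage~1 takes a genuinely different route. The paper argues by contradiction via Lemma~\ref{lem:opt}: if $V^n_k$ were achieved, then with positive probability exactly $k$ receivers have posterior $1/2$ and $n-k$ have posterior $0$; since cluster members see each other's fully-revealing messages, each cluster is all-zero or none-zero, forcing $p\mid(n-k)$ and hence $p\mid k$. Your reduction via Lemma~\ref{lem:symmetry} to a $q$-receiver empty-network problem with threshold $k'=\lceil k/p\rceil$ is cleaner and yields more --- the \emph{exact} value $V^q_{k'}$ on $g$, not merely an upper bound --- and your verification that $\lambda^0(X)<k/(n+k)$ implies $\lambda^0(X)<k'/(q+k')$ is sound.

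There is, however, a genuine gap at the final step of Stage~1. Your assertion that ``the assumptions $q\ge 2$, $p\ge 2$, and $n/2<k<n$ put us in this regime'' --- i.e., force $p\nmid k$ --- is false. Take $p=2$, $q=5$, $n=10$, $k=6$: then $p\mid k$, your own formula gives $V^5_3=\tfrac{8}{3}\lambda^0(X)=V^{10}_6$, and one checks directly that the sender achieves $V^{10}_6$ on five disjoint pairs by treating each pair as a single meta-receiver. So in this instance no extension can \emph{strictly} benefit the sender, and the proposition's first clause fails. The paper's proof has the identical gap: its sentence ``Since $n/2<k<n$, we must have that $n,k$ are relatively prime'' is simply incorrect. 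The defect is therefore in the statement rather than in your method; the natural repair is to add the hypothesis $p\nmid k$, under which both your argument and the paper's go through.
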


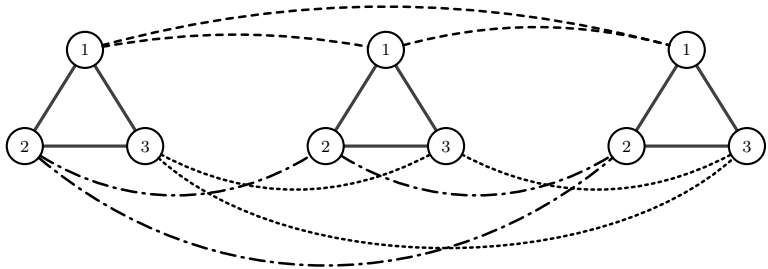
\begin{figure}[h]
\centering
\scalebox{.8}{
\begin{tikzpicture}

\Vertex[color=white,opacity=0.8,x=0.0,y=2.6,label=1]{L1}
\Vertex[color=white,opacity=0.8,x=-1.0,y=1.0,label=2]{L2}
\Vertex[color=white,opacity=0.8,x= 1.0,y=1.0,label=3]{L3}
\Edge(L1)(L2)\Edge(L1)(L3)\Edge(L2)(L3)

\Vertex[color=white,opacity=0.8,x=5.0,y=2.6,label=1]{M1}
\Vertex[color=white,opacity=0.8,x=4.0,y=1.0,label=2]{M2}
\Vertex[color=white,opacity=0.8,x=6.0,y=1.0,label=3]{M3}
\Edge(M1)(M2)\Edge(M1)(M3)\Edge(M2)(M3)

\Vertex[color=white,opacity=0.8,x=10.0,y=2.6,label=1]{R1}
\Vertex[color=white,opacity=0.8,x= 9.0,y=1.0,label=2]{R2}
\Vertex[color=white,opacity=0.8,x=11.0,y=1.0,label=3]{R3}
\Edge(R1)(R2)\Edge(R1)(R3)\Edge(R2)(R3)


\draw[dashed, line width=1.2pt, line cap=round]
  (L1) to[out=10, in=170, looseness=0.9] (M1);
\draw[dashed, line width=1.2pt, line cap=round]
  (M1) to[out=15, in=165, looseness=0.9] (R1);
\draw[dashed, line width=1.2pt, line cap=round]
  (L1) to[out=15, in=165, looseness=0.9] (R1);

\draw[dash pattern=on 7pt off 3pt on 1pt off 3pt, line width=1.2pt, line cap=round]
  (L2) to[out=-32, in=-148, looseness=0.95] (M2);
\draw[dash pattern=on 7pt off 3pt on 1pt off 3pt, line width=1.2pt, line cap=round]
  (M2) to[out=-32, in=-148, looseness=0.95] (R2);
\draw[dash pattern=on 7pt off 3pt on 1pt off 3pt, line width=1.2pt, line cap=round]
  (L2) to[out=-40, in=-140, looseness=1] (R2);

\draw[dotted, line width=1.2pt, line cap=round]
  (L3) to[out=-28, in=-152, looseness=0.95] (M3);
\draw[dotted, line width=1.2pt, line cap=round]
  (M3) to[out=-28, in=-152, looseness=0.95] (R3);
\draw[dotted, line width=1.2pt, line cap=round]
  (L3) to[out=-42, in=-138, looseness=0.8] (R3);

\end{tikzpicture}}
\vspace{-0.5cm}
\caption{The construction in the proof of Proposition \ref{pro:cliques} for a cluster network with three components, each having 3 receivers. The extension forms rings across clusters between receivers with the same index.}
\label{fig:clusters}
\end{figure}

First, note that Proposition \ref{pro:cliques} formalizes the intuition that segmentation in the society limits the persuasion power of the sender, as $V^n_k$ cannot be achieved. Second, and perhaps more interestingly, it shows that forming bridges between clusters in a particular way can overturn the limitation.
In the proof, the construction of the extension can be thought of as connecting \emph{role-equivalent} receivers across clusters (see Figure \ref{fig:clusters}), e.g., admins of different groups or managers across departments, which is also a feature in social media platforms that nudge similar users to connect. Technically, these bridges break information domination and reintroduce enough independence across individuals that the sender can design an experiment that achieves the quota with the optimal value. The proof makes this precise by adding rings that connect same-index members across cliques and showing that the resulting network has no information-dominating pairs; thus, $V^n_k$ is attainable.\footnote{Note that the construction could have been done by forming rings between receivers with different indexes as well; however, the rings would still need to consist of disjoint sets of receivers.} Hence, segmentation (dense within-group talk, sparse cross-group links) initially shields against finely targeted persuasion by coordinating beliefs and actions at the clique level.
However, selective bridging by organizational or platform mechanisms can increase manipulability by allowing the sender to employ private communication more effectively. That is, more communication between groups can make the population more manipulable.

\subsection{Networks with No Beneficial Extensions}
\label{sub:non}
To complement the analysis, we examine some cases in which extending the network \emph{cannot} benefit the sender. Specifically, we highlight configurations in which structural features of the network leave the sender with no room for strategic improvement, regardless of how the network is extended. It is obvious that starting from the empty network and adding any link to it will result in no benefit to the sender (and no harm to the receivers). In fact, in any network in which the value $V^n_k$ is achieved, adding a link will not be beneficial to the sender. When the sender cannot achieve $V^n_k$, it is more subtle to determine whether the network has a beneficial extension. In the following examples, we highlight cases in which, although the sender's value is below $V^n_k$, no extension benefits her. The examples share some structural features that limit the sender's flexibility: they involve information-domination relations that are highly concentrated and difficult to disrupt through addition of links. The sender’s disadvantage arises from the presence of clusters or dominant subsets that collectively suppress belief asymmetries. Examples \ref{ex:no_detrimental} and \ref{ex:no_detrimental2} illustrate these features in two distinct ways. Finding natural sufficient conditions that are not too restrictive s.t. the sender cannot benefit from extending the network remains an interesting open question.

\begin{example}\label{ex:no_detrimental}
Consider a network $g$ with a cluster $C$ of size $1\le \ell\le n-1$ ($n\ge 2$) s.t. each of the remaining $n-\ell$ nodes is connected to \emph{all} the nodes in $C$; see Figure~\ref{fig:ex3} (left).\footnote{A simple example of such $g$ with $\ell=1$ is a star.} Take any $k\geq \lfloor \frac{n+1}{2}\rfloor$. It follows from Lemma~\ref{lem:symmetry} that the problem is equivalent to communicating with $n-\ell+1$ receivers who are in a star network. By Proposition~\ref{pro:star2} in Appendix~\ref{ap:add}, the sender's value is bounded from above by $V_k^{n-\ell}$. 
Moreover, such utility is trivially achieved when there are no links between receivers outside of $C$, since the sender can treat $g\setminus C$ as the empty network. Hence, no extension of $g$ benefits the sender.
\end{example}

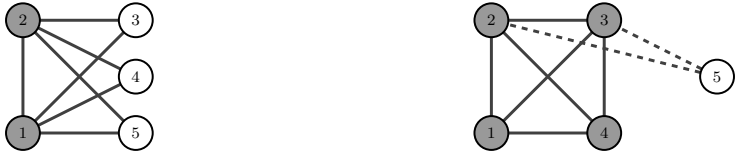
\begin{figure}[h]
\begin{minipage}{0.45\textwidth}
\begin{center}
\scalebox{.75}{
\begin{tikzpicture}
\Vertex[color=gray,opacity=0.8,x=0,y=0,label=1]{1} 
\Vertex[color=gray,opacity=0.8,x=0,y=2,label=2]{2} 
\Vertex[color=white, x=2,y=2,label=3]{3} 
\Vertex[color=white,x=2,y=1,label=4]{4} 
\Vertex[color=white,x=2,y=0,label=5]{5}
\Edge(1)(5)
\Edge(1)(2)
\Edge(1)(3)
\Edge(1)(4)
\Edge(2)(5)
\Edge(2)(3)
\Edge(2)(4)
\end{tikzpicture}}
\end{center}
\end{minipage}
\begin{minipage}{0.45\textwidth}
    \begin{center}
\scalebox{.75}{
\begin{tikzpicture}
\Vertex[color=gray,opacity=0.8,x=0,y=0,label=1]{1} 
\Vertex[color=gray,opacity=0.8,x=0,y=2,label=2]{2} 
\Vertex[color=gray,opacity=0.8,x=2,y=2,label=3]{3} 
\Vertex[color=white,x=4,y=1,label=5]{4} 
\Vertex[color=gray,opacity=0.8,x=2,y=0,label=4]{5}
\Edge(1)(5)
\Edge(1)(2)
\Edge(1)(3)
\Edge(2)(5)
\Edge(2)(3)
\Edge(5)(3)
\Edge[style={dashed}](4)(3)
\Edge[style={dashed}](4)(2)
\end{tikzpicture}}
\end{center}
\end{minipage}
\caption{An example of a constellation described in Example \ref{ex:no_detrimental}, where $\ell=2$ and $C$ consists of the gray receivers (left); an example of the network from Example \ref{ex:no_detrimental2}, where $C$ consists of the gray receivers (right).}
\label{fig:ex3}
\end{figure}

In Example~\ref{ex:no_detrimental}, the information domination originates in a set $C$ of receivers that (weakly) dominate each other and dominate all the other receivers. The key point is that adding any subset of links will not break any information-dominating pair. It happens since one can only add links between nodes that: \begin{inparaenum}[(i)]
    \item observe identical sets of channels in the original network; and
    \item do not information-dominate any nodes in the original network.
\end{inparaenum}

\begin{example}\label{ex:no_detrimental2}
Consider a network $g$ consisting of a cluster $C$ of size $n-1=2k-2\ge 2$ and an isolated receiver $i$; see Figure~\ref{fig:ex3} (right). Since the receivers in $C$ have the same posterior and $|C|\ge k$, optimal persuasion is equivalent to public persuasion of the $n-1$ receivers in $C$ while ignoring $i$. Therefore, the sender's value is $V^p=2\lambda^0(X)$. Adding links between $i$ and receivers in $C$ partitions the receivers in $C$ into two sets $C_1,C_2$ in terms of the messages that they observe: the receivers in $C_1$ observe exactly the messages of all the receivers in $C$, and the receivers in $C_2$ observe all messages.

If $|C_2|\ge k$, then the outcome is $x$ iff the members of $C_2$ take action $x$. 
If $|C_1|\ge k$, then the outcome is $x$ iff the members of $C_1$ take action $x$. 
In both cases, the sender's optimal expected utility is again $V^p=2\lambda^0(X)$. 
Finally, we might have $|C_1|=|C_2|=k-1$. 
This case is equivalent from the sender's perspective to persuasion with $n=3$, $k=2$ with one receiver (representing $C_2$) connected to the two other receivers (representing $C_1$ and $i$) -- i.e., the network is a star with a central node and two peripheral nodes. By Proposition~\ref{pro:star2}, the sender w.l.o.g. can ignore the central node (representing $C_2$), which yields an expected utility of $V_2^2=2\lambda^0(X)$.
\end{example}

Example~\ref{ex:no_detrimental2} is different in spirit from Example~\ref{ex:no_detrimental}. The proof essentially reduces the network to the case $n=3,k=2$ with the original network containing a single edge. In this case, it is possible to break an existing information-dominating pair by adding a link -- and still, it is not beneficial for the sender. The key intuitions behind Example~\ref{ex:no_detrimental2} are: (i) already in the original network, the information-domination relation imposes very strict limitations on the sender (in fact, removing the isolated node would result in a network corresponding to public signaling); and (ii) there is a lot of symmetry in the network. Due to the combination of these factors, in the original network there are only two types of receivers in terms of the information they observe; adding more links can only add at most one more type. Thus, the sender's ability to differentiate between the receivers is very limited, and adding more links does not help.

\section{Conclusion}\label{sec:con}
We study a Bayesian persuasion model with multiple binary-action receivers embedded in a network s.t. each receiver observes not only his own private message, but also those of his neighbors. The sender aims to persuade a critical mass of receivers to take a particular action. While conventional wisdom suggests that more communication among receivers should help them resist manipulation, we show that this is not necessarily the case. In fact, the sender can strictly benefit from increased communication among the receivers, even when it appears to further constrain her. 

Our analysis shows that the sender’s value does not change monotonically with the network density. We characterize a variety of network structures having relevant real-world applications for which adding links can unexpectedly improve the sender’s ability to persuade. These gains arise because local communication can either limit variation in receiver's beliefs or create opportunities for the sender to differentiate messages strategically, depending on the network’s structure. Moreover, we identify some networks in which no extension improves the sender’s value.

These findings have practical implications for settings in which influence and information design intersect with social structure. For instance, in marketing, firms may find that semi-connected consumer sets are more susceptible to persuasion than isolated or fully-connected ones. In political campaigns, the presence of localized communication (e.g., between voters of the same party) can enable politicians to more effectively tailor persuasive strategies. Similarly, platform designers or regulators should be aware that promoting more communication is not always protective for users: it can inadvertently increase their vulnerability to~manipulation.

\appendixtitleon
\appendixtitletocon
\begin{appendices}
\section{Additional Results}
\label{ap:add}
\subsection{Experiments Achieving the Upper Bound of the Value}
We state a technical lemma that summarizes the properties of optimal experiments that achieve the upper bound of the value, $V^n_k$. It turns out that conditional on the state being $X$, the outcome is $x$ with probability $1$. In addition, conditional on the state being $Y$, either exactly $k$ receivers have the uniform posterior and $n-k$ receivers have the zero posterior belief that the state is $X$, or all the receivers have the zero posterior belief that the state is $X$.

\begin{lemma}\label{lem:opt}
Consider an experiment achieving the sender's optimal value of $V_k^n=(n/k+1)\lambda^0(X)<1$ and let $\widetilde V=(n\lambda^0(X))/(k\lambda^0(Y))$. 
Then conditional on the true state being $X$, the outcome is $x$ with probability $1$. Moreover, conditional on the true state being $Y$, it holds that:
\begin{enumerate}[(i)]
    \item the sender's expected utility is $\widetilde V$;
    \item with probability $\widetilde V$, there are $k$ receivers with posterior belief 1/2 and $n-k$ receivers with posterior belief 0 that the state is $X$;
    \item with probability $1-\tilde V$, all receivers have posterior belief 0 that the state is $X$.
\end{enumerate}
\end{lemma}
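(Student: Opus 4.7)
The plan is to first establish the universal upper bound $V_k^\pi(g)\le V^n_k$ by a receiver-by-receiver Bayes' rule argument, and then to extract the structural characterization by observing that an experiment attaining this bound must make every inequality in the derivation tight. Because each step only uses the marginal distribution of receiver $i$'s information set $s_i(g)$, the argument is valid on any network.

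For each receiver $i$, let $T_i$ be the set of information sets on which $i$ accepts, and let $\pi_i(t\mid\omega)=\sum_{s\in S:\,s_i(g)=t}\pi(s\mid\omega)$. The acceptance condition $\lambda^{s,g}_i(X)\ge 1/2$ translates, via Bayes' rule, into $\pi_i(t\mid Y)\le\pi_i(t\mid X)\cdot\lambda^0(X)/\lambda^0(Y)$ for every $t\in T_i$. Summing over $t\in T_i$ yields $\Pr(a_i=x\mid Y)\le\Pr(a_i=x\mid X)\cdot\lambda^0(X)/\lambda^0(Y)\le\lambda^0(X)/\lambda^0(Y)$. Since the event $A_x:=\{s\in Z^g_x(\pi)\}$ requires at least $k$ acceptances, summing across $i$ yields $k\Pr(A_x\mid Y)\le\sum_i\Pr(a_i=x\mid Y)\le n\lambda^0(X)/\lambda^0(Y)$, so $\Pr(A_x\mid Y)\le\widetilde V$. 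Combined with $\Pr(A_x\mid X)\le 1$, this gives $V_k^\pi(g)\le\lambda^0(X)+\lambda^0(Y)\widetilde V=V^n_k$.

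Because the experiment attains $V^n_k$, every inequality above is an equality. The equality $\Pr(A_x\mid X)=1$ is the claim on state $X$, and $\Pr(A_x\mid Y)=\widetilde V$ gives $(i)$. Tightness of the per-receiver Bayes bound on each $t\in T_i$ forces the posterior to equal $1/2$ whenever $i$ accepts; tightness of $\sum_i\Pr(a_i=x\mid Y)\ge k\Pr(A_x\mid Y)$ forces exactly $k$ acceptances on every realization in $\{Y\}\cap A_x$. For the $n-k$ rejectors, the martingale identity $\mathbb{E}[\lambda^{s,g}_i(X)]=\lambda^0(X)$, combined with $\Pr(a_i=x)=2\lambda^0(X)$ and an accept-conditional posterior of $1/2$, forces $\mathbb{E}[\lambda^{s,g}_i(X)\mid a_i=y]=0$, using $\lambda^0(X)<1/2$ (which follows from $V^n_k<1$ and $k\le n$). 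Hence rejection implies a zero posterior, completing $(ii)$.

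Finally, $(iii)$ follows by decomposing $\sum_i\Pr(a_i=x\mid Y)=\mathbb{E}[N_x\mid Y]=\Pr(A_x\mid Y)\cdot\mathbb{E}[N_x\mid Y,A_x]+\Pr(A_x^c\mid Y)\cdot\mathbb{E}[N_x\mid Y,A_x^c]$, where $N_x$ is the number of acceptances. Substituting the tight values gives $k\widetilde V=k\widetilde V+(1-\widetilde V)\cdot\mathbb{E}[N_x\mid Y,A_x^c]$, forcing $\mathbb{E}[N_x\mid Y,A_x^c]=0$ because $\widetilde V<1$. Thus no one accepts on $\{Y\}\cap A_x^c$, and by the posterior dichotomy established for $(ii)$, all posteriors equal $0$. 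The delicate step is deriving $\Pr(A_x\mid Y)\le\widetilde V$ by conditioning on the state before summing, which yields a tighter bound than the $V_k^\pi(g)\le 2n\lambda^0(X)/k$ one would obtain by summing the unconditional $\Pr(a_i=x)\le 2\lambda^0(X)$; the remaining work is bookkeeping.
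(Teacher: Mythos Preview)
Your proof is correct and takes a genuinely different route from the paper's.

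The paper cites the formula $V_k^n=(n/k+1)\lambda^0(X)$ from prior work, then argues (i) via the law of total expectation. For (ii) and (iii) it passes to the empty network, applies a ``reveal-the-truth-in-state-$X$'' transformation to force every posterior in state $X$ to be at least $1/2$, and then runs a single martingale computation on the \emph{aggregate} $\sum_i\lambda_i^{s,g}(X)$ to show that the transformation was vacuous and that equality must hold everywhere. Your argument instead derives the bound $V_k^\pi(g)\le V_k^n$ from scratch as a chain of per-receiver Bayes inequalities and reads off the entire structure from tightness of each link; the martingale identity is applied receiver-by-receiver rather than in aggregate, and no auxiliary transformation is needed. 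One minor redundancy: tightness of the inequality $k\Pr(A_x\mid Y)\le\mathbb{E}[N_x\mid Y]$ already forces $N_x=0$ on $A_x^c\cap\{Y\}$ simultaneously with $N_x=k$ on $A_x\cap\{Y\}$, so your separate decomposition for (iii) is re-proving a fact you already have; but this does no harm. What your approach buys is self-containment (no external citation for the value formula, no transformation step to justify) and a cleaner logical skeleton; what the paper's approach buys is a more constructive flavour, since the transformation makes explicit \emph{why} sub-$1/2$ posteriors in state $X$ cannot occur at the optimum.
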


\begin{proof}
We know from~\cite{kerman2024persuading} that $V_k^n=(n/k+1)\lambda^0(X)$. Note that conditional on the state being $X$, the sender can get expected utility of at most $1$, which happens if whenever the state is $X$, at least $k$ receivers have posteriors $\lambda^{s,g}_i(X)\ge 1/2$. We deduce from the law of total expectation that the sender's expected utility conditional on the state being $Y$ is at most $\frac{V_k^n-1\cdot\lambda^0(X)}{1-\lambda^0(X)}=\frac{n\lambda^0(X)}{k\left(1-\lambda^0(X)\right)}$. Since the lemma statement describes how to achieve such utility, the first bullet follows.

To prove the second and third bullets, replace the given network with the empty network (imitating the signaling from the original network by sending private messages with all the required information to each receiver) and perform the following transformation on an experiment that achieves $V_k^n$: whenever the state is $X$ and a certain receiver gets a signal leading to a posterior probability for $X$ strictly smaller than $1/2$, reveal the true state to that receiver instead. After the transformation, all the receivers have posteriors $\lambda^{s,g}_i(X)\ge 1/2$ when the state is $X$. The transformation cannot affect uniform posteriors, but it can induce posteriors $\lambda^{s,g}_i(X)=0$. Therefore, it is not immediately clear that this transformation is w.l.o.g. However, if this transformation affects the experiment, then for some receiver it must induce with a positive probability a posterior $\lambda^{s,g}_i(X)=1$ when the state is $X$. We shall show that no such posteriors are induced. Note that the expectation of the posterior of each specific receiver is $\lambda^0$ by~\cite{kamenica2011bayesian}. Therefore, the expectation of the sum of all the receivers' posterior probabilities for $X$ is $n\lambda^0(X)$. We deduce from the first bullet that in any sender's optimal experiment in the empty network, the probability that at least $k$ receivers have posterior $\lambda^{s,g}_i(X)\ge 1/2$ conditional on the state being $Y$ is $\frac{n\lambda^0(X)}{k\left(1-\lambda^0(X)\right)}$. Thus, the (unconditional) expectation of the sum of all the receivers' posterior probabilities for $X$ is at least:

\begin{equation*}
    \frac{n\lambda^0(X)}{k\left(1-\lambda^0(X)\right)} \cdot \frac{1}{2}\cdot k \cdot \left(1-\lambda^0(X)\right) + \frac{1}{2}\cdot n \cdot \lambda^0(X) = n\lambda^0(X).
\end{equation*}

Hence, equality must hold, which implies that our process performed no transformations on the experiment at all (as there are no posteriors with $\lambda^{s,g}_i(X)=1$). This immediately implies the second and the third bullets.
\end{proof}

\subsection{Optimality in Networks with a Star Component}\label{app:star}
We prove that ignoring the central node in a sufficiently large star component cannot hurt the sender.

\begin{proposition}
    \label{pro:star2}
    Let $g\in G(N)$ have a star component with set of receivers $C$ and $\vert C\vert>k$. Let $g'=g\setminus \{c\}$, where $c$ is the central node of $C$. Then for any $\pi\in\Pi$, it holds that $V^\pi_k(g)=V^\pi_k(g')$.
\end{proposition}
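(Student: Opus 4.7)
The plan is to establish the equality by proving the two inequalities separately, treating the claim as one about the sender's achievable values on the two networks. The argument relies on two structural features: the assumption $|C|>k$, which guarantees that the peripherals $P=C\setminus\{c\}$ alone can deliver a critical mass (since $|P|\ge k$); and the fact that $c$ information-dominates every peripheral in $C$ in the sense of Definition~\ref{def:dominating}, since $c$'s neighborhood is all of $C$ while each peripheral's neighborhood is $\{c,p\}\subsetneq C$.

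For the inequality $V^{\pi}_k(g)\ge V^{\pi}_k(g')$, I take an arbitrary experiment $\pi'$ on $g'$ and lift it to an experiment $\pi$ on $g$ by assigning $c$ a single state-independent message in every realization. Under this lift, the extra message received by every peripheral carries no information, so each peripheral's posterior is determined entirely by its own private signal, exactly as in $g'$. Consequently the $n-1$ non-central receivers take the same actions in $g$ as they do in $g'$. The center $c$ may or may not be persuaded by observing the peripherals' signals, but any additional $x$-vote contributed by $c$ only helps reach the critical mass. This yields $V^{\pi}_k(g)\ge V^{\pi'}_k(g')$.

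For the harder direction $V^{\pi}_k(g)\le V^{\pi}_k(g')$, I take any $\pi$ on $g$ and build an experiment on $g'$ with at least as high a value. The idea is to show that w.l.o.g.\ the sender does not attempt to persuade $c$. Since $|P|\ge k$, the peripherals alone can deliver every critical mass reached under $\pi$ on $g$; and because $c$ information-dominates every peripheral, any posterior realization in which $c$'s belief crosses the $1/2$ persuasion threshold corresponds to a subevent in which a peripheral either already provides the necessary votes or can be made to do so. Concretely, I would replace $c$'s message space by a singleton (trivializing $c$'s information so that $c$ takes $y$ since $\lambda^0(X)<1/2$) and compensate by adjusting the peripherals' message distribution, using the slack $|P|\ge k$ to shift to peripherals the persuasion mass previously devoted to $c$. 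Marginalizing $c$'s trivial role then yields the experiment on $g'$.

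The main obstacle lies in the compensation step of the second direction: every peripheral observes $s_c$, so collapsing $c$'s message necessarily alters each peripheral's information set and may shift their posteriors. The cleanest way to carry this out is at the level of posterior distributions, \emph{à la} \cite{kamenica2011bayesian}'s concavification: one verifies that any Bayes-plausible distribution of peripherals' joint posteriors achievable under a $\pi$ that persuades $c$ with positive probability remains achievable under a modified $\pi'$ in which $c$ receives a trivial message, exploiting the information-domination of $c$ to argue that the ``budget'' used to persuade $c$ is never strictly necessary when there are at least $k$ peripherals available to absorb it.
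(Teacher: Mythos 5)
Your first direction is fine and matches what the paper uses implicitly: lifting an experiment from $g'$ to $g$ by sending $c$ a constant message leaves every peripheral's posterior unchanged, and $c$'s vote can only help. The gap is in the second direction, and it is twofold. First, your plan to ``replace $c$'s message space by a singleton (trivializing $c$'s information so that $c$ takes $y$)'' misreads the model: $c$'s information set is $s_c(g)=(s_j)_{j\in N_c(g)}$, so $c$ observes \emph{all} peripherals' messages regardless of what message it is sent itself. Collapsing $s_c$ does not make $c$ uninformed, and $c$ need not take $y$; conversely, it \emph{does} change every peripheral's information set, which is exactly the ``compensation'' problem you flag but do not solve. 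Appealing to concavification of the joint posterior distribution does not resolve it, because on a non-empty network the set of implementable joint posterior distributions is constrained by the spillover structure and is not characterized by Bayes-plausibility alone.

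Second, the quantity you propose to control --- the ``budget used to persuade $c$'' --- is not the real obstruction. Since $c$ sees everything, $c$ may be persuaded at no extra informational cost; the question is whether $c$'s $x$-vote is ever pivotal. The paper answers this with two explicit message relabelings. (i) Append $s_c$ to each peripheral's message and send $c$ a constant message: this is an information-preserving bijection on signals (every receiver's association set, hence posterior, is unchanged), so no compensation is needed at all. (ii) On the event where $c$ chooses $x$ but some peripheral $\ell$ chooses $y$, refine $\ell$'s message to be unique per information set of $c$; since $A^{\pi}_c(g,t)\subsetneq A^{\pi}_\ell(g,t)$, this shrinks $\ell$'s association set down to $c$'s, so $\ell$'s posterior rises to $c$'s level, which is at least $1/2$, and one checks the value cannot drop on the complementary event. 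After (i)--(ii), whenever $c$ votes $x$ so do all $|C|-1\geq k$ peripherals, hence $c$ is a dummy (using $|N\setminus C|<k$) and can be deleted. Without step (ii), your assertion that the peripherals alone can deliver every critical mass reached under $\pi$ is false for an arbitrary $\pi$: the persuaded set may consist of $c$ together with exactly $k-1$ peripherals, in which case $c$ is pivotal and naively deleting it strictly lowers the value.
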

\begin{proof}

\noindent We first define an \emph{anchor}, which will be used in the rest of the proof. 
\begin{definition} \label{def:acnhor}
Let $\pi\in\Pi$. 
A signal $s\in S^{\pi}$ is an \emph{anchor} if $\pi(s|X)\lambda^0(X)\geq\pi(s|Y)\lambda^0(Y)$. 
The set of all anchors is denoted by $An(\pi)$.
\end{definition}

Denote the center of the star component by $c\in N$.
We first show that transferring the information of $c$ to the periphery nodes leads to no loss of information to $c$, as he can uniquely reconstruct the information from the combination of messages of the peripheral nodes. 
This means that $c$ can be sent a single message in all signals without changing the value. 
\begin{lemma}\label{lem:center}
Let $g\in G(N)$ have a star component with set of receivers $C$ and $\vert C\vert>k$. 
For any $\hat \pi \in \Pi$ and $\hat s\in S^{\hat \pi}$ s.t. $\alpha^{\hat \pi,g}_c(\hat s)=x$, there exists $\pi \in S^\pi$ s.t. $|S^\pi_c|=1$ and $V^{\hat \pi}_k(g)= V^{\pi}_k(g)$.
\end{lemma}

\begin{proof} Note that for two anchors $s,t\in An(\hat\pi)$ with $s_c\neq t_c$, it holds that $A^{\hat \pi}_c(g,s)\cap A^{\hat \pi}_c(g,t)=\emptyset$. 
Let $S'\subseteq S$. 
Define a bijection $\tau:S^{\hat\pi}\rightarrow S'$ s.t. $\tau(s)=s'$ if 
\begin{inparaenum}[(i)]
\item $s'_c=x$, 
\item for $j\in C\setminus \{c\}$ it holds that $s'_j=(s_j,s_c)$, and 
\item for $\ell\in N\setminus C$ it holds that $s'_{\ell}=s_{\ell}$. 
\end{inparaenum}
That is, in signals in $S'$ the center always observes $x$ and the messages of nodes $C\setminus \{c\}$ are modified so that they contain the information previously provided by $c$ in signal $s$. 
So, the information that $c$ reveals to nodes in $C\setminus \{c\}$ is shifted to them while $c$ observes the same message in every signal. 

For any $\omega\in\Omega$ and $s'\in S'$ s.t. $\tau(s)=s'$, let $\pi\in\Pi$ be defined by $\pi(s'\vert\omega)=\hat\pi(\tau^{-1}(s')\vert\omega)$.
As the probabilities of corresponding signals are the same under $\pi$ as under $\hat\pi$ and $c$'s information under $\hat\pi$ is shifted to nodes in $C\setminus \{c\}$ under $\pi$ (which are observed by $c$), $c$'s action does not change. 
Moreover, the actions of nodes in $C\setminus \{c\}$ and in $N\setminus C$ do not change either. 
To see this, note that for any $i\in N$ and $t'\in A^{\pi}_i(g,s')$ there exists $t\in A^{\hat\pi}_i(g,s)$ s.t. $\tau(t)=t'$. 
Together with the definition of $\tau$, it implies that $\sum_{t'\in A^{\pi}_i(g,s')}\pi(t'\vert\omega)=\sum_{t\in A^{\hat\pi}_i(g,s)}\hat\pi(t\vert\omega)$. 
Thus, every node has the same posterior upon observing $s\in S^{\hat\pi}$ and $\tau(s)\in S^{\pi}$. Hence, $V^{\pi}_k(g)=V ^{\hat \pi}_k(g)$. 
\end{proof}

Next, we show that whenever there is an experiment $\hat \pi$ with a signal in which $c$ chooses $x$, there is another experiment which \emph{preserves the value} and in which in all signals where $c$ chooses $x$, all peripheral receivers also choose $x$.
\begin{lemma}\label{lem:center2}
Let $g\in G(N)$ have a star component with set of receivers $C$ and $\vert C\vert>k$.
For any $\hat \pi \in \Pi$ and $\hat s\in S^{\hat \pi}$ s.t. $\alpha^{\hat \pi,g}_c(\hat s)=x$, there exists $\pi \in S^\pi$ where for every $s\in S^\pi$ s.t. $\alpha^{\pi,g}_c(s)=x$ it holds that $\alpha^{\pi,g}_i(s)=x$ for all $i\in C$ and $V^{\hat \pi}_k(g)= V^{\pi}_k(g)$.
\end{lemma}
\begin{proof} Put simply, by means of unique messages, every peripheral node in $C$ can uniquely identify the signals in which $c$, which has \emph{strictly} more information than a peripheral node, chooses $x$. 
Hence, all peripheral nodes are persuaded whenever $c$ is.

Suppose that there is a signal $t\in S^{\hat \pi}$ in which $c$ chooses $x$. 
Hence, there is at least one anchor $s \in An(\hat \pi)$ with $s_i=t_i$ for all $i \in C$ and for every $r\in S^{\hat \pi}$ s.t. $r_i=s_i$ for all $i \in C$, $c$ also chooses $x$.
The action patterns of nodes in $C$ in such signals are:
\begin{inparaenum}[($a$)]
\item all $x$;
\item $c$ chooses $x$ and zero or more nodes in $C\setminus \{c\}$ choose $x$.
\end{inparaenum}

Consider case $(b)$. 
It must be true that if some node $\ell\in C\setminus \{c\}$ chooses $y$ this is because it associates $t$ with \emph{more} signals than $c$.
In other words, in all signals $r\in S^{\hat \pi}$ where $(r_{\ell},r_c)=(t_{\ell},t_c)$ node $\ell$ chooses $y$ and this \emph{includes} the signals in which $c$ \emph{does not} choose $x$. 
As this also includes the associated anchors, $A^{\hat\pi}_c(g,t)\subsetneq A^{\hat\pi}_{\ell}(g,t)$. 

Trivially, for every $s,t\in S^{\hat \pi}$ with $s_c\neq t_c$ and $i\in C$, it holds that $A^{\hat\pi}_i(g,s)\cap A^{\hat\pi}_i(g,t)=\emptyset$. Thus, if $c$ receives different messages in different signals, these signals belong to \emph{disjoint} association sets, and this observation holds for every $i\in C$.

Let $T$ be the set of signals in which receiver $\ell$ chooses $y$ and $c$ chooses $x$. 
That is, 
\begin{equation*}
T=\left\{t\in S^{\hat\pi}\vert\: \alpha^{\hat\pi,g}_{\ell}(t_{\ell}(g))=y \text{ and } \alpha^{\hat\pi,g}_c(t_c(g))=x\right\}.
\end{equation*}

\noindent Define a bijection s.t. for signals in $\hat\pi$ in which $\ell$ chooses $y$ and $c$ chooses $x$, the message of $\ell$ is changed to a \emph{unique} message per each information set of $c$ and keep all other messages the same. Formally, let $T'\subsetneq S$ and define $\phi:T\rightarrow T'$ s.t. for any $t\in T$ it holds that $\phi(t)=t'$ if $t'_{\ell}=t_c(g)\in S'_{\ell}\setminus S^{\hat\pi}_{\ell}$ and $t'_{-{\ell}}=t_{-{\ell}}$.
Now for any $\omega\in\Omega$ define a new experiment $ \pi\in\Pi$, which transforms the signals in $T$ according to $\phi$ and keeps other signals the same while preserving the probability weights:
\vspace{-0.25cm}
\begin{align*}
\pi\left(s'\vert\omega\right) = \begin{cases}
\hat\pi(s'\vert\omega) & \text{if } s'\in S^{\hat\pi}\setminus T,\\
\hat\pi(\phi^{-1}(s')\vert\omega) & \text{if } s'\in T.
\end{cases}
\vspace{-0.25cm}
\end{align*}

\noindent Let $s'\in S^{\pi}$ be s.t. $\phi(s)=s'$ for some $s\in T$. 
Then, 
\vspace{-0.25cm}
\begin{align*}
&\lambda^{s',g}_{\ell}(X)=\frac{\sum_{t'\in A^{\pi}_{\ell}(g,s')}\pi(t'\vert X)\lambda^0}{\sum_{\omega\in\Omega}\sum_{t'\in A^{\pi}_{\ell}(g,s')}\pi(t'\vert\omega)\lambda^0(\omega)}=\frac{\sum_{t'\in A^{\pi}_{\ell}(g,s')}\hat\pi(\phi^{-1}(t')\vert X)\lambda^0}{\sum_{\omega\in\Omega}\sum_{t'\in A^{\pi}_{\ell}(g,s')}\hat\pi(\phi^{-1}(t')\vert\omega)\lambda^0(\omega)}\\
&=\frac{\sum_{t\in A^{\hat\pi}_{\ell}(g,s)\cap A^{\hat\pi}_{c}(g,s)}\hat\pi(t\vert X)\lambda^0}{\sum_{\omega\in\Omega}\sum_{t\in A^{\hat\pi}_{\ell}(g,s)\cap A^{\hat\pi}_{c}(g,s)}\hat\pi(t\vert\omega)\lambda^0(\omega)}=\frac{\sum_{t\in A^{\hat\pi}_{c}(g,s)\subseteq T}\hat\pi(t\vert X)\lambda^0}{\sum_{\omega\in\Omega}\sum_{t\in A^{\hat\pi}_{c}(g,s) \subseteq T}\hat\pi(t\vert\omega)\lambda^0(\omega)}\geq\frac12,
\end{align*}

\noindent where $\phi(t)=t'$ and the third equality follows from the definition of $\phi$; $A^{\hat\pi}_{\ell}(g,s)\cap A^{\hat\pi}_c(g,s)=A^{\hat\pi}_c(g,s) \subseteq T$ follows from $A^{\hat\pi}_c(g,t)\subsetneq A^{\hat\pi}_{\ell}(g,t)$ and the inequality follows from the definition of case $(b)$.
Similarly, it holds that $\lambda^{s',g}_c(X)\geq 1/2$. 
This implies that in $\pi$ node $\ell$ will choose $x$ whenever $c$ chooses $x$ in $\pi$.
Additionally, node $c$ will still choose $x$ in the corresponding signals in $\hat\pi$ and $\pi$. 
Thus, the transformation does not change the action of $c$ in any signal, it only \emph{increases} the number of $x$ actions. 
Observe that for $s\in A^{\pi}_{\ell}(g,t)\setminus A^{\pi}_c(g,t)$ s.t. $t\in T$, it holds that $\alpha^{\hat\pi,g}_{\ell}(t_{\ell}(g))=y$ and the transformation will not decrease the value, as in such $s$ nodes $\ell$ and $c$ must already be choosing $y$. 
Hence, $V^{\pi}_k(g)=V ^{\hat \pi}_k(g)$. 
\end{proof}

\noindent By Lemma \ref{lem:center}, the center of a star component can get the same message in all signals without affecting the value. Moreover, by Lemma \ref{lem:center2}, whenever $c$ chooses $x$, \emph{all peripheral nodes} choose $x$. Hence, $c$'s action will \emph{never be decisive}, since $|N\setminus C|<k$. Even persuading all $N\setminus C$ will require persuading a member of $C$, which cannot be $c$ -- otherwise, there would be \emph{strictly more} receivers than the critical mass who choose $x$. The central node thus becomes a \emph{dummy player} who should never be persuaded.
\end{proof}

\section{Proofs} \label{ap:proofs}

\begin{proof}[Proof of Proposition \ref{pro:emptynetwork}]
We start by proving the upper bound. Let $\pi\in\Pi$.
For each $i\in N$, denote $c(i)=\vert S^{\pi}_i(g)\vert$. 
Let $R(i)=\left\{m^1_i,\ldots,m^{c(i)}_i\right\}\subseteq S_i$ be a set of distinct messages for $i$. 
For any $i\ne j$ in $N$, $q\in\left\{1,\ldots,c(i)\right\}$, and $q'\in\left\{1,\ldots,c(j)\right\}$, assume w.l.o.g.~that $m^q_i\neq m^{q'}_j$.

For each $i\in N$, let $\phi_i:S^{\pi}_i(g)\rightarrow R(i)$ be a bijection -- i.e., each \emph{information set} of $i$ is mapped to a unique \emph{message} in $R(i)$.
For each $\omega\in\Omega$ and $s'\in S$, define $\pi'\in\Pi$: 
\begin{align*}
\pi'\left(s'\vert\omega\right) = \begin{cases}
\pi(s\vert\omega) & \text{if } \phi_i(s_i(g))=s'_i,\quad\forall i\in N,\\
0 & \text{otherwise}.
\end{cases}
\vspace{-0.75cm}
\end{align*}
\noindent The definition of $\pi'$ implies that there is a bijection $\phi:S^{\pi}\rightarrow S^{\pi'}$ s.t. for each $i\in N$, $\phi(s)=s'$ iff $\phi_i(s_i(g))=s'_i$. 
Hence, $\pi'$ is an experiment under the empty network $g_0$. We want to show that the value of $\pi'$ under the empty network is equal to the value of $\pi$ under $g$ -- i.e., $V^{\pi'}_k(g_0)=V^{\pi}_k(g)$. 
It suffices to show that each receiver $i$ has the same posterior upon observing $s_i(g)$ under $\pi$ and observing $\phi_i(s_i(g))$ under $\pi'$. 
Let $s'\in S^{\pi'}$ be s.t. $s'_i\in\left\{m^1_i,\ldots,m^{c(i)}_i\right\}$. 
For any $\omega\in\Omega$:

\begin{align*}
\lambda^{s',g_0}_i(\omega)&=\frac{\sum_{s\in S^{\pi'}:s_i=s'_i}\pi'(s\vert\omega)\lambda^0(\omega)}{\sum_{\omega'\in\Omega}\sum_{s\in S^{\pi'}:s_i=s'_i}\pi'(s\vert\omega')\lambda^0(\omega')}=\frac{\sum_{s\in S^{\pi}:s_i(g)=\phi^{-1}\left(s'_i\right)}\pi(s\vert\omega)\lambda^0(\omega)}{\sum_{\omega'\in\Omega}\sum_{s\in S^{\pi}:s_i(g)=\phi^{-1}\left(s'_i\right)}\pi'(s\vert\omega')\lambda^0(\omega')}\\
&=\frac{\sum_{s\in A^{\pi}_i\left(g,\phi^{-1}(s')\right)}\pi(s\vert\omega)\lambda^0(\omega)}{\sum_{\omega'\in\Omega}\sum_{s\in A^{\pi}_i\left(g,\phi^{-1}(s')\right)}\pi(s\vert\omega')\lambda^0(\omega')}=\lambda^{\phi^{-1}(s'),g}_i(\omega). 
\end{align*}

\noindent Thus, for each $s\in S^{\pi}$, it holds that $\alpha^{\pi,g}(s)=\alpha^{\pi',g_0}(\phi(s))$. 
Hence, $V^{\pi'}_k(g_0)=V^{\pi}_k(g)$. 
Since any $\pi\in\Pi$ on some $g$ can be replicated on the empty network, $V^n_k\geq V^{\pi}_k(g)$.

To prove the lower bound on $V^{\pi}_k(g)$ for an optimal experiment $\pi$ on $g$, note that in any network, it is possible to imitate public signaling by transmitting the same message to all the receivers. Therefore, $V^\pi_k(g)\geq V^p$ for any optimal $\pi\in\Pi$.
\end{proof}
\begin{proof}[Proof of Lemma \ref{lem:symmetry}]
Let $\pi\in\Pi$ and let $g\in G(N)$ and $i,j\in N$ be s.t. $ N_i(g)= N_j(g)$. We shall construct $\pi'\in\Pi$ s.t. for any $s\in S^{\pi'}$, it holds that $s_i=s_j$, $s_l$ is always unchanged compared to $\pi$ for $l\ne i,j$, and $V^{\pi'}_k(g)=V^{\pi}_k(g)$. The lemma follows by repeated application of this argument.

First, note that since $N_i(g)=N_j(g)$, for any $s\in S^\pi$ we have $A^{\pi}_i(g,s)=A^{\pi}_j(g,s)$. 
Hence, $i$ and $j$ have the same posterior belief; i.e., for any $\omega\in\Omega$ and any $s\in S^{\pi}$, it holds that $\lambda^{s,g}_i(\omega)=\lambda^{s,g}_j(\omega)$. Let $\vert S^{\pi}_i\times S^{\pi}_j\vert=c$.
Let $R=\left\{m^1,\ldots,m^c\right\}$ be a set of distinct messages.  
Define a bijection $\phi:S^{\pi}_i\times S^{\pi}_j\rightarrow R$. 
That is, for any \emph{tuple} $(s_i,s_j),(t_i,t_j)\in S^{\pi}_i\times S^{\pi}_j$, it holds that $\phi(s_i,s_j)=\phi(t_i,t_j)$ iff $(s_i,s_j)=(t_i,t_j)$; each \emph{combination of messages} of $i$ and $j$ is mapped to a separate message in $R$.

Define $S'=\left\{s'\in S\vert\: s\in S^{\pi}, s'_{-ij}=s_{-ij}\text{, and }\phi(s_i,s_j)=s'_i=s'_j\in R\right\}$. In words, $S'$ consists of signals obtained by replacing the messages of $i$ and $j$ with identical messages in $R$ (for each distinct combination of messages) and leaving the the other receivers' messages unchanged, for each signal in $S^{\pi}$. 
Let $\tau:S^{\pi}\rightarrow S'$ be a bijection s.t. for any $s\in S^{\pi}$, we have $\tau (s)=s'$ if $\tau(s_i,s_j)=s'_i=s'_j$ and $s'_{-ij}=s_{-ij}$.

For every $s\in S^\pi$ and $\omega\in\Omega$, define $\pi'\left(\tau(s)\vert\omega\right)=\pi(s\vert\omega)$. Clearly, $\pi'$ is an experiment. Since the probabilities to send corresponding signals conditional on the state are the same under $\pi$ and $\pi'$, receivers $i$ and $j$ still have the same posterior belief under $\pi'$; i.e., for any $\omega\in\Omega$ and $s\in S^{\pi'}$, it holds that: $\lambda^{s,g}_i(\omega)=\lambda^{s,g}_j(\omega)$. 

Next, we show that for any $r\in N_i(g)$, $\omega\in\Omega$, and $s\in S^{\pi}$, we have $\lambda^{s,g}_r(\omega)=\lambda^{\tau(s),g}_r(\omega)$. 
Indeed, 
\vspace{-0.25cm}
\begin{align*}
\lambda^{s,g}_r(\omega)&=\frac{\sum_{t\in A^{\pi}_r(g,s)}\pi(t\vert\omega)\lambda^0(\omega)}{\sum_{\omega'\in\Omega}\sum_{t\in A^{\pi}_r(g,s)}\pi(t\vert\omega')\lambda^0(\omega')}=\frac{\sum_{t\in A^{\pi}_r(g,s)}\pi'(\tau(t)\vert\omega)\lambda^0(\omega)}{\sum_{\omega'\in\Omega}\sum_{t\in A^{\pi}_r(g,s)}\pi'(\tau(t)\vert\omega')\lambda^0(\omega')} \\
&=\frac{\sum_{t'\in A^{\pi'}_r(g,\tau(s))}\pi'(t'\vert\omega)\lambda^0(\omega)}{\sum_{\omega'\in\Omega}\sum_{t'\in A^{\pi'}_r(g,\tau(s))}\pi'(t'\vert\omega')\lambda^0(\omega')}=\lambda^{\tau(s),g}_r(\omega).
\vspace{-0.25cm}
\end{align*}

\noindent Finally, any $r\notin N_i(g)$ has the same posterior belief under $\pi$ and $\pi'$, as it is not affected by the transformation. 
Hence, $V^{\pi'}_k(g)=V^{\pi}_k(g)$. 
\end{proof}

\noindent\textbf{Lemma \ref{lem:circle} (restated).}
\emph{Let $g\in G(N)$ be a circle and $n>3$. 
Then there exists $\pi\in\Pi$ s.t. $V^{\pi}_k(g)=V^n_k$.}
\begin{proof}
Let $\widetilde V=(\lambda^0(X)/\lambda^0(Y))\cdot(n/k)$.
For all $i\in N$, define $T_i=\{i\:(\textrm{mod } n),(i+1)\:(\textrm{mod } n),\ldots,(i+n-1-k)\:(\textrm{mod } n)\}$.
For each $T_i$, define $s^{T_i}\in S$ s.t. $s^{T_i}_j=x$ for all $j\in T_i$ and $s^{T_i}_\ell=y$ for all $\ell\in N\setminus T_i$. 
Define $\bar x$ and $\bar y$ with $\bar x_i=x$ and $\bar y_i=y$ for all $i\in N$. 
Define further $\pi\in\Pi$ by setting $\pi(\bar x\vert X)=1$, $\pi\left(s^{T_i}\vert Y\right)=\widetilde V/n$ for all $i\in N$, and $\pi(\bar y\vert Y)=1-\widetilde V$.
That is, $\pi$ sends $x$ to all the receivers in state $X$ with probability $1$, and in state $Y$ either sends $x$ to exactly $k$ consecutively-indexed receivers picked uniformly at random or sends $y$ to all the receivers. Note that $\pi$ is a probability distribution, and the marginal posterior of each receiver upon observing $x$ is uniform.
This completes the proof, since $V^{\pi}_k(g)=\lambda^0(X)\cdot 1+\lambda^0(Y)\cdot \widetilde V=\frac{n+k}{k}\lambda^0(X)=V^n_k$.
\end{proof}
\medskip

\noindent\textbf{Proposition \ref{pro:pairs} (restated).}
\emph{Let $g\in G(N)$ (with $n>3$) be s.t. for all $i\in N$, it holds that $\delta_i\leq 1$. 
Let $\pi\in\Pi$ be optimal on $g$. 
Then either
\begin{inparaenum}[(i)]
\item $V^\pi_k(g)=V^n_k$ or
\item there exists $g'\supsetneq g$ and $\pi'\in\Pi$ s.t. $V^{\pi'}_k(g')>V^\pi_k(g)$. 
\end{inparaenum}}
\begin{proof}

\noindent Let $u_1-v_1$, $u_2-v_2$,..., $u_{\ell}-v_{\ell}$ be all the pairs of connected nodes in $g$ and let $w_1,\ldots,w_{n-2\ell}$ be the isolated nodes. Let us add the links $v_1-u_2$, $v_2-u_3$,..., $v_{\ell-1}-u_{\ell}$, $v_{\ell}-w_1$, $w_1-w_2$,..., $w_{n-2\ell-1}-w_{n-2\ell}$, $w_{n-2\ell}-u_1$. Then the network becomes a circle. From Lemma~\ref{lem:circle}, the sender's value is $V_k^n$. Therefore, either the sender's value on $g$ is also $V_k^n$, or we found a beneficial extension for the sender.
\end{proof}

\begin{remark}
To establish the existence of a network that does not achieve $V^n_k$, consider the case $\delta_i=1$ for each $i\in N$.
That is, the network consists of pairs, which implies that $n=2m$ for $m\in\mathbb N$. Let $k=2\ell+1$. By Lemma \ref{lem:opt} and symmetry, $V^\pi_k(g)<V^n_k$.
\end{remark}

\begin{proof}[Proof of Theorem \ref{thm:genstar}]
Let $r$ be the \emph{center} of $C$ -- i.e., $r$ is connected to all the other nodes in $C$. We first show that the sender's value is strictly below $V_k^n$. Indeed, assume it is not true. We have $V_k^n=(n/k+1)\lambda^0(X)<1$. By Lemma~\ref{lem:opt}, whenever the state is $Y$, at least $n-k$ receivers have posterior $\lambda^{s,g}_i(X)=0$ -- i.e., they know the true state. Since $C$ contains at least $k+1$ nodes and $r$ observes all the information available to the receivers in $C$, we deduce that whenever the state is $Y$, $r$ knows the true state with probability $1$. Hence, when analyzing the sender's expected utility conditional on the state being $Y$, one can ignore the receiver $r$. By Lemma~\ref{lem:opt}, the sender's value is at most $V_k^{n-1}=\left(\frac{n-1}{k}+1\right)\lambda^0(X)<V_k^n$, a contradiction. We shall now prove that there exists an extension in which there are no information-dominating pairs of receivers. By Lemma \ref{cor:info}, it implies that the sender can achieve the value of $V_k^n$.

Consider first the case $\ell=1$ (see Figure~\ref{fig:T1_2}). Let $P:=C\setminus\{r\}$. Then either $(i)$ $\vert P\vert=\vert N\setminus C\vert$ or $(ii)$ $\vert P\vert>\vert N\setminus C\vert$. 
In case $(i)$, consider $g'\supsetneq g$ s.t. each $i\in N\setminus C$ is connected to a unique node in $P$, and all the nodes in $N\setminus C$ are connected to each other.\footnote{Some nodes in $N\setminus C$ might be connected to each other already in $g$, but it does not affect us.} Then there are no information-dominating pairs in $g'$, as desired. In case $(ii)$, fix $j\in N\setminus C$. Consider $g'\supsetneq g$ s.t. each $i\in N\setminus (C\cup\{j\})$ is connected to a unique node in $P$, $j$ is connected to all the nodes in $P$ that are not connected to any node in $N\setminus C$, and all the nodes in $N\setminus C$ are connected to each other. Then there are no information-dominating pairs in $g'$.\footnote{It is clear when $\vert N\setminus C\vert>1$, and it also holds when $\vert N\setminus C\vert=1$, since in the latter case $N\setminus C=\{j\}$.}

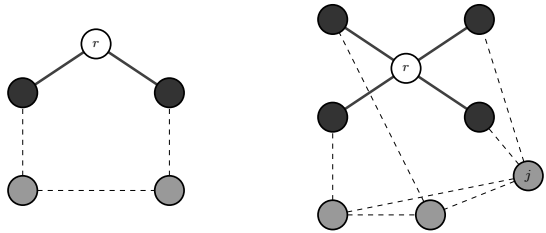
\begin{figure}[h]
\centering
\scalebox{.65}{
\begin{minipage}{0.45\textwidth}
\centering
\begin{tikzpicture}
\Vertex[color=black,opacity=0.8,x=0,y=0]{1}
\Vertex[color=black,opacity=0.8,x=3,y=0]{2}
\Vertex[color=white,opacity=0.8,x=1.5,y=1,label=$r$]{4}
\Vertex[color=gray,opacity=0.8,x=0,y=-2]{5}
\Vertex[color=gray,opacity=0.8,x=3,y=-2]{6}

\Edge(1)(4)
\Edge(2)(4)

\Edge[style={dashed, thin},color=black](5)(1)
\Edge[style={dashed, thin},color=black](6)(2)
\Edge[style={dashed, thin},color=black](5)(6)
\end{tikzpicture}
\end{minipage}%
\begin{minipage}{0.45\textwidth}
\centering
\begin{tikzpicture}[rectup/.style={rectangle, draw,dashed, font=\Large,minimum height=8mm,minimum width=40mm},rectdown/.style={rectangle, dotted, thick, draw, font=\Large,minimum height=8mm,minimum width=30mm}]

\Vertex[color=black,opacity=0.8,x=0,y=0]{1}
\Vertex[color=black,opacity=0.8,x=0,y=2]{2}
\Vertex[color=black,opacity=0.8,x=3,y=2]{3}
\Vertex[color=black,opacity=0.8,x=3,y=0]{7}
\Vertex[color=white,opacity=0.8,x=1.5,y=1,label=$r$]{4}

\Vertex[color=gray,opacity=0.8,x=0,y=-2]{5}
\Vertex[color=gray,opacity=0.8,x=2,y=-2]{6}
\Vertex[color=gray,opacity=0.8,x=4,y=-1.2,label=$j$]{8}

\Edge(1)(4)
\Edge(2)(4)
\Edge(3)(4)
\Edge(7)(4)

\Edge[style={dashed, thin},color=black](5)(1)
\Edge[style={dashed, thin},color=black](6)(2)
\Edge[style={dashed, thin},color=black](5)(6)
\Edge[style={dashed, thin},color=black](5)(8)
\Edge[style={dashed, thin},color=black](8)(6)
\Edge[style={dashed, thin},color=black](8)(3)
\Edge[style={dashed, thin},color=black](8)(7)

\end{tikzpicture}
\end{minipage}}
\caption{The extension of $g$ with $\ell= 1$ when $\vert P\vert=\vert N\setminus C\vert$ (left) and  $\vert P\vert>\vert N\setminus C\vert$ (right). The nodes in $P$ are in black, while the nodes in $N\setminus C$ are in gray; the original network is given by the solid lines, while its extension is given by the dashed lines.}
\label{fig:T1_2}
\end{figure}

Assume now $\ell>1$. Let us pick different nodes $v_1,\ldots,v_\ell$ in $N\setminus C$. Let us add an edge between each $v_i$ ($1\le i\le \ell$) and every node in $C$ having depth $i$ in the stellar component (see Figure~\ref{fig:T1} for illustration).

Consider now the nodes in $N\setminus(C\cup\{v_1,\ldots,v_\ell\})$ (if they exist). If $\vert N\setminus(C\cup\{v_1,\ldots,v_\ell\})\vert\ge 2$, let us connect the nodes of depth $1$ in $C$ to all the nodes of depth $2$ in $C$. In any case, connect all the nodes in $N\setminus(C\cup\{v_1,\ldots,v_\ell\})$ to each other and to $r,v_1,\ldots,v_\ell$, and connect each node in $N\setminus(C\cup\{v_1,\ldots,v_\ell\})$ to a different unique node in $C\setminus \{r\}$ (it is possible since $\vert C \vert > \vert N\setminus C \vert$). In the case $\vert N\setminus(C\cup\{v_1,\ldots,v_\ell\})\vert = 1$, make sure that the unique node in $N\setminus(C\cup\{v_1,\ldots,v_\ell\})$ is connected to a node of depth $2$ in $C$ (rather than an arbitrary node); see Figure~\ref{fig:T1_1} for illustration.

We claim that the above extension has no information-dominating pairs. Indeed, no node in $\{v_1,\ldots,v_\ell\}$ information-dominates another node in this set, as different $v_i$s are connected to nodes of different depths in $C$. Moreover, no node in $\{v_1,\ldots,v_\ell\}$ dominates a node in $C$ or $N\setminus(C\cup\{v_1,\ldots,v_\ell\})$ since the $v_i$s are not connected to~$r$.

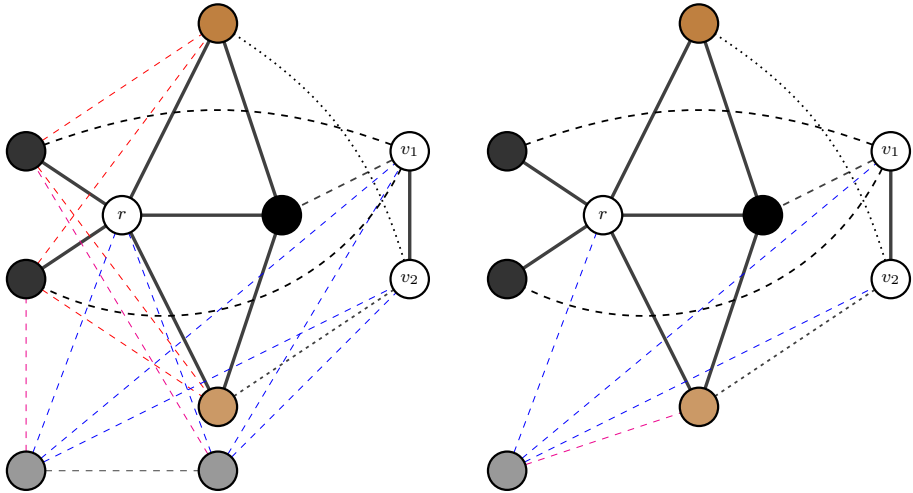
\begin{figure}[h]
\centering
\scalebox{.85}{
\begin{minipage}{0.45\textwidth}
\centering
\begin{tikzpicture}
\Vertex[color=black,opacity=0.8,x=0,y=0]{1}
\Vertex[color=black,opacity=0.8,x=0,y=2]{2} 
\Vertex[color=brown,x=3,y=4]{5} 
\Vertex[color=brown,opacity=0.8,x=3,y=-2]{3} 
\Vertex[color=white,opacity=0.8,x=1.5,y=1,label=$r$]{4}
\Vertex[color=black,x=4,y=1]{6}
\Vertex[color=white,x=6,y=2,label=$v_1$]{7}
\Vertex[color=white,x=6,y=0,label=$v_2$]{8}
\Vertex[color=gray,opacity=0.8,x=0,y=-3]{9} 
\Vertex[color=gray,opacity=0.8,x=3,y=-3]{10} 

\Edge(1)(4)
\Edge(4)(5)
\Edge(2)(4)
\Edge(3)(4)
\Edge(4)(6)
\Edge(3)(6)
\Edge(8)(7)
\Edge(6)(5)
\Edge[style={dashed, thick}](7)(6)
\Edge[style={dotted, thick}](8)(3)

\Edge[style={dashed, thin},color=red](1)(3)
\Edge[style={dashed, thin},color=red](1)(5)
\Edge[style={dashed, thin},color=red](2)(3)
\Edge[style={dashed, thin},color=red](2)(5)
\Edge[style={dashed, thin}](9)(10)
\Edge[style={dashed, thin},color=blue](9)(4)
\Edge[style={dashed, thin},color=blue](9)(7)
\Edge[style={dashed, thin},color=blue](9)(8)
\Edge[style={dashed, thin},color=blue](10)(4)
\Edge[style={dashed, thin},color=blue](10)(7)
\Edge[style={dashed, thin},color=blue](10)(8)
\Edge[style={dashed, thin},color=magenta](9)(1)
\Edge[style={dashed, thin},color=magenta](10)(2)

\path[-] (8) edge [bend right=20,style={dotted, thick}] node {} (5);
\path[-] (7) edge [bend right=20,style={dashed, thick}] node {} (2);
\path[-] (7) edge [bend left=45,style={dashed, thick}] node {} (1);
\end{tikzpicture}
\end{minipage}%
\begin{minipage}{0.54\textwidth}
\centering
\begin{tikzpicture}[rectup/.style={rectangle, draw,dashed, font=\Large,minimum height=8mm,minimum width=40mm},rectdown/.style={rectangle, dotted, thick, draw, font=\Large,minimum height=8mm,minimum width=30mm}]

\Vertex[color=black,opacity=0.8,x=0,y=0]{1}
\Vertex[color=black,opacity=0.8,x=0,y=2]{2} 
\Vertex[color=brown,x=3,y=4]{5}
\Vertex[color=brown,opacity=0.8,x=3,y=-2]{3} 
\Vertex[color=white,opacity=0.8,x=1.5,y=1,label=$r$]{4}
\Vertex[color=black,x=4,y=1]{6}
\Vertex[color=white,x=6,y=2,label=$v_1$]{7}
\Vertex[color=white,x=6,y=0,label=$v_2$]{8}
\Vertex[color=gray,opacity=0.8,x=0,y=-3]{9}

\Edge(1)(4)
\Edge(4)(5)
\Edge(2)(4)
\Edge(3)(4)
\Edge(4)(6)
\Edge(3)(6)
\Edge(8)(7)
\Edge(6)(5)
\Edge[style={dashed, thick}](7)(6)
\Edge[style={dotted, thick}](8)(3)

\Edge[style={dashed, thin},color=blue](9)(4)
\Edge[style={dashed, thin},color=blue](9)(7)
\Edge[style={dashed, thin},color=blue](9)(8)
\Edge[style={dashed, thin},color=magenta](9)(3)

\path[-] (8) edge [bend right=20,style={dotted, thick}] node {} (5);
\path[-] (7) edge [bend right=20,style={dashed, thick}] node {} (2);
\path[-] (7) edge [bend left=45,style={dashed, thick}] node {} (1);
\end{tikzpicture}
\end{minipage}}
\caption{The extension of the network $g$ with $\ell\ge 2$ when  $\vert N\setminus(C\cup\{v_1,\ldots,v_\ell\})\vert\ge 2$ (left) and  $\vert N\setminus(C\cup\{v_1,\ldots,v_\ell\})\vert=1$ (right). The nodes of depth $1$ in $C$ are given in black, the nodes of depth $2$ in 
$C$ -- in brown, the nodes in $N\setminus(C\cup\{v_1,\ldots,v_\ell\})$ -- in gray; the original network is represented by the solid lines, while the extension is represented by the dashed and the dotted lines.}
\label{fig:T1_1}
\end{figure}

Furthermore, no node in $N\setminus(C\cup\{v_1,\ldots,v_\ell\})$ can dominate another node in this set since the nodes in this set are connected to different unique nodes in $C\setminus \{r\}$; no node in $N\setminus(C\cup\{v_1,\ldots,v_\ell\})$ can dominate some $v_i$ since each $v_i$ is connected to at least two nodes in $C\setminus \{r\}$; and no node in $N\setminus(C\cup\{v_1,\ldots,v_\ell\})$ can dominate some node in $C$ as each node in $C$ observes the messages of at least three nodes in $C$ (including itself), except when $\vert N\setminus(C\cup\{v_1,\ldots,v_\ell\})\vert = 1$;\footnote{This is true even for nodes of depth $1$ since we connected them to all the nodes of depth $2$ if $\vert N\setminus(C\cup\{v_1,\ldots,v_\ell\})\vert\ge 2$.} and in the latter case, the unique node in $N\setminus(C\cup\{v_1,\ldots,v_\ell\})$ is connected only to a node of depth $2$ in $C$, which, indeed, observes the messages of at least three different nodes in $C$.

Finally, no node in $C$ dominates a node in $C\setminus\{r\}$ since nodes of different depths are connected to different nodes among $v_1,\ldots,v_\ell$, and $r$ is not connected to any $v_i$. $r$ is not dominated by nodes from $C\setminus\{r\}$ because if we added links between nodes in $C\setminus\{r\}$ then $r$ was connected to at least two nodes in $N\setminus(C\cup\{v_1,\ldots,v_\ell\})$. No node in $C$ dominates any $v_i$ since each $v_i$ is connected in $C$ exclusively to all the nodes of depth $i$ (of which there are at least two), which are not connected to each other. No node in $C\setminus\{r\}$ dominates a node in $N\setminus(C\cup\{v_1,\ldots,v_\ell\})$ since (unless the latter set is empty) either: (i) $\vert N\setminus(C\cup\{v_1,\ldots,v_\ell\}) \vert > 1$, and then each node in the set $N\setminus(C\cup\{v_1,\ldots,v_\ell\})$ is connected to at least one more node in this set; or (ii) $\vert N\setminus(C\cup\{v_1,\ldots,v_\ell\}) \vert = 1$, and then the unique node in $N\setminus(C\cup\{v_1,\ldots,v_\ell\})$ is not connected to any node from $C\setminus\{r\}$ that is connected to $v_1$; lastly, $r$ cannot dominate a node from $N\setminus(C\cup\{v_1,\ldots,v_\ell\})$ as $r$ is not connected to $v_1$. 
\end{proof}

\begin{proof}[Proof of Theorem \ref{thm:starlike}]
    We first claim that the sender cannot achieve expected utility of $V_n^k$ in $g$. Suppose, towards contradiction, that it was not the case. By Lemma~\ref{lem:opt}, under the sender's optimal experiment, every receiver has posterior either $\lambda^{s,g}_i(X)=0$ or $\lambda^{s,g}_i(X)=1/2$. If follows from Bayes-plausibility~\citep{kamenica2011bayesian} that all the receivers have the zero posterior $\lambda^{s,g}_i(X)=0$ with the same probability.
    
    Since the zero posterior is fully informative, and the central node of each component observes the posteriors of the peripheral nodes in that component -- if at least one node in a component has the zero posterior probability for the state to be $X$, then the central node of that component also has the zero posterior. Thus, the central node of each component has a weakly higher probability of having a zero posterior than each peripheral node in that component. To have equality, the probability that the central node of some component has the zero posterior and at least one node in the same component has a nonzero posterior must be zero. We know from Lemma~\ref{lem:opt} that with a positive probability, exactly $k$ nodes have nonzero posterior, which is not possible since there is no subset of the components with union of size $k$, a contradiction.

    It remains to prove that there is an extension of $g$ in which the sender can get $V_k^n$. We shall show that one can eliminate all the information-dominating pairs by adding links to $g$, which suffices by Lemma \ref{cor:info}. Sort the components in $g$ in a weakly decreasing order according to their size, and index the nodes in each component arbitrarily. Move iteratively over all the nodes according to the above ordering. During this traversal, whenever we reach a node that is only connected to nodes in its own component, connect it to the first node in the ordering which: (i) belongs to another component and (ii) has not been connected to any node outside its component yet.

    This process might get stuck. Suppose that it happens, and let $C$ be the component s.t.~the process got stuck while traversing it; if the process got stuck just after finishing traversing a component, let $C$ be the following component. Note that all nodes in the components after $C$ are already connected to nodes from other components. Moreover, $C$ must be the last component -- otherwise, all its nodes must already be connected to nodes from previous components, unless $C$ is the first component; but $C$ cannot be the first component, as it contains at most half of the~nodes.
    
    To complete the process, consider the nodes in $C$ that are not yet connected to any nodes in other components. Connect each of them to a node in the component $C'$ immediately preceding $C$ in the ordering s.t.~no two nodes in $C$ are connected to the same node in $C'$ (regardless of whether the links are new or existed before the process got stuck); it is possible since there are weakly more nodes in $C'$ than in $C$. We claim that in the resultant network there are no information-dominating pairs. 

    Indeed, nodes from the same component do not information-dominate each other since each of them is connected to a different node in another component. Peripheral nodes from different components also do not dominate each other. Indeed, consider two such peripheral nodes $u,v$. They are connected to different central nodes, unless, possibly, $u$ got connected during the process to a central node $c$ that had in its component $v$ (or vice versa). However, in this case, either: (i) $u$ precedes both $v$ and $c$ in the ordering, and when the process reached $v$ it could not connect it to $u$ (regardless of whether $v$ belongs to the last component in the ordering or not); or (ii) $c$ and $v$ precede $u$ in the ordering, and if $c$ got connected to $u$ then $v$ could not get connected to $u$. Neither way one of $u,v$ might dominate the other one.
    
    A peripheral node does not dominate any central nodes of other components. Indeed, it follows from the previous paragraph that if a peripheral node $u$ became connected to a central node $c$ of another component, then $u$ cannot be connected to any peripheral node in the component of $c$. Central nodes from different components do not dominate each other as if they became neighbors during the process, none of them could be connected to any of the peripheral nodes in the component of the other one. Finally, a central node $c$ cannot dominate a peripheral node $v$ from another component, since it could only happen if $c$ became a neighbor of both $v$ and the central node in the component of $v$, which is impossible.
\end{proof}

\begin{proof}[Proof of Proposition~\ref{pro:const}]
First notice that it follows from a similar argument as in the proof of Theorem~\ref{thm:genstar} that the sender cannot achieve $V^n_k$.
Let $\mu=\max\{\ell,\vert M\vert\}$ and take distinct $u,v_1,\ldots v_\mu\in N\setminus C$. Set $S:=M\cup\{u\}$. Take any $c\in M$ and consider the subnetwork $g^c_S$. Assume $\vert M\vert \ge 2$ (otherwise, we are done by Theorem~\ref{thm:genstar}).

Apply the same construction of the extension as in the proof of Theorem~\ref{thm:genstar}, using $v_1,\ldots,v_{\ell}$ as the corresponding nodes in the proof of Theorem~\ref{thm:genstar} if $\ell>1$, and using $v_1$ as $j$ in the proof of Theorem~\ref{thm:genstar} if $\ell=1$ and $\vert P\vert>\vert N\setminus C\vert$. Note that this construction does not break all information-dominating pairs, as: (i) the centers (i.e., the members of $M$) still information-dominate each other (possibly apart from $c$ that no longer needs to be dominated); and (ii) $u$ might still be dominated.

Assume first $\ell>1$. Let us further connect each member of $M\setminus\{c\}$ to a unique node among $v_2,\ldots,v_\mu$ and connect $u$ to $c,v_1,\ldots,v_{\mu}$, and to arbitrary $w_1,w_2\in C\setminus M$ of depths $1$ and $2$, respectively. Our modification of the extension in Theorem~\ref{thm:genstar} proof cannot make any node in $N$ dominated by nodes outside of $M\cup\{u,v_1,\ldots,v_{\mu},w_1,w_2\}$, since we do not affect the neighborhoods of nodes outside of this set. Moreover, the members of $M$ no longer information-dominate each other. They are also not dominated by any of the nodes $u,v_1,\ldots v_\mu$, since these nodes are connected to at most one member of $M$; and they are not dominated by $w_1,w_2$, as these two nodes are not connected to other nodes in $C$ of their own depth.

Moreover, $w_1,w_2$ do not dominate each other as they are connected to different $v_i$s. They are not dominated by $v_3,\ldots,v_{\mu}$ as they are not connected to them. They further cannot be dominated by members of $M$ as they are connected to $v_1$ or $v_2$, and cannot be dominated by $u,v_1,v_2$ as they are connected to all the members of~$M$.

The $v_i$s are not dominated by members of $M\setminus\{c\}$ since they are connected to $u$; they are not dominated by $c$ since they are not connected to it; they are not dominated by $w_1,w_2$ as $v_i$ can only be connected to $w_i$, and in this case $v_i$ is also connected to at least one more node of depth $i$; and they are not dominated by each other since they are connected to different nodes in $C$.\footnote{If $\ell\ge \vert M\vert$, they are connected to different members of $C\setminus M$; otherwise, $v_2,\ldots,v_\mu$ are connected to different members of $M$, and $v_1$ is the only one among the $v_i$s connected to a depth $1$ node in~$C$.}

Similarly, we get that $u$ is not dominated by members of $M\setminus\{c\}$ as they are not connected to $u$; it is not dominated by $c,v_1,\ldots,v_{\mu}$ since it is connected to all of them. Since we do not know that $u$ was not dominated after the construction from the proof of Theorem~\ref{thm:genstar}, we still need to make sure that it is not dominated by nodes outside of $M\cup\{u,v_1,\ldots,v_{\mu}\}$. Luckily, this is indeed the case, since the nodes in $C$ are connected to at most one of $v_1,\ldots,v_{\mu}$, and the nodes in $N\setminus (C\cup M\cup\{u,v_1,\ldots,v_{\mu}\})$ are connected to at most one node in $C\setminus M$.

Moreover, any node $w\notin M\cup\{u,v_1,\ldots,v_{\mu},w_1,w_2\}$ remains undominated. Indeed, $w$ is not connected to members of $M\setminus\{c\}$, is connected to at most one of $w_1,w_2$ (unlike $u,c$), and is connected to $c$ (unlike $v_1,\ldots,v_{\mu}$). Therefore, the resulting network does not have any information-dominating pairs. It follows from Lemma \ref{cor:info} that the sender can achieve $V^n_k$, as desired.

It remains to consider the case $\ell=1$. As no node in $g\setminus C$ is connected to a member of $M$ after the extension from Theorem~\ref{thm:genstar} proof, one can just further connect $u,v_2,\ldots,v_{\mu=\vert M\vert + 1}$ to another member of $M$ and connect every pair among $u,v_1,\ldots,v_{\mu}$ to break all the information-dominating pairs without creating new ones.
\end{proof}

\begin{proof}[Proof of Proposition \ref{pro:cliques}]
    We first claim that the sender's value for $g$ is strictly smaller than $V_k^n=(n/k+1)\lambda^0(X)<1$. By Lemma~\ref{lem:opt}, there is a positive probability that exactly $k$ receivers have posterior $\lambda^{s,g}_i(X)\ge 1/2$ and the remaining $n-k$ receivers have posterior $\lambda^{s,g}_i(X)=0$. Since the receivers belonging to the same cluster observe the posteriors of each other, and the zero posterior $\lambda^{s,g}_i(X)=0$ truthfully reveals the state -- we have that for any signal realization and every cluster, either all of the receivers in it have the zero posterior or none of the receivers in it have the zero posterior. Therefore, $p\mid n-k$. However, we also have $p\mid pq=n$, implying $p\mid k$. Since $n/2<k<n$, we must have that $n,k$ are relatively prime, a contradiction.

    It remains to show that there is an extension of $g$ in which the sender can achieve $V_k^n$. Let us index the receivers by pairs $(i,j)\in [p]\times [q]$, where $j$ is the index of the cluster and $i$ is the index of the receiver inside its cluster. Let us add, for each $i\in [p]$, the following $q$ edges to $g$: $(i,1)-(i,2),\; (i,2)-(i,3),\; \ldots,\; (i,q-1)-(i,q),\; (i,q)-(i,1)$. That is -- we add circles connecting all the receivers having the same index within their clusters. We claim that there are no information-dominating pairs of receivers in the resultant network, which completes the proof by Lemma \ref{cor:info}. Indeed, consider two receivers $(i_1,j_1)\neq (i_2,j_2)$. We claim that $(i_1,j_1)$ does not information-dominate $(i_2,j_2)$. Indeed, if $j_1=j_2$ (i.e., the receivers belong to the same cluster), then $(i_2,j_2)$ observes the message sent to $(i_2, j_2 + 1 \mod q)$, while $(i_1,j_1)$ does not. If $i_1=i_2$ (i.e., the receivers have the same index within their clusters), then  $(i_2,j_2)$ observes the message sent to $(i_2 + 1 \mod p, j_2)$, while $(i_1,j_1)$ does not. Finally, if $i_1\neq i_2$ and $j_1\neq j_2$, then $(i_1,j_1)$ does not observe the message sent to $(i_2,j_2)$.
\end{proof}

\end{appendices}

{\bf Acknowledgments.} We are grateful to Vyacheslav Arbuzov, Itai Arieli, Yakov Babichenko, Dirk Bergemann, Matteo Bizzarri, Manuel Foerster, Arda Gitmez, Kevin Hasker, Jean-Jacques Herings, András Kálecz-Simon, Dominik Karos, Frederic Koessler, László Kóczy, Friederike Mengel, Ronald Peeters, Mikl\'{o}s Pint\'{e}r, Agnieszka Rusinowska, Dov Samet, Sudipta Sarangi, Maxim Senkov, Inbal Talgam-Cohen, Frank Thuijsman, Péter Vida, and Kemal Y{\i}ld{\i}z for their valuable comments and suggestions. We thank the participants of the HUN-REN Center for Economic and Regional Studies KTI seminar and the European Meeting on Game Theory 2025~(SING20).

\bibliography{sources}

\begin{thebibliography}{}

\bibitem[\protect\citeauthoryear{Ali and Miller}{Ali and Miller}{2016}]{ali2016ostracism}
Ali, S.~N. and D.~A. Miller (2016).
\newblock Ostracism and forgiveness.
\newblock {\em American Economic Review\/}~{\em 106\/}(8), 2329--2348.

\bibitem[\protect\citeauthoryear{Alonso and C{\^a}mara}{Alonso and C{\^a}mara}{2016}]{alonso2016persuading}
Alonso, R. and O.~C{\^a}mara (2016).
\newblock Persuading voters.
\newblock {\em American Economic Review\/}~{\em 106\/}(11), 3590--3605.

\bibitem[\protect\citeauthoryear{Anderson and Holt}{Anderson and Holt}{1997}]{anderson1997information}
Anderson, L.~R. and C.~A. Holt (1997).
\newblock Information cascades in the laboratory.
\newblock {\em The American economic review\/}, 847--862.

\bibitem[\protect\citeauthoryear{Angeletos and Pavan}{Angeletos and Pavan}{2004}]{angeletos2004transparency}
Angeletos, G.-M. and A.~Pavan (2004).
\newblock Transparency of information and coordination in economies with investment complementarities.
\newblock {\em American Economic Review\/}~{\em 94\/}(2), 91--98.

\bibitem[\protect\citeauthoryear{Anunrojwong and Sothanaphan}{Anunrojwong and Sothanaphan}{2018}]{anunrojwong2018naive}
Anunrojwong, J. and N.~Sothanaphan (2018).
\newblock {Naive Bayesian learning in social networks}.
\newblock In {\em Proceedings of the 2018 ACM Conference on Economics and Computation}, pp.\  619--636.

\bibitem[\protect\citeauthoryear{Arieli and Babichenko}{Arieli and Babichenko}{2019}]{Arieli19}
Arieli, I. and Y.~Babichenko (2019).
\newblock {Private Bayesian persuasion}.
\newblock {\em Journal of Economic Theory\/}~{\em 182}, 185--217.

\bibitem[\protect\citeauthoryear{Athanassopoulos}{Athanassopoulos}{2000}]{athanassopoulos2000customer}
Athanassopoulos, A.~D. (2000).
\newblock Customer satisfaction cues to support market segmentation and explain switching behavior.
\newblock {\em Journal of business research\/}~{\em 47\/}(3), 191--207.

\bibitem[\protect\citeauthoryear{Babichenko, Talgam-Cohen, Xu, and Zabarnyi}{Babichenko et~al.}{2021}]{Babichenko21}
Babichenko, Y., I.~Talgam-Cohen, H.~Xu, and K.~Zabarnyi (2021).
\newblock {Multi-Channel Bayesian Persuasion}.
\newblock {\em arXiv preprint arXiv:2111.09789\/}.

\bibitem[\protect\citeauthoryear{Backstrom and Leskovec}{Backstrom and Leskovec}{2011}]{backstrom2011supervised}
Backstrom, L. and J.~Leskovec (2011).
\newblock Supervised random walks: predicting and recommending links in social networks.
\newblock In {\em Proceedings of the fourth ACM international conference on Web search and data mining}, pp.\  635--644.

\bibitem[\protect\citeauthoryear{Bala and Goyal}{Bala and Goyal}{2000}]{bala2000noncooperative}
Bala, V. and S.~Goyal (2000).
\newblock A noncooperative model of network formation.
\newblock {\em Econometrica\/}~{\em 68\/}(5), 1181--1229.

\bibitem[\protect\citeauthoryear{Bardhi and Guo}{Bardhi and Guo}{2018}]{Bardhi18}
Bardhi, A. and Y.~Guo (2018).
\newblock Modes of persuasion toward unanimous consent.
\newblock {\em Theoretical Economics\/}~{\em 13\/}(3), 1111--1149.

\bibitem[\protect\citeauthoryear{Benevento, Aloini, Roma, and Bellino}{Benevento et~al.}{2025}]{benevento2025impact}
Benevento, E., D.~Aloini, P.~Roma, and D.~Bellino (2025).
\newblock {The impact of influencers on brand social network growth: Insights from new product launch events on Twitter}.
\newblock {\em Journal of Business Research\/}~{\em 189}, 115123.

\bibitem[\protect\citeauthoryear{Bikhchandani, Hirshleifer, and Welch}{Bikhchandani et~al.}{1992}]{bikhchandani1992theory}
Bikhchandani, S., D.~Hirshleifer, and I.~Welch (1992).
\newblock A theory of fads, fashion, custom, and cultural change as informational cascades.
\newblock {\em Journal of political Economy\/}~{\em 100\/}(5), 992--1026.

\bibitem[\protect\citeauthoryear{Bresman, Birkinshaw, and Nobel}{Bresman et~al.}{1999}]{bresman1999knowledge}
Bresman, H., J.~Birkinshaw, and R.~Nobel (1999).
\newblock Knowledge transfer in international acquisitions.
\newblock {\em Journal of international business studies\/}~{\em 30\/}(3), 439--462.

\bibitem[\protect\citeauthoryear{Broderick, Greenley, and Mueller}{Broderick et~al.}{2007}]{broderick2007behavioural}
Broderick, A.~J., G.~E. Greenley, and R.~D. Mueller (2007).
\newblock The behavioural homogeneity evaluation framework: Multi-level evaluations of consumer involvement in international segmentation.
\newblock {\em Journal of International Business Studies\/}~{\em 38\/}(5), 746--763.

\bibitem[\protect\citeauthoryear{Burt}{Burt}{1992}]{Burt92}
Burt, R.~S. (1992).
\newblock {\em Structural holes}.
\newblock Harvard University Press.

\bibitem[\protect\citeauthoryear{Burt}{Burt}{2004}]{Burt04}
Burt, R.~S. (2004).
\newblock Structural holes and good ideas.
\newblock {\em American Journal of Sociology\/}~{\em 110\/}(2), 349 – 399.

\bibitem[\protect\citeauthoryear{Candogan}{Candogan}{2019}]{Candogan19}
Candogan, O. (2019).
\newblock Persuasion in networks: Public signals and k-cores.
\newblock New York, NY, USA. Association for Computing Machinery.

\bibitem[\protect\citeauthoryear{Candogan, Guo, and Xu}{Candogan et~al.}{2020}]{Candogan2020}
Candogan, O., Y.~Guo, and H.~Xu (2020).
\newblock On information design with spillovers.
\newblock {\em Available at SSRN 3537289\/}.

\bibitem[\protect\citeauthoryear{Chan, Gupta, Li, and Wang}{Chan et~al.}{2019}]{Chan19}
Chan, J., S.~Gupta, F.~Li, and Y.~Wang (2019).
\newblock Pivotal persuasion.
\newblock {\em Journal of Economic Theory\/}~{\em 180}, 178--202.

\bibitem[\protect\citeauthoryear{Church and Gandal}{Church and Gandal}{1992}]{church1992network}
Church, J. and N.~Gandal (1992).
\newblock Network effects, software provision, and standardization.
\newblock {\em The journal of industrial economics\/}, 85--103.

\bibitem[\protect\citeauthoryear{Colla and Mele}{Colla and Mele}{2010}]{colla2010information}
Colla, P. and A.~Mele (2010).
\newblock Information linkages and correlated trading.
\newblock {\em The Review of Financial Studies\/}~{\em 23\/}(1), 203--246.

\bibitem[\protect\citeauthoryear{Colman and Rouzies}{Colman and Rouzies}{2019}]{colman2019postacquisition}
Colman, H.~L. and A.~Rouzies (2019).
\newblock Postacquisition boundary spanning: A relational perspective on integration.
\newblock {\em Journal of Management\/}~{\em 45\/}(5), 2225--2253.

\bibitem[\protect\citeauthoryear{Cooperman}{Cooperman}{2024}]{cooperman2024bloc}
Cooperman, A.~D. (2024).
\newblock Bloc voting for electoral accountability.
\newblock {\em American Political Science Review\/}~{\em 118\/}(3), 1222--1239.

\bibitem[\protect\citeauthoryear{Corten and Buskens}{Corten and Buskens}{2010}]{corten2010co}
Corten, R. and V.~Buskens (2010).
\newblock {Co-evolution of conventions and networks: An experimental study}.
\newblock {\em Social Networks\/}~{\em 32\/}(1), 4--15.

\bibitem[\protect\citeauthoryear{Currarini, Ursino, and Chand}{Currarini et~al.}{2020}]{currarini2020strategic}
Currarini, S., G.~Ursino, and A.~Chand (2020).
\newblock Strategic transmission of correlated information.
\newblock {\em The Economic Journal\/}~{\em 130\/}(631), 2175--2206.

\bibitem[\protect\citeauthoryear{de~Fine~Licht}{de~Fine~Licht}{2014}]{de2014transparency}
de~Fine~Licht, J. (2014).
\newblock Transparency actually: how transparency affects public perceptions of political decision-making.
\newblock {\em European Political Science Review\/}~{\em 6\/}(2), 309--330.

\bibitem[\protect\citeauthoryear{Dong, Tang, Chawla, Lou, Yang, and Wang}{Dong et~al.}{2015}]{dong2015inferring}
Dong, Y., J.~Tang, N.~V. Chawla, T.~Lou, Y.~Yang, and B.~Wang (2015).
\newblock Inferring social status and rich club effects in enterprise communication networks.
\newblock {\em PloS one\/}~{\em 10\/}(3), e0119446.

\bibitem[\protect\citeauthoryear{Drago, Mengel, and Traxler}{Drago et~al.}{2020}]{Drago20}
Drago, F., F.~Mengel, and C.~Traxler (2020).
\newblock Compliance behavior in networks: Evidence from a field experiment.
\newblock {\em American Economic Journal: Applied Economics\/}~{\em 12\/}(2), 96–133.

\bibitem[\protect\citeauthoryear{Egorov and Sonin}{Egorov and Sonin}{2020}]{Egorov19}
Egorov, G. and K.~Sonin (2020).
\newblock Persuasion on networks.
\newblock Technical report, National Bureau of Economic Research.

\bibitem[\protect\citeauthoryear{Eguia}{Eguia}{2011}]{eguia2011voting}
Eguia, J.~X. (2011).
\newblock Voting blocs, party discipline and party formation.
\newblock {\em Games and economic behavior\/}~{\em 73\/}(1), 111--135.

\bibitem[\protect\citeauthoryear{Galeotti and Goyal}{Galeotti and Goyal}{2010}]{galeotti2010law}
Galeotti, A. and S.~Goyal (2010).
\newblock The law of the few.
\newblock {\em American Economic Review\/}~{\em 100\/}(4), 1468--92.

\bibitem[\protect\citeauthoryear{Galperti and Perego}{Galperti and Perego}{2025}]{Galperti23}
Galperti, S. and J.~Perego (2025).
\newblock Games with information constraints: seeds and spillovers.
\newblock {\em Theoretical Economics\/}~{\em 20}, 667–711.

\bibitem[\protect\citeauthoryear{Gatewood}{Gatewood}{1984}]{gatewood1984cooperation}
Gatewood, J.~B. (1984).
\newblock {Cooperation, competition, and synergy: information-sharing groups among Southeast Alaskan Salmon seiners}.
\newblock {\em American Ethnologist\/}~{\em 11\/}(2), 350--370.

\bibitem[\protect\citeauthoryear{Gavazza and Lizzeri}{Gavazza and Lizzeri}{2009}]{gavazza2009transparency}
Gavazza, A. and A.~Lizzeri (2009).
\newblock Transparency and economic policy.
\newblock {\em The Review of Economic Studies\/}~{\em 76\/}(3), 1023--1048.

\bibitem[\protect\citeauthoryear{Goeree, Riedl, and Ule}{Goeree et~al.}{2009}]{goeree2009search}
Goeree, J.~K., A.~Riedl, and A.~Ule (2009).
\newblock In search of stars: Network formation among heterogeneous agents.
\newblock {\em Games and Economic Behavior\/}~{\em 67\/}(2), 445--466.

\bibitem[\protect\citeauthoryear{Gormley and Murphy}{Gormley and Murphy}{2008}]{gormley2008exploring}
Gormley, I.~C. and T.~B. Murphy (2008).
\newblock Exploring voting blocs within the irish electorate: A mixture modeling approach.
\newblock {\em Journal of the American Statistical Association\/}~{\em 103\/}(483), 1014--1027.

\bibitem[\protect\citeauthoryear{Grimmer, Marble, and Tanigawa-Lau}{Grimmer et~al.}{2025}]{grimmer2025measuring}
Grimmer, J., W.~Marble, and C.~Tanigawa-Lau (2025).
\newblock Measuring the contribution of voting blocs to election outcomes.
\newblock {\em The Journal of Politics\/}~{\em 87\/}(3), 905--921.

\bibitem[\protect\citeauthoryear{Gupta, Goel, Lin, Sharma, Wang, and Zadeh}{Gupta et~al.}{2013}]{gupta2013wtf}
Gupta, P., A.~Goel, J.~Lin, A.~Sharma, D.~Wang, and R.~Zadeh (2013).
\newblock Wtf: The who to follow service at twitter.
\newblock In {\em Proceedings of the 22nd international conference on World Wide Web}, pp.\  505--514.

\bibitem[\protect\citeauthoryear{Hansen}{Hansen}{1999}]{hansen1999search}
Hansen, M.~T. (1999).
\newblock The search-transfer problem: The role of weak ties in sharing knowledge across organization subunits.
\newblock {\em Administrative science quarterly\/}~{\em 44\/}(1), 82--111.

\bibitem[\protect\citeauthoryear{Jackson and Rogers}{Jackson and Rogers}{2007}]{Jackson07}
Jackson, M.~O. and B.~W. Rogers (2007).
\newblock Meeting strangers and friends of friends: How random are social networks?
\newblock {\em The American Economic Review\/}~{\em 97\/}(3), 890--915.

\bibitem[\protect\citeauthoryear{Jankowski, Br{\'o}dka, and Hamari}{Jankowski et~al.}{2016}]{jankowski2016picture}
Jankowski, J., P.~Br{\'o}dka, and J.~Hamari (2016).
\newblock A picture is worth a thousand words: an empirical study on the influence of content visibility on diffusion processes within a virtual world.
\newblock {\em Behaviour \& Information Technology\/}~{\em 35\/}(11), 926--945.

\bibitem[\protect\citeauthoryear{Kamenica and Gentzkow}{Kamenica and Gentzkow}{2011}]{kamenica2011bayesian}
Kamenica, E. and M.~Gentzkow (2011).
\newblock Bayesian persuasion.
\newblock {\em American Economic Review\/}~{\em 101\/}(6), 2590--2615.

\bibitem[\protect\citeauthoryear{Karamched, Stolarczyk, Kilpatrick, and Josic}{Karamched et~al.}{2020}]{karamched2020bayesian}
Karamched, B., S.~Stolarczyk, Z.~P. Kilpatrick, and K.~Josic (2020).
\newblock Bayesian evidence accumulation on social networks.
\newblock {\em SIAM journal on applied dynamical systems\/}~{\em 19\/}(3), 1884--1919.

\bibitem[\protect\citeauthoryear{Katona, Zubcsek, and Sarvary}{Katona et~al.}{2011}]{katona2011network}
Katona, Z., P.~P. Zubcsek, and M.~Sarvary (2011).
\newblock Network effects and personal influences: The diffusion of an online social network.
\newblock {\em Journal of marketing research\/}~{\em 48\/}(3), 425--443.

\bibitem[\protect\citeauthoryear{Katz and Shapiro}{Katz and Shapiro}{1994}]{katz1994systems}
Katz, M.~L. and C.~Shapiro (1994).
\newblock Systems competition and network effects.
\newblock {\em Journal of economic perspectives\/}~{\em 8\/}(2), 93--115.

\bibitem[\protect\citeauthoryear{Kerman and Tenev}{Kerman and Tenev}{2021}]{kerman2021persuading}
Kerman, T. and A.~P. Tenev (2021).
\newblock Persuading communicating voters.
\newblock {\em Available at SSRN 3765527\/}.

\bibitem[\protect\citeauthoryear{Kerman, Herings, and Karos}{Kerman et~al.}{2024}]{kerman2024persuading}
Kerman, T.~T., P.~J.-J. Herings, and D.~Karos (2024).
\newblock Persuading sincere and strategic voters.
\newblock {\em Journal of Public Economic Theory\/}~{\em 26\/}(1), e12671.

\bibitem[\protect\citeauthoryear{Kerman and Tenev}{Kerman and Tenev}{2025}]{kerman2024information}
Kerman, T.~T. and A.~P. Tenev (2025).
\newblock Information design for weighted voting.
\newblock {\em Economic Theory\/}~{\em 79}, 809--852.

\bibitem[\protect\citeauthoryear{Kerman, Tenev, and Tsodikovich}{Kerman et~al.}{2025}]{kerman2025bayesian}
Kerman, T.~T., A.~P. Tenev, and Y.~Tsodikovich (2025).
\newblock Bayesian persuasion in networks: Divisibility and network irrelevance.
\newblock {\em Available at SSRN 5131130\/}.

\bibitem[\protect\citeauthoryear{Levy}{Levy}{2007}]{levy2007decision}
Levy, G. (2007).
\newblock Decision making in committees: Transparency, reputation, and voting rules.
\newblock {\em American economic review\/}~{\em 97\/}(1), 150--168.

\bibitem[\protect\citeauthoryear{Liporace}{Liporace}{2021}]{liporace2021persuasion}
Liporace, M. (2021).
\newblock Persuasion in networks.
\newblock {\em Working Paper\/}.

\bibitem[\protect\citeauthoryear{Marmaros and Sacerdote}{Marmaros and Sacerdote}{2006}]{Marmaros06}
Marmaros, D. and B.~Sacerdote (2006, 02).
\newblock {How Do Friendships Form?}
\newblock {\em The Quarterly Journal of Economics\/}~{\em 121\/}(1), 79--119.

\bibitem[\protect\citeauthoryear{Mathevet, Perego, and Taneva}{Mathevet et~al.}{2020}]{Mathevet2020}
Mathevet, L., J.~Perego, and I.~Taneva (2020).
\newblock On information design in games.
\newblock {\em Journal of Political Economy\/}~{\em 128\/}(4), 1370--1404.

\bibitem[\protect\citeauthoryear{Mathiesen, Jamtveit, and Sneppen}{Mathiesen et~al.}{2010}]{mathiesen2010organizational}
Mathiesen, J., B.~Jamtveit, and K.~Sneppen (2010).
\newblock Organizational structure and communication networks in a university environment.
\newblock {\em Physical Review E—Statistical, Nonlinear, and Soft Matter Physics\/}~{\em 82\/}(1), 016104.

\bibitem[\protect\citeauthoryear{Mengel, Friedman, and Kov{\'a}{\v{r}}{\'\i}k}{Mengel et~al.}{2024}]{mengel2024influence}
Mengel, F., D.~Friedman, and J.~Kov{\'a}{\v{r}}{\'\i}k (2024).
\newblock Influence in social networks.
\newblock {\em Available at SSRN 5059047\/}.

\bibitem[\protect\citeauthoryear{Molavi and Jadbabaie}{Molavi and Jadbabaie}{2011}]{molavi2011aggregate}
Molavi, P. and A.~Jadbabaie (2011).
\newblock Aggregate observational distinguishability is necessary and sufficient for social learning.
\newblock In {\em 2011 50th IEEE Conference on Decision and Control and European Control Conference}, pp.\  2335--2340. IEEE.

\bibitem[\protect\citeauthoryear{Noam}{Noam}{2016}]{noam2016owns}
Noam, E.~M. (2016).
\newblock {\em {Who Owns the World's Media? Media Concentration and Ownership around the World}}.
\newblock Oxford University Press.

\bibitem[\protect\citeauthoryear{Noble and Jones}{Noble and Jones}{2006}]{noble2006role}
Noble, G. and R.~Jones (2006).
\newblock The role of boundary-spanning managers in the establishment of public-private partnerships.
\newblock {\em Public administration\/}~{\em 84\/}(4), 891--917.

\bibitem[\protect\citeauthoryear{Nora and Winter}{Nora and Winter}{2024}]{nora2024exploiting}
Nora, V. and E.~Winter (2024).
\newblock Exploiting social influence in networks.
\newblock {\em Theoretical Economics\/}~{\em 19\/}(1), 1--27.

\bibitem[\protect\citeauthoryear{Sun, Schram, and Sloof}{Sun et~al.}{2023}]{sun2023public}
Sun, J., A.~J. Schram, and R.~Sloof (2023).
\newblock Public persuasion in elections: Single-crossing property and the optimality of censorship.
\newblock {\em Available at SSRN 4028840\/}.

\bibitem[\protect\citeauthoryear{Taneva}{Taneva}{2019}]{Taneva19}
Taneva, I. (2019).
\newblock Information design.
\newblock {\em American Economic Journal: Microeconomics\/}~{\em 11\/}(4), 151--85.

\bibitem[\protect\citeauthoryear{Titova}{Titova}{2022}]{titova2022persuasion}
Titova, M. (2022).
\newblock Persuasion with verifiable information.
\newblock Technical report, Working~Paper.

\bibitem[\protect\citeauthoryear{Vizcarrondo}{Vizcarrondo}{2013}]{vizcarrondo2013measuring}
Vizcarrondo, T. (2013).
\newblock Measuring concentration of media ownership: 1976--2009.
\newblock {\em International Journal on Media Management\/}~{\em 15\/}(3), 177--195.

\bibitem[\protect\citeauthoryear{Wang}{Wang}{2013}]{Wang13}
Wang, Y. (2013).
\newblock Bayesian persuasion with multiple receivers.
\newblock {\em Available at SSRN: https://ssrn.com/abstract=2625399\/}.

\bibitem[\protect\citeauthoryear{Williams}{Williams}{2012}]{williams2012we}
Williams, P. (2012).
\newblock We are all boundary spanners now?
\newblock In {\em Collaboration in Public Policy and Practice}, pp.\  95--118. Policy Press.

\end{thebibliography}

\end{document}